  \newcommand{\append}{Appendix}
\newcommand{\ZZ}{\mathbb{Z}}
\newcommand{\NN}{\mathbb{N}}
\newcommand{\BB}{\mathbb{B}}
\newcommand{\RR}{\mathbb{R}}
\newcommand{\R}{\mathcal{R}}
\renewcommand{\S}{\mathcal{S}}
\newcommand{\ma}[1]{\mathbf{#1}} 
\renewcommand{\vec}[1]{\mathbf{#1}}
\newcommand{\fr}[1]{\mathfrak{#1}}
\newcommand{\round}[1]{\lfloor #1 \rceil}
\newcommand{\roundUp}[1]{\lceil #1 \rceil}
\newcommand{\sample}{\leftarrow_{\$}} 
\newcommand{\exec}{\leftarrow} 
\newcommand{\tensor}{\otimes}
\newcommand{\extProd}{\mathsf{extProd}}
\newcommand{\CMux}{\mathsf{CMux}}
\renewcommand{\ge}{>}
\newcommand{\coefs}{\mathsf{Coefs}}
\newcommand{\cyclo}{\Phi}
\newcommand{\sgn}{\mathsf{sgn}}
\newcommand{\NegSgn}{\mathsf{NSgn}}
\newcommand{\norm}[1]{||#1||}
\newcommand{\abs}[1]{|#1|}
\newcommand{\Var}{\mathsf{Var}}
\newcommand{\stddev}{\mathsf{stddev}}
\newcommand{\E}{\mathsf{E}}
\newcommand{\erf}{\mathsf{erf}}
\newcommand{\hamming}{\mathsf{Ham}}
\newcommand{\LWE}{\mathsf{LWE}}
\newcommand{\RLWE}{\mathsf{RLWE}}
\newcommand{\GLWE}{\mathsf{GLWE}}
\newcommand{\GSW}{\mathsf{GSW}}
\newcommand{\RGSW}{\mathsf{RGSW}}
\newcommand{\GGSW}{\mathsf{GGSW}}
\newcommand{\Enc}{\mathsf{Enc}}
\newcommand{\Phase}{\mathsf{Phase}}
\newcommand{\Error}{\mathsf{Error}}
\newcommand{\FDFB}{\mathsf{FDFB}}
\newcommand{\Decomp}{\mathsf{Decomp}}
\newcommand{\ModSwitch}{\mathsf{ModSwitch}}
\newcommand{\KeySwitchSetup}{\mathsf{KeySwitchSetup}}
\newcommand{\KeySwitch}{\mathsf{KeySwitch}}
\newcommand{\KeyExtract}{\mathsf{KeyExt}}
\newcommand{\SampleExtract}{\mathsf{SampleExt}}
\newcommand{\BRKeyGen}{\mathsf{BRKeyGen}}
\newcommand{\BlindRotate}{\mathsf{BlindRotate}}
\newcommand{\Bootstrap}{\mathsf{Bootstrap}}
\newcommand{\PubMux}{\mathsf{PubMux}}
\newcommand{\inter}{\mathsf{inter}}
\newcommand{\BuildAcc}{\mathsf{BuildAcc}}
\newcommand{\ConstPoly}{\mathsf{SetupPolynomial}}
\newcommand{\BootsFunction}{\mathsf{BootsMap}}
\newcommand{\fresh}{\mathsf{fresh}}
\newcommand{\name}{\mathsf{mane}}
\renewcommand{\deg}{N}
\newcommand{\basis}{\mathsf{L}}
\newcommand{\bound}{\mathcal{B}}
\newcommand{\curr}{\mathsf{curr}}
\newcommand{\nextSym}{\mathsf{next}}
\newcommand{\BR}{\mathsf{BR}}
\newcommand{\TFHE}{\mathsf{TFHE}}
\newcommand{\boot}{\mathsf{boot}}
\newcommand{\preRotPoly}{\mathsf{rotP}}
\newcommand{\sgnRotPoly}{\mathsf{sgnP}}
\newcommand{\acc}{\mathsf{acc}}
\newcommand{\ct}{\mathsf{ct}}
\newcommand{\msg}{\mathsf{m}}
\newcommand{\keySwitchKey}{\mathsf{ksK}}
\newcommand{\brKey}{\mathsf{brK}} 
\newcommand{\packKey}{\mathsf{pK}}
\newcommand{\errDist}{\mathcal{X}}
\newcommand{\out}{\mathsf{out}}
\newcommand{\ReLU}{\mathsf{ReLU}}
\newcommand{\Softmax}{\mathsf{Softmax}}
\newcommand{\Dense}{\mathsf{Dense}}
\newcommand{\ConvTwoD}{\mathsf{Conv2D}}
\newcommand{\AvgPoolTwoD}{\mathsf{AvgPool2D}}
\theoremstyle{definition}
\newtheorem{definition}{\noindent\textbf{Definition}}
\newtheorem{lemma}{\textbf{Lemma}}
\theoremstyle{theorem}
\newtheorem{theorem}{\textbf{Theorem}}
\begin{document}

\title{FDFB: Full Domain Functional Bootstrapping Towards Practical Fully Homomorphic Encryption
%\\
%{\footnotesize \textsuperscript{*}Note: Sub-titles are not captured in Xplore and
%should not be used}
%\thanks{Identify applicable funding agency here. If none, delete this.}
}

%TODO For now we delete all outhors
\author{\IEEEauthorblockN{Kamil Kluczniak}
\IEEEauthorblockA{\textit{Stanford University,} \\
\textit{CISPA Helmholtz Center For Information Security}\\
%City, Country \\
kamil.kluczniak@\{stanford.edu,cispa.de\}}
\and
\IEEEauthorblockN{Leonard Schild}
\IEEEauthorblockA{\textit{CISPA Helmholtz Center For Information Security} \\
%\textit{name of organization (of Aff.)}\\
%City, Country \\
leonard.schild@cispa.de}
%\and
%\IEEEauthorblockN{3\textsuperscript{rd} Given Name Surname}
%\IEEEauthorblockA{\textit{dept. name of organization (of Aff.)} \\
%\textit{name of organization (of Aff.)}\\
%City, Country \\
%email address or ORCID}
%\and
%\IEEEauthorblockN{4\textsuperscript{th} Given Name Surname}
%\IEEEauthorblockA{\textit{dept. name of organization (of Aff.)} \\
%\textit{name of organization (of Aff.)}\\
%City, Country \\
%email address or ORCID}
%\and
%\IEEEauthorblockN{5\textsuperscript{th} Given Name Surname}
%\IEEEauthorblockA{\textit{dept. name of organization (of Aff.)} \\
%\textit{name of organization (of Aff.)}\\
%City, Country \\
%email address or ORCID}
%\and
%\IEEEauthorblockN{6\textsuperscript{th} Given Name Surname}
%\IEEEauthorblockA{\textit{dept. name of organization (of Aff.)} \\
%\textit{name of organization (of Aff.)}\\
%City, Country \\
%email address or ORCID}
}

\maketitle

\begin{abstract}
 Computation on ciphertexts of all known fully homomorphic encryption (FHE) schemes induces some noise, which, if too large, will destroy the plaintext.
Therefore, the bootstrapping technique that re-encrypts a ciphertext and reduces the noise level remains the only known way of building FHE schemes for arbitrary unbounded computations.
The bootstrapping step is also the major efficiency bottleneck in current FHE schemes.
A promising direction towards improving concrete efficiency is to exploit the bootstrapping process to perform useful computation while reducing the noise at the same time.
 
We show a bootstrapping algorithm, which embeds a lookup table and evaluates arbitrary functions of the plaintext while reducing the noise. Depending on the choice of parameters, the resulting homomorphic encryption scheme may be either an exact FHE
or homomorphic encryption for approximate arithmetic.
Since we can evaluate arbitrary functions over the plaintext space, we can use the natural homomorphism of Regev encryption to compute affine functions without bootstrapping almost for free.
Consequently, our algorithms are particularly suitable for circuits with many additions and scalar multiplication gates. 
We achieve record speeds for such circuits. For example, in the exact FHE setting, we achieve a speedup of a factor of over 3000x over state-of-the-art methods. Effectively, we bring the evaluation time from weeks or days down to a few hours or minutes. Furthermore, we note that the speedup gets more significant with the size of the affine function.
  
We provide a tight error analysis and show several parameter sets for our bootstrapping. 
Finally, we implement our algorithm and provide extensive tests.  
We demonstrate our algorithms by evaluating different neural networks in several parameter and accuracy settings.

\end{abstract}

\begin{IEEEkeywords}
Fully Homomorphic Encryption, Bootstrapping, Oblivious Neural Network Inference
\end{IEEEkeywords}

\section{Introduction}\label{sec:introduction}
 
A fully homomorphic encryption scheme provides a way to perform arbitrary computation on encrypted data.
The bootstrapping technique, first introduced by  Gentry \cite{STOC:Gentry09}, remains thus far the only technique to construct secure fully homomorphic encryption schemes. 
The reason is that in current homomorphic schemes evaluating encrypted data induces noise, which will eventually ``destroy'' the plaintext if too high.  
In practice bootstrapping also remains one of the 
major efficiency bottlenecks.

The most efficient bootstrapping algorithms to date are the FHEW-style bootstrapping developed by Ducas and Micciancio \cite{EC:DucMic15}, and the TFHE-style bootstrapping by Chillotti et al. \cite{AC:CGGI16,JC:CGGI20}.
At a high level, the bootstrapping procedure in these types of bootstrapping algorithms does two things. 
First, on input, a Regev ciphertext \cite{JAMC:Regev09} the bootstrapping algorithm outputs a ciphertext whose error is independent of the error of the bootstrapped ciphertext.
Second, it computes a negacyclic function $F$ on the input plaintext. That is $F$ must satisfy $F(x + \deg) = -F(x) \mod  Q$, where $x \in \ZZ_{2 \cdot \deg}$. 
This functionality, together with the linear homomorphism of Regev encryption \cite{JAMC:Regev09}, is enough for FHEW and TFHE to compute arbitrary binary gates. For example, given $\Enc(a \cdot 2\deg/3)$ and $\Enc(b \cdot 2\deg/3)$, where $a, b \in \{0, 1\}$, we compute the NAND gate by exploiting the negacyclicity. 
We set $F$ to output $-1$ for $x \in [0, \deg)$, and
from the negacyclicity property we have $F(x) = 1$ for  $x \in [\deg, 2 \cdot \deg)$. 
To compute the NAND gate we first compute
$\Enc(x) =  \Enc(a \cdot 2\deg/3) + \Enc(b  \cdot 2\deg/3)$ and then
 $\Enc(c) = \Enc(F(x))$. 
Note that only for $a = b = 1$,  we have $x = 4\deg/3 > \deg$ and $c = 1$. For all other valuations of $a$ and $b$ we have
$x < \deg$ and
 $c = -1$.
Now, we can exploit that $c \in \{-1, 1\}$ to choose one of two arbitrary values $y, z \in \NN$ by computing
$\frac{y \cdot (1 + c)}{2} - \frac{z \cdot (c - 1)}{2}$. When computing the NAND gate we choose $y = 0$ and $z = 2N/3$.
 Similarly, we can realize other boolean gates.
The crucial observation is that the bootstrapping algorithm relies on the negacyclicity property of the function $F$ to choose an outcome.
In particular, the outcome is a binary choice between two values. 
 
Unfortunately, the requirement on the function $F$ stays in the way to efficiently compute useful functions on non-binary plaintexts with only a single bootstrapping operation. 
For instance, we cannot compute $\frac{1}{1 + e^{-x}}$, $\tanh(x)$, $\max(0, x)$, univariate polynomials or other functions that are not negacyclic.
Hence, to evaluate such functions on encrypted data, we need to represent them as circuits, encrypt every input digit, and perform a relatively expensive bootstrapping per gate of the circuit.  
By solving the negacyclicity problems, we can hope for
significant efficiency improvements by computing all functions over a larger plaintext space with a single bootstrapping operation.  Additionally, we could leverage the natural and extremely efficient linear homomorphism of Regev encryption \cite{JAMC:Regev09},
to efficiently evaluate circuits with a large number of addition and scalar multiplication gates like neural networks.

 \subsection{Contribution}
 
Our main contribution is the design, detailed error analysis, implementation\footnote{Available at: https://github.com/cispa/full-domain-functional-bootstrap}, and extensive tests of a bootstrapping algorithm that solves the negacyclicity problem.
In particular, our bootstrapping algorithm can evaluate all functions over $\ZZ_{t}$ for some integer $t$, as it internally embeds a lookup table.
We will refer to our bootstrapping algorithm as ``full domain bootstrapping'' as opposed to a version of TFHE \cite{AC:CGGI16,JC:CGGI20,RSA:CarIzaMol19}, that can evaluate all function only over half of the domain $[0,\lfloor t/2 \rfloor)$.
We stress that the difference goes far beyond the size of the plaintext space. 
As we will explain in more detail in Section~\ref{sec:techniques},
TFHE for non-negacyclic functions cannot exploit the natural linear homomorphism of Regev encryption, whereas our $\FDFB$ can.
The ability to compute affine functions ``almost for free'' gives our $\FDFB$ a tremendous advantage in evaluation time for arithmetic circuits with a large number of addition and scalar multiplication gates.
We show several parameter sets targeting different bit-precisions and security levels.
In particular, we show parameters that allow our $\FDFB$ to bootstrap and compute any function $f:\ZZ_t \mapsto \ZZ_t$ with very low error probability, where $t$ is a 6, 7 or 8-bit integer, but we note that we bootstrap larger plaintexts when choosing a larger modulus $Q$ and degree $\deg$ of the ring $\R_Q$. We show that we can take the message modulus higher for approximate arithmetic—for instance, 10 or 11 bits.
We  also show how to leverage the Chinese remainder theorem to extend the plaintext space to $\ZZ_{t}$ for $t$ being a large (e.g., 32-bit) composite integer. In short, we use the fact that the ring $\ZZ_t$ for $t = \prod_{i=1}^n t_i$ where the $t_i$'s
are  pairwise co-prime
is isomorphic with the product ring $\ZZ_{t_1} \times \dots \times \ZZ_{t_n}$.
%This way we can compute any arithmetic circuit over $\ZZ_{N}$ where a small number of parallel bootstrappings can compute a univariate polynomial of arbitrary dept. Crucially, our parameters do not grow with the multiplicative depth of the circuit, and in case of
%univariate polynomials the evaluation time is independent of the degree of the polynomial.
 
\textit{Performance.}
We implemented our bootstrapping algorithm using the Palisade library \cite{PALISADE}.
We exemplify the performance of our $\FDFB$ by evaluating neural networks.
Roughly speaking, a neural network is a circuit which gates, called neurons, parameterized by weights $w_0, w_1, \dots, w_n \in \NN$, take as input wires $x_1, \dots, x_n \in \NN$ and output  $f(w_0 + \sum_{i=1}^{n} x_i \cdot w_i)$.
The function $f$ is typically a non-linear function called the activation function.
When homomorphically evaluating an encrypted query to the neural network, we compute the linear combination $w_0 + \sum_{i=1}^{n} x_i \cdot w_i$ by leveraging  the natural linear homomorphism of Regev ciphertexts. 
Finally, we use our $\FDFB$ to compute the activation function $f$. 
Since our bootstrapping reduces the noise of the ciphertexts along with computing $f$, the time to evaluate the entire network is linear in the number of neurons.  
In particular, the evaluation time and the parameters for our FHE do not depend on the depth of the neural network.
A critical property of FHE schemes is exploiting the parallelism of the circuits we wish to evaluate.
On top of that, our functional bootstrapping algorithm is nicely parallelisable.

We compare our FHE scheme with the state-of-the-art lookup table methods \cite{AC:CGGI16,JC:CGGI20,AC:CGGI17}. 
To the best of our knowledge, the lookup methods are the fastest FHE able to perform exact homomorphic computation correctly.
To give the best quality of comparison, we rewrite the lookup table algorithms from \cite{AC:CGGI17} in PALISADE \cite{PALISADE}.
%In particular, the arithmetic in $\R_Q$ and other algorithms that our algorithm has in common with \cite{AC:CGGI17} share the same code.
Furthermore, we choose new parameters for \cite{AC:CGGI16,JC:CGGI20,AC:CGGI17} that target the same security levels  as our bootstrapping algorithm.
We report that evaluating neurons with 784 weights is over 3000 to 600 times faster than the lookup table methods, depending on the parameter setting.
In particular, our homomorphic evaluation can be accomplished in minutes, whereas computing via the lookup tables requires weeks to accomplish\footnote{The time to evaluate the neural network was estimated for the lookup table methods based on the times to evaluate a single neuron.}.

%We report that evaluating a single neuron with our implementation takes 
% from 24 to 47 seconds, depending on the parameter setting, for 32-bit precision CRT representation. 
%That is over 1300 and respectively 800 times faster than when evaluating using state-of-the-art lookup table methods \cite{AC:CGGI16,JC:CGGI20,AC:CGGI17}. 
%Evaluating a neural network with 520 neurons in total takes 3 hours to 5 hours on a single core.
%But our methods are very easily parallelizable, as we can evaluate each neuron on a separate core, and communication is requires
%only between the non-linear layers of the neural network.
%When running on 16 cores, get times between 12 to 36 minutes.

\textit{Relation to BFV/BGV \cite{FOCS:BraVai11,C:Brakerski12,EPRINT:FanVer12,ITCS:BraGenVai12} type schemes.} 
In this setting, we consider FHE with a very low probability of having erroneous outcome of the homomorphic computation. We automatically satisfy CPA+-security \cite{EPRINT:LiMic20} in contrast to homomorphic encryption for approximate arithmetic. 
Current bootstrapping algorithms \cite{EC:HalSho15,EC:CheHan18,JC:HalSho21} for BGV/BFV type systems like HeLib  \cite{HElib,C:HalSho14,EC:HalSho15,EC:Albrecht17} or \cite{SEAL} reduce the error of the ciphertext and raise the modulus, without performing any additional computation. 
%In particular, in contrast to our bootstrapping algorithm, these bootstrapping algorithms cannot compute arbitrary functions. 
To compute in BFV/BGV, we represent the program as an arithmetic circuit.
This may be sometimes problematic. For example, \textit{CryptoNets} \cite{cryptonets}, which use the SEAL library to homomorphically evaluate
a neural network, cannot easily compute popular activation functions. Therefore, \textit{CryptoNets} uses $x^2$ as an activation function. 
Furthermore, we need to discretize the trained network by multiplying the weights and biases by a parameter $\delta$ and rounding.
Since there is no homomorphic division algorithm, homomorphic computation on discretised data will increase the parameter $\delta$. 
For example\footnote{We ignore the rounding for simplicity} when computing  $\sum_{i=1}^m x_i' \cdot w_i'$,
where $x_i = \delta \cdot x_i$ and $w_i' = \delta \cdot w_i$ and the $x_i$'s and $w_i$'s are the original floating point inputs and weights, we end up with $\delta^2 \cdot \sum_{i=1}^m x_i \cdot w_i$.
Unfortunately, we cannot rescale the result since there is no natural division algorithm in BFV/BGV.
Consequently, \textit{CryptoNets} has to choose enormous parameters that grow exponentially with the depth of the neural network.
Our functional bootstrapping algorithm resolves these issues, as we can compute any activation function.
Along the way, we can compute the rescaling, i.e., division by $\delta$ and rounding, together with the activation function within a single bootstrapping operation. This allows us to compute neural networks of unbounded depth without increasing the parameters.
On the other hand, we preserve the extremely fast linear operations from BFV/BGV.

\textit{Relation to CKKS/HEAAN \cite{AC:CKKS17} type schemes.}
As hinted in the abstract, we can run our bootstrapping algorithm in an approximate mode.
In particular, the probability that the error distorts
the least significant bits of the message to be bootstrapped grows along with
the message space.
 For example, we choose the plaintext space to be 11-bit, instead of 7-bit as in the ``exact'' computation setting.
However, when using such parameters, we cannot claim that the homomorphic encryption is a ``exact'' FHE anymore.
Consequently, as all approximate schemes we cannot formally claim CPA+-security \cite{EPRINT:LiMic20}, but
we can apply the same countermeasures against the Li-Micciancio attack.
Nevertheless, this setting allows to use larger message modulus with the same efficiency, and is especially
useful when approximate computation is sufficient (e.g., neural networks).
Note that without bootstrapping algorithms like ours, current approximate homomorphic encryptions need to approximate a function via a polynomial, discretize its coefficients and evaluate a circuit as deep as the degree of that polynomial.  

For the approximate homomorphic computation setting, Lu et al. \cite{SP:LHHMQ21} proposed to use FHEW \cite{EC:DucMic15} to compute negacyclic functions only\footnote{Actually, Lu et al. \cite{SP:LHHMQ21} claims to compute any function, but we noticed a serious flaw in their analysis. We address the issue in the next subsection.}. 
When using our algorithm, we can compute all functions over the plaintext space without posing any restriction on the plaintexts. 

\subsection{Deeper Dive into the Problem and our Solution}\label{sec:techniques}

 We first give a high-level overview of FHEW and TFHE bootstrapping to showcase the problem for using it to compute arbitrary functions. 
 In this section, we keep the exposition rather informal and omit numerous crucial details to highlight the essential ideas underlying our constructions.
 
\textit{Regev Encryption.} 
 First, let us recall the learning with errors based encryption due to Regev \cite{JAMC:Regev09}.
 In the symmetric key setting the encryption algorithm chooses a vector $\vec{a} \in \ZZ_{q}^{n}$ and a secret key $\vec{s} \in \ZZ_{q}^{n}$, and computes an encryption of a message $\msg \in \ZZ_{t}$ as $[b, \vec{a}^{\top}]^{\top} \in \ZZ_q^{(n+1) \times 1}$, with $b = \vec{a}^{\top} \cdot \vec{s} +  \tilde{\msg} + e \mod q$, where $\tilde{\msg} = \frac{q}{t}  \cdot \msg$ and $e < \frac{q}{2 \cdot t}$ is a ``small'' error. 
We assume that $t|q$ for simplicity, but other settings are possible as well. 
 To decrypt, we compute 
 $\bigg\lfloor\frac{t}{q} \cdot \big([b, \vec{a}^{\top}]^{\top} \cdot [1, -\vec{s}^{\top}]\big)\bigg\rceil 
 = \big\lfloor\frac{t}{q} \cdot(\frac{q}{t} \cdot  \msg + e)\big\rceil = \msg$.
The ring version of the encryption scheme is constructed over the cyclotomic ring $\R_Q$ defined by $\R = \ZZ[X]/(X^{\deg} + 1)$ and $\R_Q = \R/Q\R$.
Similarly as for the integer version we choose $\fr{a} \in \R_Q$ and a secret key $\fr{s} \in \R_Q$, and encrypt a message $\msg \in \R_{t}$ as $[\fr{b}, \fr{a}]$ with $\fr{b} = \fr{a} \cdot \fr{s} + \frac{q}{t}  \cdot \msg + \fr{e}$, where $e \in \R_Q$ is a ``small'' error.
 
\textit{FHEW/TFHE Bootstrapping.}
Now let us proceed to the ideas underlying FHEW \cite{EC:DucMic15} and TFHE   \cite{AC:CGGI16,JC:CGGI20} to bootstrap the LWE ciphertext described above. 
Both algorithms leverage the structure of the ring $\R_Q$. Recall that $\R_Q$ includes polynomials from $\ZZ[X]/(X^{\deg} + 1)$
that have coefficients in $\ZZ_Q$. Importantly, in $\R_Q$ the roots of unity form a multiplicative cyclic group $\mathbb{G} = [1, X, \dots, X^{\deg-1}, -1, -X, \dots, -X^{\deg-1}]$ of order $2 \cdot \deg$\footnote{This is trivial to verify by checking that $X^{\deg} \mod (X^{\deg}+1) = -1$.}.
Assume that the LWE modulus is $q = 2 \cdot \deg$. 
The concept is to setup a homomorphic accumulator $\acc$ to be an RLWE encryption which initially encrypts the message $\preRotPoly \cdot X^{b} \in \R_Q$, where 
$\preRotPoly = \frac{Q}{t} \cdot \big(1 - \sum_{i=2}^{\deg} X^{i-1}\big) \in R_Q$. 
Then we multiply $\acc$ with encryptions of $X^{- \vec{a}[i] \cdot \vec{s}[i]} \in \R_Q$.
 So that after $n$ iterations the message of the accumulator is set to 
\begin{align*}
 \preRotPoly \cdot X^{b - \sum_{i=1}^n \vec{a}[i] \cdot \vec{s}[i]} &=    \preRotPoly \cdot X^{k \cdot q + \tilde{\msg} + e} \\ 
&=   \preRotPoly \cdot X^{\tilde{\msg} + e \mod 2\cdot \deg} \in \R_Q.
\end{align*}
Note that we sent the coefficients of $\preRotPoly$ such that the constant coefficient of the resulting polynomial
is $\frac{Q}{t}$ if  $\tilde{\msg} + e \in [0, \deg)$, and $- \frac{Q}{t}$ if $\tilde{\msg} + e \in [\deg, 2\deg)$.  
 We can extract an LWE encryption of the constant term from the rotated accumulator and obtain an LWE ciphertext of the sign\footnote{Clearly, if the most significant bit of $\msg$ is set ($\msg$ is a negative number) and $e \le q/2t$, then $\tilde{\msg} + e = q/t \cdot \msg + e \in  [\deg, 2\deg)$. Otherwise, $\tilde{\msg}  + e \in [0, \deg)$.} of the message without the error term $e$.
 
We can extend the method to compute other functions simply by setting the coefficient of $\preRotPoly$ to have the desired value
after multiplying it by $X^{\tilde{\msg} + e}$ in $\R_Q$. 
To be precise we want to compute $f:\ZZ_t \mapsto \ZZ_t$, and for our reasoning we use $F:2 \cdot \deg \mapsto Q$ such that
$F(x) = \frac{Q}{t} \cdot f(\lfloor \frac{t}{2\cdot \deg} \cdot x \rceil)$.
Since we work over the multiplicative group $\mathbb{G}$ of order $2\cdot \deg$, but $\preRotPoly$ can only have $\deg$
coefficients, we can only compute functions $F$ such that $F(x + \deg \mod 2 \cdot \deg) = -F(x) \mod Q$. 
In other words, multiplying any $\preRotPoly \in R_Q$ by $X^{\deg}$ rotates $\preRotPoly$ by a full cycle negating all its coefficients.
In particular, it flips the sign of the constant coefficient. 
Note that this behavior also affects the function $f$ that we want to compute.
 
As we mentioned earlier, Lu et al. \cite{SP:LHHMQ21} claim (see \cite[Fig.~2,Thm~1]{SP:LHHMQ21}) to compute any function with FHEW \cite{EC:DucMic15}.
Unfortunately, they mistakenly assume the roots of unity in $\ZZ/(X^{\deg} +1)$ form the group $[1, X, \dots, X^{\deg}]$ of order $\deg$,
instead of $\mathbb{G}$ of order $2 \deg$. This issue invalidates \cite[Thm~1]{SP:LHHMQ21} and the correctness of their main contribution, as FHEW is in fact only capable to compute negacyclic functions.

% 
%  $\preRotPoly \cdot X^{b - \sum_{i=1}^n \vec{a}[i] \cdot \vec{s}[i]} = \preRotPoly \cdot X^{b - \vec{a}^{\top} \cdot \vec{s}} \in \R_Q$. 
% As we can see, we have computed part of the decryption function in the exponent and multiplied it with a rotation polynomial chosen such that the constant term is $1$ if $\msg + e \in [0, \deg)$ and
% $-1$ if $\msg + e \in [\deg, 2\deg)$.  
% We can extract an LWE encryption of the constant term from the rotated accumulator and obtain an LWE ciphertext of a function of the message without the error term $e$.
% To see why $\preRotPoly$ is chosen this way, it is important to realize
% that the roots of unity in $\R_Q$ form a cyclic group $\mathbb{G} = [1, X, \dots, X^{\deg-1}, -1, -X, \dots, -X^{\deg-1}]$.
% Note that $\preRotPoly \cdot X^{b - \vec{a}^{\top} \cdot \vec{s}} = \preRotPoly \cdot X^{k \cdot q + \msg + e} = \preRotPoly \cdot X^{k \cdot 2 \cdot \deg + \msg + e} = \preRotPoly \cdot X^{\msg + e \mod 2\cdot \deg} \in \R_Q$
% for some $k \in \ZZ$.
% Recall that we reduce the polynomial by the cyclotomic polynomial $X^{\deg} + 1$, thus if $\msg + e \in [0,\deg)$, then the constant term is $1$, and $-1$ otherwise. Note also that we can compute any function $f$ by replacing the coefficients of $\preRotPoly$ by the outputs of $f$, given that $f(x + \deg) = -f(x)$.  
%However, due to the negacyclic property, shifting the message by $\deg$ always flips the sign of the result of the function.
 
 \textit{Problems when Computing Arbitrary Functions.}
To compute arbitrary functions $f$ a straightforward solution is assume  
 that the phase $\tilde{\msg} + e$ is always in $[0, \deg) = [0, q/2)$ which is the case when $\msg \in [0,t/2)$.
 In particular, $\msg$ must have its sign fixed.
 The immediate disadvantage is that we cannot safely compute affine functions on LWE ciphertexts without invoking the bootstrapping algorithm. 
As an example let us consider $\tilde{\msg} = \frac{q}{t}(\msg_1 - \msg_2)$ and let us ignore for simplicity the error terms.
Clearly when $\msg_2 > \msg_1$ we have $\msg_1 - \msg_2 \mod t \in [t/2,t)$ and $\tilde{\msg} \in [q/2,q)$.
In this case, the output of the bootstrapping will be $-F(\tilde{\msg} - q/2)$ assuming we set the rotation polynomial to correctly compute $F$ on the interval $[0, q/2)$.
 To summarize, to use the natural functional bootstrapping correctly, we need to use it
 as  Carpov et al. \cite{RSA:CarIzaMol19} to compute a lookup table on an array of binary plaintexts that are composed to an integer within the interval $[0,t/2)$. 
Unfortunately, we cannot correctly compute the extremely efficient affine functions on LWE ciphertexts.

\textit{\textbf{Our Solution:} Full Domain Functional Bootstrapping.}
Our main contribution is the design of a bootstrapping algorithm that can compute all functions on an encrypted plaintext
instead of only negacyclic functions.
Our first observation is that, as shown in Bourse et al. \cite{C:BMMP18}, we can compute the sign of the message.
In particular, given that the message $\msg$ is encoded as $\tilde{\msg} = \frac{2 \cdot \deg}{t} \cdot \msg$, where $t$ is the plaintext modulus, we have that
if $\tilde{\msg} \in [0, \deg)$, then $\sgn(\msg) = 1$, and $\tilde{\msg} \in [\deg, 2 \cdot \deg)$, then
$\sgn(\msg) = -1$. Note that this function does not return exactly the sign function since it returns $1$ for $\msg=0$, but it is enough for our purpose. 
The idea is to make two bootstrapping operations: the first computes the sign; the second computes the function $F(x)$.
Then we multiply the sign and the result of the function.
Note that if $x \in [\deg, 2 \cdot \deg)$, then $-F(x) \cdot \sgn(x) = F(x)$.

There are two problems with this solution. 
First of all, we cannot simply multiply two LWE ciphertexts in the leveled mode. 
While there are methods to do so \cite{C:Brakerski12,FOCS:BraVai11,ITCS:BraGenVai12,EPRINT:FanVer12}, in these methods, the noise growth is quadratic and dependent on the size of the plaintexts. 
Moreover, one of those plaintexts may be large having a significant impact on the parameters. 
The second problem is that this way, the function $F$ must satisfy $F(x) = F(x + \deg)$. 
Thus our primary goal is not satisfied.

Let us first describe how to deal with the second problem.
In short, instead of evaluating the sign, we will compute a bit $b$ that indicates the interval in which that plaintext lies. 
This bit will help us choose the correct output of the evaluated function. That is, we split the function into $F_0$ that computes $F$ correctly for plaintexts in $[0, \deg)$, and $F_1$ that computes correctly for the remaining plaintexts.
To deal with the first problem, we resign of multiplying two ciphertexts at all.
Note that a hypothetical multiplication algorithm could now bootstrap the bit $b$ using a bootstrapping that returns a GSW tuple, and in a leveled mode we could then compute $\text{GSW}(1-b) \cdot \LWE(F_0(x)) + \text{GSW}(b) \cdot \LWE(F_1(x))$.
From the properties of GSW cryptosystem \cite{C:GenSahWat13}, the resulting ciphertexts' noise is at most an additive function of the noises of both ciphertexts.
Furthermore, there are implementations for bootstrapping that return GSW ciphertexts \cite{EC:DucMic15,AC:CGGI17}.
We, however, slightly depart from the idea and give a more efficient solution with better error management. 
Instead of multiplying two ciphertexts, we leverage the fact that the functions $F_0$ and $F_1$ are publicly known, which means that the rotation polynomials for these functions are known as well.
Now, based on LWE ciphertexts of a certain form, we will build a new accumulator that will encode one of the given rotation polynomials.
Notably, our accumulator builder uses a new version of the homomorphic CMux gate \cite{AC:CGGI16,JC:CGGI20} that may be of separate interest.
 Finally, we will bootstrap using the given accumulator and we obtain $F(x) = F_{b}(x)$ as desired.
We note that this method is roughly twice as fast as creating a GSW and multiplying it with LWE ciphertexts.
 
Now we can compute any function $F:2\cdot \deg \mapsto Q$, and what follows  $f:\ZZ_t \mapsto \ZZ_t$\footnote{We note that we can easily generalize the method to also compute functions $f:\ZZ_t \mapsto \ZZ_p$ where $t \neq p$. It remains to use a different message scaling in the rotation polynomials.}. 
%We only have to build suitable rotation polynomials embedding the lookup table for $F_0$ and $F_1$.
Furthermore, we can compute affine functions over $\ZZ_t$ without bootstrapping which costs several microseconds per operation.

\subsection{Related Work}\label{sec:related-work}

Since Gentry's introduction of the bootstrapping technique  \cite{STOC:Gentry09}, the design of fully homomorphic encryption schemes has received extensive study, e.g., \cite{FOCS:BraVai11,C:Brakerski12,EPRINT:FanVer12,ITCS:BraGenVai12,C:AlpPei13,C:GenSahWat13,EC:HalSho15,EC:CheHan18,JC:HalSho21}.
Brakerski and Vaikuntanathan \cite{ITCS:BraVai14} achieved a breakthrough and gave a bootstrapping method that, exploiting the GSW cryptosystem \cite{C:GenSahWat13} by Gentry, Sahai, and Waters, incurs only a polynomial error growth.
These techniques require representing the decryption circuit as a branching program.  
Alperin-Sheriff and Peikert \cite{C:AlpPei14} showed a bootstrapping algorithm, where they represent the decryption circuit as an arithmetic circuit, and we do not  rely on Barrington's theorem \cite{BARRINGTON}. 
Their method exploits the GSW cryptosystem \cite{C:GenSahWat13} to perform matrix operations on LWE ciphertexts.
%The main observation is that permutation matrices form a symmetric group over which they evaluate the decryption circuit. 
Hiromasa, Abe, and Okamoto \cite{PKC:HirAbeOka15} improved the method by designing a version of GSW that natively encrypts matrices. Recently, Genise et al. \cite{AC:GGHLM19} showed an encryption scheme that further improves the efficiency of matrix operations. Unfortunately, Lee and Wallet \cite{EPRINT:LeeWal20} showed that the scheme is broken for practical parameters.

 Building on the ideas of Alperin-Sheriff and Peikert \cite{C:AlpPei14}, Ducas and Miccancio \cite{EC:DucMic15} designed a practical bootstrapping algorithm that we call FHEW. FHEW exploits the ring structure of the RLWE ciphertexts and RGSW noise management for updating a so-called homomorphic accumulator. 
Chillotti et al. \cite{AC:CGGI16,JC:CGGI20} gave a variant, called TFHE, with numerous optimizations to the FHEW bootstrapping algorithm.
% As recently shown by Micciancio and Polyakov \cite{EPRINT:MicPol20}, the difference in the efficiency between both methods relies mostly on the distribution of the LWE secret key and are in favor of TFHE for binary or ternary secrets. For larger, e.g., Gaussian distributed secrets, FHEW seems to give better results.

The FHEW and TFHE bootstrapping algorithms constitute state-of-the-art methods to perform practical bootstrapping.
 Further improvements mostly relied on incorporating packing techniques \cite{AC:CGGI17,ICALP:MicSor18},
and improved lookup tables evaluation \cite{AC:CGGI17,RSA:CarIzaMol19}.
% 
%  to perform batched operations for boolean circuits having a specific structure.
%These improvements include the packing techniques from Chillotti et al. \cite{AC:CGGI17,ICALP:MicSor18}, where the authors show how packed ciphertext can be used to compute finite automata in a leveled homomorphic mode and only bootstrap after a certain amount of operations.  
%Furthermore, Chillotti et al. \cite{AC:CGGI17}, show how to evaluate binary decision trees on encrypted data. 
%However, the method requires evaluating a binary decision tree and a more expensive LWE to GSW bootstrapping per output bit of the function.
%Micciancio and Sorrell \cite{ICALP:MicSor18}, gave an extension to the FHEW bootstrapping procedure to compute RLWE decryption instead of LWE decryption. Therefore, their bootstrapping algorithm can compute a function on a batch of plaintext slots with only a single but more costly bootstrapping operation. 
An interesting direction is to exploit the construction of FHEW/TFHE and embed a lookup table directly into the bootstrapping algorithm.
%In short, we call such a way of bootstrapping functional bootstrapping.
To this end, Bonnoron, Ducas, and Fillinger \cite{AFRICACRYPT:BonDucFil18} showed a bootstrapping algorithm capable of computing over larger plaintexts, which, however, returns a single bit of output per bootstrapping operation and uses RLWE instantiated over cyclic rings instead of negacyclic. 
Carpov,  Izabach\'ene, and Mollimard \cite{RSA:CarIzaMol19} show that we can compute multiple
functions on the same input plaintext at the cost of a single TFHE bootstrapping. 
Notably, their method enjoys amortized time only for functions with a small domain like binary.
 Guimar\~aes, Borin, and Aranha \cite{CHES:GuiBorAra21} show several applications of \cite{RSA:CarIzaMol19}.
 As mentioned earlier, 
Lu et al. \cite{SP:LHHMQ21} use FHEW to compute a negacyclic functions on CKKS \cite{AC:CKKS17} ciphertexts.

\textit{Applications to Neural Network Inference.}
Bourse et al. \cite{C:BMMP18} applied TFHE  \cite{AC:CGGI16,JC:CGGI20} to compute the negacyclic $\sgn(x)$ function as activation for a neural network. 
% is very useful when evaluating
%certain specific circuits.
%In particular, they evaluate neural networks where the activation functions are $\sgn(x)$.
%Notably, $\sgn(x)$ is negacyclic, hence they propose to evaluate activation gates with only a single TFHE bootstrapping invocation.
%Furthermore, we can compute affine function without bootstrapping which is the main source of the performance gain in comparison with previous methods. On the downside $\sgn(x)$ is not a popular choice for activation functions, and the functional bootstrap in TFHE
%does not admit to compute general functions.
 Izabach\'ene, Sirdey,  and  Zuber \cite{CANS:IzaSirZub19} evaluate another neural network (a Hopfield network) using the same method as Bourse et al. \cite{C:BMMP18}. %That is, they use the sign function as an activation function.
 Finally, we note that Bourse et al. \cite{C:BMMP18} left as an open problem to compute general functions with a functional bootstrapping.
It is worth noting that oblivious neural network evaluation became a critical use case for fully homomorphic encryption and got much attention and publicity \cite{cryptonets,EPRINT:CWMMP17,CCS:LJLA17,CCS:JKLS18,USENIX:JuvVaiCha18,BMC:KSKLC18,BMC:KSKK20,EPRINT:BGGJ18,RSA:ABSV19,CCS:CDKS19,USENIX:RSCLLK19,PNAS:BlaGusPol20}. 
These works focus on applying existing homomorphic encryption libraries like SEAL, HeLib, and HEAAN, which we discussed earlier.
 %algorithm. We solve this problem in this paper.

%TODO We don't revisit this stuff in this version of the paper anymore.
%Finally, we revisit the correctness and security of previous works on functional bootstrapping 
%%TODO  Commented out Guimar\~aes
%%\cite{C:BMMP18,EPRINT:ChiJoyPai21,RSA:CarIzaMol19,CHES:GuiBorAra21}.
%\cite{C:BMMP18,EPRINT:ChiJoyPai21,RSA:CarIzaMol19}.
%Due to space constraints, we defer detailed treatment, and correctness estimates \append~\ref{sec:parameters-correctness-previous-work}. 
%Roughly speaking, we show that the works \cite{C:BMMP18,EPRINT:ChiJoyPai21} have a high correctness error due to the use of non-sparse secret keys. 
%In the same \append, we recall the parameters and recalculate the correctness error of the method from Carpov, Izabach\'ene, and Mollimard \cite{RSA:CarIzaMol19}.
%%TODO Commented out Guimar\~aes
%%We recall the work of Guimar\~aes, Borin, and Aranha \cite{CHES:GuiBorAra21}, and remind that certain claims made in \cite{CHES:GuiBorAra21} do not hold.
%%Some of these claims the authors admit not to hold in the same paper. 

\textit{Concurrent Work.}
 Concurrently and independently, Chillotti et al. \cite{EPRINT:CLOT21} shows a concept for a full domain functional bootstrapping algorithm that at a very high level is similar to ours. We stress that the methods to achieve FDFB are different. 
 Chillotti et al. \cite{EPRINT:CLOT21} propose to use the BGV cryptosystem instead of GSW as we suggested for a hypothetical solution, to multiply and choose one of two bootstrapped values. 
  In contrast to our work, the authors do not show any correctness analysis, parameters, or implementation.
  Furthermore, we already implemented and tested neural network inference, while  Chillotti et al. \cite{EPRINT:CLOT21} admit having problems with implementing their algorithm on top of the TFHE library at the time of writing this article.
 Nevertheless, we note that due to the use of BGV style multiplication, the error after bootstrapping grows quadratically and is dependent on the size of the message. In contrast, our algorithm has a smaller error that is independent of the message. Recall that we want to bootstrap large messages. Hence it is not clear what impact this will have on concrete parameters and efficiency. 

\section{Preliminaries}\label{sec:preliminaries}

In this section, we recall the algorithms from the literature that we use for our bootstrapping algorithm.
Here we give only the interfaces to the algorithm and state their correctness lemmas.  
However, for reference, we give the specifications and the correctness analysis of all the algorithms in a unified notation in \append~\ref{sec:correctness-for-preliminaries}.

%Here we give only the correctness lemmas for all algorithms, but recall all proof of the lemmas in supplementary material \ref{sec:correctness-for-preliminaries} using our unified notation.

\textbf{Notation.}
We denote as $\BB$ the set $\{0, 1\}$, $\RR$ as the set of reals, $\ZZ$ the set of integers, and $\NN$ as the set of natural numbers. 
Furthermore we denote as $\R_{\deg, \infty}$ the ring of polynomials $\ZZ[X]/(X^{\deg} + 1)$ and as $\R_{\deg, q} = \R_{\deg, \infty}/q\ZZ$ the ring of polynomials with coefficients in $\ZZ_q$.  
We denote vectors with a bold lowercase letter, e.g., $\vec{v}$,
and matrices with uppercase letters $\ma{V}$. 
 We denote a $n$ dimensional column vector as $[f(.,i)]_{i=1}^{n}$, where $f(.,i)$ defines the $i$-th coordinate. For brevity, we will also denote as $[n]$ the vector $[i]_{i=1}^{n}$, and more generally $[n,m]_{i=n}^{m}$ the vector $[n, \dots, m]^{\top}$.
Finally, let $a = \sum_{i=1}^{\deg} a_i \cdot X^{i-1}$, be a polynomial with coefficients over $\RR$, then we denote $\coefs(a) = [a_i]_{i=1}^{\deg} \in \RR^{\deg \times 1}$ the vector of coefficients of the polynomial $a$. 
For a random variable $a \in \ZZ$ we denote as $\Var(a)$ the variance of $a$, as $\stddev(x)$ its standard deviation and as $\E(x)$ its expectation.
For $\fr{a} \in \R_{\deg, \infty}$, we define $\Var(\fr{a}) = [\Var(\coefs(\fr{a}))]_{i=1}^{\deg}$, $\stddev(\fr{a}) = [\stddev(\coefs(\fr{a}))]_{i=1}^{\deg}$ and $\E(\fr{a}) = [\E(\coefs(\fr{a}))]_{i=1}^{\deg}$.   
We denote $\norm{\vec{a}}_p = (\sum_{i=1}^p \abs{a_i}^p)^{1/p}$ the $p$-norm of a vector $\vec{a} \in \RR^n$,
where $\abs{.}$ denotes the absolute value.
For polynomials  
we compute the $p$-norm by taking its coefficient vector. 
By $\hamming(\vec{a})$ we denote the hamming weight of vector $\vec{a}$, i.e., the number of of non-zero coordinates of $\vec{a}$. 
We also define a special symbol $\Delta_{q, t} = \round{\frac{q}{t}}$ and a rounding function for an element $\Delta_{q, t} \cdot a \in \ZZ_q$,
 as $\round{a}_{t}^q = \round{\frac{t}{q} \cdot \Delta_{q, t} \cdot a}$. For ring elements, we take the rounding of the coefficient vector.

Throughout the paper we denote as  $q \in \NN$ and $Q \in \NN$ two moduli.
The parameter $n \in \NN$ always denotes the dimension of a LWE sample, that we define below.
For rings, we always use $\deg$ to denote the degree of $\R_{\deg, q}$ or $\R_{\deg, Q}$.
We define $\ell = \roundUp{\log_{\basis} q}$ for some decomposition basis $\basis \in \NN$.
We denote bounds on variances of random variables by $\bound \in \NN$.
Often we mark different decomposition bases $\basis_{\name}$ and the corresponding $\ell_{\name}$, or bounds $\bound_{\name}$ with
some subscript $\name$. 
Finally, in order not to repeat ourselves and not to overwhelm the reader, we limit ourselves
to define certain variables in the definitions of the algorithms, and we refrain to repeat them in the correctness lemmas.

%
%\begin{itemize}
%\item $q \in \NN$ is a modulus. Usually, we define $\ell = \roundUp{\log_{\basis} q}$ for some decomposition basis $\basis \in \NN$.
%
%\item $n\in \NN$ LWE dimension
%\item $\deg$ denotes the degree of $\R_{\deg, q}$ and is always a power of two.
%
%
%\item We will denote bounds by $\bound$ with some lower index.
%
%\item Finally, in order not to repeat ourselves and not to overwhelm the reader, we limit ourselves
%to define certain variables in the definitions of the algorithms, and we refrain to repeat them in the correctness lemmas.
%\end{itemize}

\textbf{Learning With Errors.} 
We recall the learning with errors assumption by Regev \cite{STOC:Regev05}.
Our description is a generalized version due to Brakerski, Gentry, and Vaikuntanathan \cite{ITCS:BraGenVai12}.

\begin{definition}[Generalized Learning With Errors]
%For a security parameter $\secpar$, let $n = n(\secpar)$ be an integer dimension,
% $\deg = \deg(\secpar)$ be a power of $2$,  $q = q(\secpar) \geq 2$.
Let $\errDist_{\bound} = \errDist(\secpar)$ be a distribution over $\R_{\deg, q}$
such that  $\norm{\Var(\fr{e})}_{\infty} \leq \bound$
and $\E(\fr{e}) = \vec{0}$
 for all $\fr{e} \in \errDist_{\bound}$. 
For $\vec{a} \sample \R_{\deg, q}^{n \times 1}$, $\fr{e} \sample \errDist_{\bound}$
and $\vec{s} \in \R_{\deg, q}^{n \times 1}$, 
we define a Generalized Learning With Errors (GLWE) sample of a message $\msg \in \R_{\deg, q}$ with respect to $\vec{s}$, as
\begin{align*}
\GLWE_{\bound, n, \deg, q}(\vec{s}, \msg) = 
\begin{bmatrix}
\fr{b}  = \vec{a}^{\top} \cdot \vec{s} + \fr{e}\\
\vec{a}^{\top}
\end{bmatrix} + 
\begin{bmatrix}
\msg \\
\vec{0}
\end{bmatrix}  \in \R_{\deg, q}^{(n+1) \times 1}.
\end{align*}
The $\GLWE_{\bound, n, \deg, q}$ problem is to distinguish the following two distributions:
In the first distribution, one samples $(\fr{b}_i, \vec{a}_i^\top)^\top$ uniformly from $\R_{\deg, q}^{(n + 1) \times 1}$.
In the second distribution one first draws 
$\vec{s} \sample \R_{\deg, q}^{n \times 1}$ 
and then sample $\GLWE_{\bound, n, \deg, q}(\vec{s}, \msg) = [\fr{b}_i, \vec{a}_i^{\top}]^{\top} \in \R_{\deg, q}^{(n+1) \times 1}$. 
The $\GLWE_{\bound, n, \deg, q}$ assumption is that the $\GLWE_{\bound, n, \deg, q}$ problem is hard.
\end{definition}

In the definition, we choose the secret key uniformly from $\R_{\deg, q}$.
We note that often secret keys may be chosen from different distributions.
When describing our scheme, we will emphasize this
when necessary.

%\begin{definition}[LWE and RLWE samples]
We denote an Learning With Errors (LWE) sample as $\LWE_{\bound, n, q}(\vec{s}, \msg) = \GLWE_{\bound, n, 1, q}$, which is a special case of a GLWE sample with $N=1$.
Similarly we will denote an Ring-Learning with Errors (RLWE) 
sample as  $\RLWE_{\bound, \deg, q}(\vec{s}, \msg) = \GLWE_{\bound, 1, \deg, q}$ which is the special case of an GLWE sample with $n=1$.
%\end{definition}
Sometimes we will use the notation $\vec{c} \in \GLWE_{\bound, n, \deg, q}(\vec{s}, \msg)$ (resp. $\LWE$ and $\RLWE$) to indicate
that a vector $\vec{c}$ is a GLWE (resp. LWE and RLWE) sample of the corresponding parameters and inputs.
Sometimes we will leave the inputs unspecified and substitute them with "." when it is not necessary to refer to them within the scope of a function.

%TODO Add some discusion and references to hardness of these assumptions?
%TODO This can be done later but has to be done

\begin{definition}[Phase and Error for GLWE samples]
We define the phase of a sample $\vec{c} = \GLWE_{\bound, n, \deg, q}(\vec{s}, \msg)$,
as $\Phase(\vec{c}) = [1,  - \vec{s}^{\top}]  \cdot \vec{c}$.  
We define  the error of a GLWE sample $\Error(\vec{c}) =  \Phase(\vec{c}) - \msg$. 
\end{definition}

\begin{lemma}[Linear Homomorphism of GLWE samples]\label{lemma:linear-homo-glwe}
Let $\vec{c} = \GLWE_{\bound_{\vec{c}}, n, \deg, q}(\vec{s}, \msg_{\vec{c}})$
and $\vec{d} = \GLWE_{\bound_{\vec{d}}, n, \deg, q}(\vec{s}, \msg_{\vec{d}})$. 
If $\vec{c}_{\out} \exec  \vec{c} + \vec{d}$, then
$\vec{c}_{\out} \in \GLWE_{\bound, n, \deg, q}(\vec{s}, \msg)$,
where $\msg = \msg_{\vec{c}} + \msg_{\vec{d}}$,   and 
$\bound \leq \bound_{\vec{c}} + \bound_{\vec{d}}$.  
Furthermore, let $\fr{d} \in \R_{\deg, \basis}$ where $\basis \in \NN$.
If $\vec{c}_{\out} \exec \vec{c} \cdot \fr{d}$, then $\vec{c}_{\out} \in \GLWE_{\bound, n, \deg, q}(\vec{s}, \msg_{\vec{c}} \cdot \fr{d})$,
where  
\begin{itemize}
\item $\bound \leq \bound_{\fr{d}}^{2} \cdot \bound_{\vec{c}}$, when $\fr{d}$ is a constant monomial with $\norm{\fr{d}}_{\infty} \leq \bound_{\fr{d}}$,

\item $\bound \leq \frac{1}{3} \cdot \basis^2 \cdot \bound_{\vec{c}}$, when $\fr{d}$ is a monomial with its non-zero coefficient distributed uniformly in $[0, \basis-1]$,

\item $\bound \leq \deg \cdot \bound_{\fr{d}}^{2} \cdot \bound_{\vec{c}}$, when $\fr{d}$ is a constant polynomial with $\norm{\fr{d}}_{\infty} \leq \bound_{\fr{d}}$, and 

\item $\bound \leq \frac{1}{3} \cdot \deg  \cdot \basis^2 \cdot \bound_{\vec{c}}$, when $\fr{d}$ is a polynomial with its  coefficients distributed uniformly in $[0, \basis-1]$.
\end{itemize} 
\end{lemma}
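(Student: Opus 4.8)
The plan is to reduce the entire statement to a computation about the error polynomial, using the identity $\Phase(\vec{c}_{\out}) = [1, -\vec{s}^{\top}]\cdot\vec{c}_{\out}$ together with the definition $\Error(\vec{c}_{\out}) = \Phase(\vec{c}_{\out}) - \msg$. Since the map $\vec{c}\mapsto[1,-\vec{s}^{\top}]\cdot\vec{c}$ is $\R_{\deg,q}$-linear, both the addition and the multiplication-by-$\fr{d}$ operations commute with it, so in each case it suffices to read off the resulting message and then bound the variance (and check the zero mean) of the error.

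For the additive part I would write $\Phase(\vec{c}+\vec{d}) = \Phase(\vec{c}) + \Phase(\vec{d}) = (\msg_{\vec{c}} + \fr{e}_{\vec{c}}) + (\msg_{\vec{d}} + \fr{e}_{\vec{d}})$, which exhibits $\vec{c}_{\out}$ as a $\GLWE$ sample of $\msg_{\vec{c}} + \msg_{\vec{d}}$ with error $\fr{e}_{\vec{c}} + \fr{e}_{\vec{d}}$. Treating $\fr{e}_{\vec{c}}$ and $\fr{e}_{\vec{d}}$ as independent, the mean stays $\vec{0}$ and the variances add coefficient-wise, giving $\norm{\Var(\fr{e}_{\vec{c}}+\fr{e}_{\vec{d}})}_{\infty}\le\bound_{\vec{c}}+\bound_{\vec{d}}$.

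For the multiplicative part the same linearity gives $\Phase(\vec{c}\cdot\fr{d}) = \Phase(\vec{c})\cdot\fr{d} = \msg_{\vec{c}}\cdot\fr{d} + \fr{e}_{\vec{c}}\cdot\fr{d}$, so the error is the ring product $\fr{e}_{\vec{c}}\cdot\fr{d}$, and all four bounds follow from the negacyclic convolution formula in $\R_{\deg,\infty}$: the $i$-th coefficient of $\fr{e}\cdot\fr{d}$ is a sum $\sum \pm d_j e_k$ in which, running over the residue class determined by the relation $X^{\deg}=-1$, each $e_k$ with $k\in\{0,\dots,\deg-1\}$ contributes exactly one $\pm$-signed term $d_j e_k$. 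When $\fr{d}=c X^{h}$ is a constant monomial the sum is the single term $\pm c e_k$, so each error coefficient has variance $c^2\Var(e_k)\le\bound_{\fr{d}}^2\bound_{\vec{c}}$. When $\fr{d}$ is a monomial whose non-zero coefficient $c$ is drawn uniformly from $[0,\basis-1]$ independently of $\fr{e}$, I would use $\Var(c\, e_k) = \E(c^2)\Var(e_k)$ (valid since $\E(e_k)=0$) together with $\E(c^2) = \Var(c)+\E(c)^2 \le \basis^2/12 + \basis^2/4 = \basis^2/3$, which yields the $\tfrac13\basis^2\bound_{\vec{c}}$ bound. When $\fr{d}$ is a constant polynomial with $\norm{\fr{d}}_{\infty}\le\bound_{\fr{d}}$, each error coefficient is a sum of $\deg$ uncorrelated terms $\pm d_{j(k)} e_k$, so its variance is $\sum_k d_{j(k)}^2\Var(e_k)\le\deg\cdot\bound_{\fr{d}}^2\bound_{\vec{c}}$; and when the coefficients of $\fr{d}$ are uniform in $[0,\basis-1]$ and independent of $\fr{e}$, combining the two previous estimates gives $\sum_k\E(d_{j(k)}^2)\Var(e_k)\le\deg\cdot\tfrac{\basis^2}{3}\cdot\bound_{\vec{c}}$. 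In every case the error has zero mean, either directly from $\E(\fr{e})=\vec{0}$ or, in the randomized cases, from $\E(d_j e_k)=\E(d_j)\E(e_k)=0$.

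The main obstacle I anticipate is bookkeeping rather than conceptual: pinning down the negacyclic sign pattern and the exact multiplicity of each product $d_j e_k$ so that the clean factor $\deg$ (as opposed to $2\deg$) is legitimate, and being explicit about the independence/uncorrelatedness assumptions on the coefficients of $\fr{e}$ (and, in the two randomized cases, of $\fr{d}$) that make the variances add. A secondary point to state carefully is the variance of the uniform distribution on $[0,\basis-1]$, since the constant $\tfrac13$ in the statement comes precisely from the bound $\E(c^2)\le\basis^2/3$.
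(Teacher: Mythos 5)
Your proposal is correct and follows essentially the same route as the paper's own proof: reduce to tracking the error polynomial via linearity of the phase, then bound the resulting error variance case by case using independence/uncorrelatedness assumptions on the coefficients. The only cosmetic difference is in the random-monomial case, where you compute $\Var(c\,e_k)=\E(c^2)\Var(e_k)$ and then bound $\E(c^2)=\Var(c)+\E(c)^2\le\basis^2/3$, whereas the paper applies the variance-of-a-product-of-independent-variables identity $\Var(\fr{d}X)=\Var(\fr{d})\E(X)^2+\Var(X)\E(\fr{d})^2+\Var(\fr{d})\Var(X)$ and then specializes to $\E(X)=0$; these are the same computation and give the same constant $\tfrac13$. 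Your extra care about the negacyclic convolution (each $e_k$ appearing exactly once per output coefficient, so the sum has $\deg$ terms rather than $2\deg$) is a point the paper states but does not spell out, and it is accurate.
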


%%%%%%%%%%%%%%%%%%%%%%%%%%%%%%%%%%%%%%%%%%%%%%%%%%%%%%%%%%%%%%%%%%%%%%%%%%%%%%%%%%%%%%%%%%%%%%%%%%%%%%%%%%%
%%%%%%%%%%%%%%%%%%%%%%%%%%% GGSW %%%%%%%%%%%%%%%%%%%%%%%%%%%%%%%%%%%%%%%%%%%%%%%%%%%%%%%%%%%%%%%%%%%%%%%%

\textbf{Gentry-Sahai-Waters Encryption.} 
We recall the cryptosystem from Gentry, Sahai and Waters \cite{C:GenSahWat13}.
%, but we use the exposition from Alperin-Sheriff and Peikert \cite{C:AlpPei14}.
 
\begin{definition}[Gadget Matrix]\label{def:gadgetdecomp}
We call the column vector $\vec{g}_{\ell, \basis} =  [\basis^{i-1}]_{i=1}^{\ell}  \in \NN^{\ell \times 1}$ 
%TODO (This is cut) for some $\ell, \basis \in \NN$, the
powers-of-$\basis$ vector.
For some $k \in \NN$ we define the gadget matrix $\ma{G}$ as 
$\ma{G}_{\ell, \basis, k} = \ma{I}_{k} \tensor \vec{g}_{\ell, \basis}  \in \NN^{k \ell \times k}$,
where $\tensor$ denotes the Kronecker product.
\end{definition}

\begin{definition}[Decomposition]
%Let $\basis \in \NN$, $q \in \NN$ and $\ell = \roundUp{\log_{\basis}{q}}$.
We define the $\Decomp_{\basis, q}$ function to take as input
an element $\fr{a} \in \R_{\deg, q}$ and return a row vector $\vec{a} = [\fr{a}_1, \dots, \fr{a}_{\ell}] \in \R_{\basis}^{1 \times \ell}$
such that $\fr{a} = \vec{a} \cdot \vec{g}_{\ell, \basis} =  \sum_{i=1}^{\ell} \fr{a}_i \cdot \basis^{i-1}$.
Furthermore we generalize the function to matrices where $\Decomp_{\basis, q}$ is applied
to every entry of the input matrix.
Specifically, on input 
a matrix $\ma{M} \in \R_{\deg, q}^{m \times k}$, $\Decomp_{\basis, q}$ outputs
a matrix $\ma{D}  \in \R_{\basis}^{m \times k \ell}$.
%TODO Cut for space
% where
%\begin{align*}
%\ma{D}  =
%\begin{bmatrix}
%\Decomp_{\basis, q}(\ma{M}[1, 1]) & \Decomp_{\basis, q}(\ma{M}[1, 2]) & \dots & \Decomp_{\basis, q}(\ma{M}[1, n]) \\
%\Decomp_{\basis, q}(\ma{M}[2, 1]) & \Decomp_{\basis, q}(\ma{M}[2, 2]) & \dots & \Decomp_{\basis, q}(\ma{M}[2, n]) \\
%\vdots & \vdots & \dots & \vdots \\
%\Decomp_{\basis, q}(\ma{M}[m, 1]) & \Decomp_{\basis, q}(\ma{M}[m, 2]) & \dots &  \Decomp_{\basis, q}(\ma{M}[m, n])
%\end{bmatrix}.
%\end{align*}  
\end{definition}
%TODO Cut out for space
%Note that similarly as for single ring elements,
%we have that $\ma{M}  = \Decomp(\ma{M})  \cdot \ma{G}_{\ell, \basis, n}$.
 
\begin{definition}[Generalized-GSW Sample]
%TODO (Cut) Let $\basis \in \NN$, $q \in \NN$, $\bound < q$ and $\ell = \roundUp{\log_{\basis}{q}}$. 
Let $\vec{s} \in \R_{\deg, q}^n$ and $\msg \in \R_{\deg, q}$. % and $\deg \in \NN$ be a power of $2$.
We define Generalized-GSW (GGSW) samples as 
$\GGSW_{\bound, n, \deg, q, \basis}(\vec{s}, \msg) = \ma{A} + \msg \cdot \ma{G}_{\ell, \basis, (n+1)} \in \R_{\deg, q}^{(n+1) \ell \times (n+1)}$,
where
$\ma{A}[i,\ast] = \GLWE_{\bound, n, \deg, q}(\vec{s}, 0)^{\top}$ for all $i \in [(n+1) \ell]$.
In other words, all rows of $\ma{A}$ consist of (transposed) GLWE samples of zero.
\end{definition}

Similarly, as with LWE and RLWE, we denote an GSW sample as $\GSW_{\bound, n, q, \basis}(\vec{s}, \msg) = \GGSW_{\bound, n, 1, q, \basis}(\vec{s}, \msg)$,
which is a special case of a GGSW sample with $\deg = 1$.
Similarly we will denote an Ring-GSW (RGSW)
sample as  $\RGSW_{\bound, \deg, q, \basis}(\vec{s}, \msg) = \GGSW_{\bound, 1, \deg, q, \basis}(\vec{s}, \msg)$,
which is the special case of an GGSW sample with $n=1$. 
Will use the notation $\ma{C} \in \GGSW_{\bound, n, \deg, q, \basis}(\vec{s}, \msg)$ (resp. $\GSW$ and $\RGSW$) to indicate that a matrix $\ma{C}$ is a GGSW (resp. GSW and RGSW) sample with the corresponding parameters and inputs.

 %TODO Think whether this is even necessary
\begin{definition}[Phase and Error for GGSW samples]
We define the phase of a sample $\ma{C} = \GGSW_{\bound, n, \deg, q, \basis}(\vec{s}, \msg) = \ma{A} + \msg \cdot \ma{G}_{\ell, \basis, (n+1)}$ %, where $\ell = \roundUp{\log_{\basis}{q}}$,
as $\Phase(\ma{C}) = \ma{C} \cdot [1,  - \vec{s}^{\top}]^{\top}$. 
We define  the error of a GLWE sample  as
$\Error(\ma{C}) =  \Phase(\ma{C}) - \msg \cdot \ma{G}_{\ell, \basis, (n+1)} \cdot [1,  - \vec{s}^{\top}]^{\top} \in \R_{\deg, q}^{(n+1)\ell \times 1}$. 
\end{definition}

\begin{lemma}[Linear Homomorphism of GGSW samples]
\label{lemma:linear-homo-ggsw}
Let $\ma{C} = \GGSW_{\bound_{\ma{C}}, n, \deg, q, \basis}(\vec{s}, \msg_{\ma{C}})$
and $\ma{D} = \GGSW_{\bound_{\ma{D}}, n, \deg, q, \basis}(\vec{s}, \msg_{\ma{D}})$. 
If $\ma{C}_{\out} \exec \ma{C} + \ma{D}$, then
$\ma{C}_{\out} \in \GGSW_{\bound, n, \deg, q, \basis}(\vec{s}, \msg)$,
where $\msg = \msg_{\ma{C}} + \msg_{\ma{D}}$, and
$\bound \leq \bound_{\ma{C}} + \bound_{\ma{D}}$.
Furthermore, let $\fr{d} \in \R_{\basis_{\fr{d}}}$, where $\basis_{\fr{d}} \in \NN$.
If $\ma{C}_{\out} \exec \ma{C} \cdot \fr{d}$, then $\ma{C}_{\out} \in \GGSW_{\bound, n, \deg, q, \basis}(\vec{s}, \msg_{\ma{C}} \cdot \fr{d})$ 
where 
\begin{itemize}
\item $\bound \leq \bound_{\fr{d}}^{2} \cdot \bound_{\vec{c}}$, when $\fr{d}$ is a constant monomial with $\norm{\fr{d}}_{\infty} \leq \bound_{\fr{d}}$,

\item $\bound \leq \frac{1}{3} \cdot \basis^2 \cdot \bound_{\vec{c}}$, when $\fr{d}$ is a monomial with its non-zero coefficient distributed uniformly in $[0, \basis-1]$,

\item $\bound \leq \deg \cdot \bound_{\fr{d}}^{2} \cdot \bound_{\vec{c}}$, when $\fr{d}$ is a constant polynomial with $\norm{\fr{d}}_{\infty} \leq \bound_{\fr{d}}$, and 

\item $\bound \leq \frac{1}{3} \cdot \deg \cdot \basis^2 \cdot \bound_{\vec{c}}$, when $\fr{d}$ is a polynomial with its  coefficients distributed uniformly in $[0, \basis-1]$.
\end{itemize}  
\end{lemma}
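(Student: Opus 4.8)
The plan is to reduce both claims to the linear homomorphism of GLWE samples (Lemma~\ref{lemma:linear-homo-glwe}), using that, by definition, a GGSW sample $\ma{C} = \GGSW_{\bound_{\ma{C}}, n, \deg, q, \basis}(\vec{s}, \msg_{\ma{C}})$ decomposes as $\ma{C} = \ma{A}_{\ma{C}} + \msg_{\ma{C}} \cdot \ma{G}_{\ell, \basis, (n+1)}$, where every row of $\ma{A}_{\ma{C}}$ is a transposed $\GLWE_{\bound_{\ma{C}}, n, \deg, q}(\vec{s}, 0)$ sample, and likewise $\ma{D} = \ma{A}_{\ma{D}} + \msg_{\ma{D}} \cdot \ma{G}_{\ell, \basis, (n+1)}$.

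First I would handle the additive part. Writing $\ma{C} + \ma{D} = (\ma{A}_{\ma{C}} + \ma{A}_{\ma{D}}) + (\msg_{\ma{C}} + \msg_{\ma{D}}) \cdot \ma{G}_{\ell, \basis, (n+1)}$, I observe that the $i$-th row of $\ma{A}_{\ma{C}} + \ma{A}_{\ma{D}}$ is the sum of the $i$-th rows of $\ma{A}_{\ma{C}}$ and $\ma{A}_{\ma{D}}$, i.e. a sum of two GLWE samples of $0$ under the common key $\vec{s}$. Applying the first part of Lemma~\ref{lemma:linear-homo-glwe} to each row shows it is again a GLWE sample of $0$ with variance bound at most $\bound_{\ma{C}} + \bound_{\ma{D}}$, which is precisely the statement that $\ma{C} + \ma{D} \in \GGSW_{\bound, n, \deg, q, \basis}(\vec{s}, \msg_{\ma{C}} + \msg_{\ma{D}})$ with $\bound \le \bound_{\ma{C}} + \bound_{\ma{D}}$.

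Next I would handle multiplication by $\fr{d}$. Write $\ma{C} \cdot \fr{d} = \ma{A}_{\ma{C}} \cdot \fr{d} + \msg_{\ma{C}} \cdot \ma{G}_{\ell, \basis, (n+1)} \cdot \fr{d}$. Since the gadget matrix has integer entries and $\R_{\deg, q}$ is commutative, $\ma{G}_{\ell, \basis, (n+1)} \cdot \fr{d} = \fr{d} \cdot \ma{G}_{\ell, \basis, (n+1)}$, so the second summand equals $(\msg_{\ma{C}} \cdot \fr{d}) \cdot \ma{G}_{\ell, \basis, (n+1)}$, exactly the gadget term for the message $\msg_{\ma{C}} \cdot \fr{d}$. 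For the first summand, each row of $\ma{A}_{\ma{C}} \cdot \fr{d}$ is a GLWE sample of $0$ multiplied by $\fr{d}$; since $0 \cdot \fr{d} = 0$, the second part of Lemma~\ref{lemma:linear-homo-glwe} says it is a GLWE sample of $0$, and the four bullet cases for the variance bound are inherited verbatim from the four cases of Lemma~\ref{lemma:linear-homo-glwe} according to whether $\fr{d}$ is a constant monomial, a monomial with uniform non-zero coefficient, a constant polynomial, or a polynomial with uniform coefficients. Reassembling the rows yields $\ma{C} \cdot \fr{d} \in \GGSW_{\bound, n, \deg, q, \basis}(\vec{s}, \msg_{\ma{C}} \cdot \fr{d})$ with the claimed $\bound$.

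There is no real obstacle here: the argument is a row-by-row application of Lemma~\ref{lemma:linear-homo-glwe}. The only points deserving care are (i) checking that scalar multiplication by $\fr{d}$ commutes past the integer gadget matrix, so that $\ma{C} \cdot \fr{d}$ is still in standard GGSW form with message $\msg_{\ma{C}} \cdot \fr{d}$ rather than with the gadget matrix itself perturbed, and (ii) noting that the zero message carried by the rows of $\ma{A}_{\ma{C}}$ is preserved under multiplication by $\fr{d}$, so that the GLWE error bounds transfer unchanged. I would also remark in passing that the $\bound_{\vec{c}}$ appearing in the bullet points should be read as $\bound_{\ma{C}}$, and that the $\basis$ in the two uniform-coefficient cases refers to the coefficient range of $\fr{d}$ (consistent with $\fr{d} \in \R_{\basis_{\fr{d}}}$ in the hypothesis).
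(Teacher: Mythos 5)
Your proof is correct and matches the paper's argument, which in fact just says tersely that the claim ``follows from the fact that rows of GGSW samples are GLWE samples''; you have simply spelled out that row-by-row reduction (including the commutation of $\fr{d}$ past the integer gadget matrix and the preservation of the zero message in the rows of $\ma{A}_{\ma{C}}$) in full detail. Your side remarks that $\bound_{\vec{c}}$ should read $\bound_{\ma{C}}$ and that $\basis$ in the uniform-coefficient cases should be $\basis_{\fr{d}}$ are also right and flag genuine typos in the lemma statement.
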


The external product multiplies an RGSW sample with an RLWE sample resulting in an RLWE sample of the product of their messages.

\begin{definition}[External Product]
The external product $\extProd$ given as input $\ma{C} \in \GGSW_{\bound_{\ma{C}}, n, \deg, q, \basis}(\vec{s}, \msg_{\vec{C}})$ and $\vec{d} \in \GLWE_{\bound_{\vec{d}}, n, \deg, q}(\vec{s}, \msg_{\vec{d}})$, 
outputs $\extProd(\ma{C}, \vec{d}) = \Decomp_{\basis, q}(\vec{d}^{\top}) \cdot \ma{C}$.
\end{definition}

\begin{lemma}[Correctness of the External Product]
\label{lemma:correctness-external-product}
Let  $\ma{C} = \GGSW_{\bound_{\ma{C}}, n, \deg, q, \basis}(\vec{s}, \msg_{\ma{C}})$,
where $\msg_{\ma{C}}$ consists of a single coefficient,
and 
$\vec{d} = \GLWE_{\bound_{\vec{d}}, n, \deg, q}(\vec{s}, \msg_{\vec{d}})$.
If $\vec{c} \exec \extProd(\ma{C}, \vec{d})$, then
$\vec{c}^{\top} \in \GLWE_{\bound, n, \deg, q}(\vec{s}, \msg)$, where $\msg = \msg_{\ma{C}} \cdot \msg_{\vec{d}}$ 
and 
\begin{itemize}
\item $\bound \leq \frac{1}{3}  \cdot \deg  \cdot (n + 1) \cdot \ell \cdot \basis^2 \cdot \bound_{\ma{C}} + \deg \cdot \bound_{\msg_{\ma{C}}}^2 \cdot \bound_{\vec{d}}$ in general, and

\item $\bound \leq \frac{1}{3}  \cdot \deg \cdot (n + 1) \cdot \ell  \cdot \basis^2 \cdot \bound_{\ma{C}} + \bound_{\msg_{\ma{C}}}^2 \cdot \bound_{\vec{d}}$ when $\msg_{\ma{C}}$ is a monomial.
\end{itemize}  
with $\bound_{\msg_{\vec{C}}} \leq \norm{\msg_{\vec{C}}}_{\infty}$.
\end{lemma}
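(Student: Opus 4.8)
The plan is to unfold the definitions of $\extProd$ and of a $\GGSW$ sample, reduce the statement to the gadget--decomposition identity plus two invocations of the linear-homomorphism lemma for GLWE samples (Lemma~\ref{lemma:linear-homo-glwe}), and finally add the two resulting variance bounds. First I would write $\ma{C} = \ma{A} + \msg_{\ma{C}}\cdot\ma{G}_{\ell,\basis,(n+1)}$ as in the definition of a GGSW sample, set $\vec{u} = \Decomp_{\basis,q}(\vec{d}^{\top}) \in \R_{\deg,\basis}^{1\times(n+1)\ell}$, and use the defining property of the decomposition, $\vec{u}\cdot\ma{G}_{\ell,\basis,(n+1)} = \vec{d}^{\top}$. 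Substituting gives
\[
\extProd(\ma{C},\vec{d}) \;=\; \vec{u}\cdot\ma{A} \;+\; \msg_{\ma{C}}\cdot\big(\vec{u}\cdot\ma{G}_{\ell,\basis,(n+1)}\big) \;=\; \vec{u}\cdot\ma{A} \;+\; \msg_{\ma{C}}\cdot\vec{d}^{\top}.
\]
Transposing, $\vec{c}^{\top} = (\vec{u}\cdot\ma{A})^{\top} + \msg_{\ma{C}}\cdot\vec{d}$. Note this identity is purely algebraic, so correctness of the output message ($\msg = \msg_{\ma{C}}\cdot\msg_{\vec{d}}$) needs no assumption on any error term; only the variance bound remains.

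Next I would bound $\msg_{\ma{C}}\cdot\vec{d}$. Since $\vec{d} \in \GLWE_{\bound_{\vec{d}},n,\deg,q}(\vec{s},\msg_{\vec{d}})$ and $\msg_{\ma{C}}$ is a single coefficient with $\bound_{\msg_{\ma{C}}} \le \norm{\msg_{\ma{C}}}_{\infty}$, Lemma~\ref{lemma:linear-homo-glwe} gives that $\msg_{\ma{C}}\cdot\vec{d} \in \GLWE(\vec{s},\msg_{\ma{C}}\cdot\msg_{\vec{d}})$, with variance bounded by $\deg\cdot\bound_{\msg_{\ma{C}}}^{2}\cdot\bound_{\vec{d}}$ via the constant-polynomial bullet in general, and by $\bound_{\msg_{\ma{C}}}^{2}\cdot\bound_{\vec{d}}$ via the monomial bullet when $\msg_{\ma{C}}$ is a monomial (multiplication by a monomial only permutes and negates the coefficients of the error, so its per-coefficient variance is scaled by exactly the square of the nonzero coefficient). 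Then I would bound $(\vec{u}\cdot\ma{A})^{\top} = \sum_{j=1}^{(n+1)\ell}\fr{u}_{j}\cdot\ma{A}[j,\ast]^{\top}$: by the definition of GGSW each $\ma{A}[j,\ast]^{\top}$ is a GLWE sample of $0$ with variance $\le\bound_{\ma{C}}$ and the rows carry independent errors, while each $\fr{u}_{j}\in\R_{\deg,\basis}$ has coefficients in $[0,\basis-1]$, which we treat as uniform under the standard independence heuristic. The "polynomial with coefficients uniform in $[0,\basis-1]$" bullet of Lemma~\ref{lemma:linear-homo-glwe} then bounds each product's variance by $\frac{1}{3}\cdot\deg\cdot\basis^{2}\cdot\bound_{\ma{C}}$, and the additive part of the same lemma, applied over the $(n+1)\ell$ independent terms, gives a GLWE sample of $0$ with variance $\le\frac{1}{3}\cdot\deg\cdot(n+1)\cdot\ell\cdot\basis^{2}\cdot\bound_{\ma{C}}$.

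Finally I would add the two contributions: since one summand encrypts $0$ and the other encrypts $\msg_{\ma{C}}\cdot\msg_{\vec{d}}$, their sum $\vec{c}^{\top}$ is a GLWE sample of $\msg = \msg_{\ma{C}}\cdot\msg_{\vec{d}}$ with variance at most $\frac{1}{3}\cdot\deg\cdot(n+1)\cdot\ell\cdot\basis^{2}\cdot\bound_{\ma{C}} + \deg\cdot\bound_{\msg_{\ma{C}}}^{2}\cdot\bound_{\vec{d}}$ in general, and with the $\deg$ factor dropped from the second term in the monomial case, which is exactly what is claimed.

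The routine parts are the transpose bookkeeping and picking the correct bullet of Lemma~\ref{lemma:linear-homo-glwe}; the one genuinely delicate point is the variance accounting for $\vec{u}\cdot\ma{A}$, where the gadget decomposition is deterministic yet is analysed as though its digits were independent and uniform on $[0,\basis-1]$. I expect this heuristic step to be the main obstacle to a fully rigorous statement, so I would flag explicitly that the bound holds under the usual FHEW/TFHE independence heuristic (an unconditional worst-case argument would instead replace the factor $\frac{1}{3}\basis^{2}$ by $\basis^{2}$ through $\norm{\cdot}_{\infty}$ bounds on the decomposition digits).
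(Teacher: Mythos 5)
Your proof follows exactly the paper's route: unfold the GGSW sample, use the gadget identity $\vec{r}\cdot\ma{G}_{\ell,\basis,(n+1)}=\vec{d}^{\top}$ to split the product into $\vec{r}\cdot\ma{A}$ (a combination of fresh GLWE samples of zero) plus $\msg_{\ma{C}}\cdot\vec{d}$, and then bound each piece with Lemma~\ref{lemma:linear-homo-glwe}. The only difference is cosmetic: you are slightly more explicit about which bullet of Lemma~\ref{lemma:linear-homo-glwe} is used where, and you rightly flag that treating the decomposition digits as uniform and independent is a heuristic — a caveat the paper leaves implicit.
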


\textbf{CMux Gate.} 
The CMux gate was first introduced by Chillotti et al. \cite{AC:CGGI16,JC:CGGI20}.
Informally, it takes as input a control GGSW sample and two GLWE samples.
The gate outputs one of the input GLWE samples depending on the bit encrypted in the GGSW sample.

\begin{definition}[CMux Gate]\label{def:cmux}
The $\CMux$ function is defined as follows.
\begin{itemize}
\item $\CMux(\ma{C}, \vec{g}, \vec{h})$: On input
	$\ma{C} \in \GGSW_{\bound_{\ma{C}}, n, \deg, q, \basis}(\vec{s}, \msg_{\ma{C}})$, 
where $\msg_{\ma{C}} \in \BB$,  $\vec{g} \in \GLWE_{\bound_{\vec{g}}, n, \deg, q}(\vec{s}, .)$ and 
$\vec{h} \in \GLWE_{\bound_{\vec{h}}, n, \deg, q}(\vec{s}, .)$, 
the function returns $\vec{c}_{\out} \in \GLWE_{\bound, n, \deg, q}(\vec{s},, .)$. 
%TODO Commented and described in appendix
%The function proceeds as follows.
%\begin{enumerate}
%\item Compute $\vec{d} \exec \vec{g} - \vec{h}$.
%\item Compute $\vec{c}_{\out} \exec \extProd(\ma{C}, \vec{d})^{\top} + \vec{h}$.
%\item Output $\vec{c}_{\out}$.
%\end{enumerate}
\end{itemize}
\end{definition}

\begin{lemma}[Correctness of the $\CMux$ gate]\label{lemma:correctness-of-cmux}
Let  $\ma{C} \in \GGSW_{\bound_{\ma{C}}, n, \deg, q, \basis}(\vec{s}, \msg_{\ma{C}})$,
where $\msg_{\ma{C}} \in \BB$ . 
Let $\vec{g} \in \GLWE_{\bound_{\vec{g}}, n, \deg, q}(\vec{s}, \msg_{\vec{g}})$
and $\vec{h} \in \GLWE_{\bound_{\vec{h}}, n, \deg, q}(\vec{s}, \msg_{\vec{h}})$. 

If $\vec{c}_{\out} \exec \CMux(\ma{C}, \vec{g}, \vec{h})$, then
$\vec{c}_{\out} \in \GLWE_{\bound, n, \deg, q}(\vec{s}, \msg_{\out})$,
where $\msg_{\out} = \msg_{\vec{h}}$ for $\msg_{\ma{C}} = 0$ and $\msg_{\out} = \msg_{\vec{g}}$ for $\msg_{\ma{C}} = 1$,
and  
$\bound \leq \frac{1}{3}  \cdot (n+1) \cdot   \deg \cdot  \ell \cdot \basis^2 \cdot \bound_{\ma{C}} + \max(\bound_{\vec{g}}, \bound_{\vec{h}})$.
\end{lemma}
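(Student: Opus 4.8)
The plan is to work from the concrete instantiation of the gate, namely $\CMux(\ma{C}, \vec{g}, \vec{h}) = \extProd(\ma{C}, \vec{g} - \vec{h}) + \vec{h}$ (as specified together with the other preliminaries in \append~\ref{sec:correctness-for-preliminaries}), and to analyse it by expanding the external product \emph{explicitly} rather than by invoking Lemma~\ref{lemma:correctness-external-product} as a black box. The message part is immediate: by Lemma~\ref{lemma:linear-homo-glwe}, $\vec{g} - \vec{h} \in \GLWE(\vec{s}, \msg_{\vec{g}} - \msg_{\vec{h}})$, and since $\msg_{\ma{C}} \in \BB$ is a constant (a degree-zero monomial with $\norm{\msg_{\ma{C}}}_{\infty} \le 1$), the external product encrypts $\msg_{\ma{C}} \cdot (\msg_{\vec{g}} - \msg_{\vec{h}})$; adding $\vec{h}$ gives message $\msg_{\out} = \msg_{\ma{C}}(\msg_{\vec{g}} - \msg_{\vec{h}}) + \msg_{\vec{h}}$, which equals $\msg_{\vec{h}}$ for $\msg_{\ma{C}} = 0$ and $\msg_{\vec{g}}$ for $\msg_{\ma{C}} = 1$, as claimed.

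For the error I would track it exactly through the computation. Write $\ma{C} = \ma{A} + \msg_{\ma{C}} \cdot \ma{G}_{\ell, \basis, (n+1)}$ where each row of $\ma{A}$ is a (transposed) $\GLWE_{\bound_{\ma{C}}, n, \deg, q}(\vec{s}, 0)$, and set $\vec{d} = \vec{g} - \vec{h}$. Using the gadget identity $\Decomp_{\basis, q}(\vec{d}^{\top}) \cdot \ma{G}_{\ell, \basis, (n+1)} = \vec{d}^{\top}$ we get $\extProd(\ma{C}, \vec{d}) = \Decomp_{\basis, q}(\vec{d}^{\top}) \cdot \ma{A} + \msg_{\ma{C}} \cdot \vec{d}^{\top}$; hence multiplying $\vec{c}_{\out} = \extProd(\ma{C}, \vec{d}) + \vec{h}$ on the right by $[1, -\vec{s}^{\top}]^{\top}$, subtracting $\msg_{\out}$, and using $\Error(\vec{d}) = \Error(\vec{g}) - \Error(\vec{h})$ yields
\[
\Error(\vec{c}_{\out}) = \Decomp_{\basis, q}(\vec{d}^{\top}) \cdot \Error(\ma{A}) + \msg_{\ma{C}} \cdot \big( \Error(\vec{g}) - \Error(\vec{h}) \big) + \Error(\vec{h}).
\]
The key step is the case split on $\msg_{\ma{C}} \in \BB$: when $\msg_{\ma{C}} = 0$ the last two terms collapse to $\Error(\vec{h})$, and when $\msg_{\ma{C}} = 1$ they collapse to $\Error(\vec{g})$, so in either case exactly one of the two input errors survives, contributing at most $\max(\bound_{\vec{g}}, \bound_{\vec{h}})$ to the coefficient-wise variance. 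This cancellation is precisely what produces the $\max$ in the statement instead of a sum; a direct application of Lemma~\ref{lemma:correctness-external-product} to $\vec{g} - \vec{h}$ would charge $\bound_{\vec{g}} + \bound_{\vec{h}}$ and, after adding $\vec{h}$, even $\bound_{\vec{g}} + 2\bound_{\vec{h}}$.

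It then remains to bound the variance of $\Decomp_{\basis, q}(\vec{d}^{\top}) \cdot \Error(\ma{A})$, which is a sum of $(n+1)\ell$ products in $\R_{\deg, q}$ of a polynomial whose coefficients are (heuristically) uniform over $[0, \basis - 1]$ against a fresh error polynomial of coefficient-wise variance at most $\bound_{\ma{C}}$. The same routine computation that underlies the GGSW term of Lemma~\ref{lemma:correctness-external-product}, where each product coefficient is a signed sum of $\deg$ terms and the second moment of a uniform digit in $[0, \basis - 1]$ is about $\basis^2/3$, bounds each coefficient variance by $\tfrac{1}{3}(n+1)\deg\ell\basis^2\bound_{\ma{C}}$. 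Adding this to the input-error contribution gives $\bound \le \tfrac{1}{3}(n+1)\deg\ell\basis^2\bound_{\ma{C}} + \max(\bound_{\vec{g}}, \bound_{\vec{h}})$, completing the proof.

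The main obstacle is not any single estimate but the bookkeeping that keeps the cancellation visible: one must resist plugging $\vec{g} - \vec{h}$ into Lemma~\ref{lemma:correctness-external-product} directly, and instead carry the explicit error term of the external product through the final addition of $\vec{h}$ before case-splitting on the control bit. A secondary, routine point is the independence heuristic used when summing the $(n+1)\ell$ variances in the decomposition term and when treating the decomposition digits as uniform over $[0, \basis - 1]$, exactly as in the cited external-product analysis.
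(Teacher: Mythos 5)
Your proof is correct and follows essentially the same route as the paper's: both expand the external product explicitly, track $\Error(\vec{c}_{\out}) = \Decomp_{\basis, q}(\vec{d}^{\top})\cdot\Error(\ma{A}) + \msg_{\ma{C}}\cdot(\Error(\vec{g}) - \Error(\vec{h})) + \Error(\vec{h})$, and then case-split on $\msg_{\ma{C}}\in\BB$ so that exactly one of $\Error(\vec{g}),\Error(\vec{h})$ survives, yielding the $\max$ rather than a sum. You also correctly identify that a black-box invocation of Lemma~\ref{lemma:correctness-external-product} would overcount; this is the same observation the paper's proof is built around.
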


\textbf{Modulus Switching.} 
The modulus switching technique, developed by
Brakerski and Vaikuntanathan \cite{FOCS:BraVai11}, allows an evaluator to change the
modulus of a given ciphertext without the knowledge of the secret key. 

\begin{definition}[Modulus Switching for LWE Samples] 
We define modulus switching  from $\ZZ_Q$ to $\ZZ_q$ by the following algorithm.  
\begin{itemize}
\item $\ModSwitch(\vec{c}, Q, q)$:  
On input a LWE sample $\vec{c} = [b, \vec{a}^{\top}]^{\top} \in \LWE_{\bound, n, Q}(\vec{s}, .)$, and moduli $Q$ and $q$,
the function outputs $\ct_{\out} \in \LWE_{\bound, n, q}(\vec{s}, .)$.
%TODO Commented and described in appendix 
%\begin{enumerate}
%\item  Set $b_q \exec \round{\frac{q \cdot b}{Q}}$. 
%\item  Set $\vec{a}_q = \round{\frac{q \cdot \vec{a}}{Q}}$.  
%\item Output $\ct_{\out} \exec [b_q, \vec{a}_q^{\top}]^{\top} \in \ZZ_q^{(n+1) \times 1}$ .
%\end{enumerate} 
\end{itemize}
\end{definition}

\begin{lemma}[Correctness of Modulus Switching]
\label{lemma:correctness-modulus-switching} 
Let $\vec{c} = \LWE_{\bound_{\vec{c}}, n, Q}(\vec{s}, \msg \cdot (\frac{Q}{t}))$% and $q \in \NN$, 
where
$\vec{s} \in \ZZ_q^{n}$ and
$Q = 0 \mod t$. 
If $\vec{c}_{\out} \exec \ModSwitch(\vec{c}, Q, q)$,
then $\vec{c}_{\out} \in \LWE_{\bound, n, q}(\vec{s}, \msg \cdot (\frac{q}{t}))$,  
where 
\begin{align*}
\bound \leq \frac{q^2}{Q^2} \cdot \bound_{\vec{c}} + \frac{2}{3} + \frac{2}{3} \cdot \hamming(\vec{s}) \cdot \big(\norm{\Var(\vec{s})}_{\infty} + \norm{\E(\vec{s})}_{\infty}^2\big).
\end{align*} 
\end{lemma}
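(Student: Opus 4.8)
The plan is to track the phase of $\vec{c}$ through the componentwise rescaling-and-rounding performed by $\ModSwitch$. Write the input as $\vec{c} = [b, \vec{a}^{\top}]^{\top}$ and the output as $\vec{c}_{\out} = [b', (\vec{a}')^{\top}]^{\top}$ with $b' = \round{\tfrac{q}{Q}b}$ and $\vec{a}'[i] = \round{\tfrac{q}{Q}\vec{a}[i]}$ for $i \in [n]$. First I would express each rounded entry as its rescaled value plus a rounding term, $b' = \tfrac{q}{Q}b + \epsilon_b$ and $\vec{a}'[i] = \tfrac{q}{Q}\vec{a}[i] + \epsilon_i$ with $\epsilon_b, \epsilon_i \in [-\tfrac12,\tfrac12]$; following the standard heuristic underlying FHEW/TFHE noise analyses, these rounding terms are modeled as independent, zero-mean, and independent of both $\Error(\vec{c})$ and the secret $\vec{s}$.

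Next I would compute the phase of the output, $\Phase(\vec{c}_{\out}) = b' - (\vec{a}')^{\top}\vec{s} = \tfrac{q}{Q}\big(b - \vec{a}^{\top}\vec{s}\big) + \epsilon_b - \sum_{i=1}^{n}\epsilon_i\vec{s}[i]$. Here one must check that rescaling the mod-$Q$ quantity $b - \vec{a}^{\top}\vec{s}$ by $\tfrac{q}{Q}$ is consistent modulo $q$: writing $b - \vec{a}^{\top}\vec{s} = kQ + r$ over $\ZZ$ gives $\tfrac{q}{Q}(b - \vec{a}^{\top}\vec{s}) = kq + \tfrac{q}{Q}r \equiv \tfrac{q}{Q}r \pmod{q}$, so the identity holds in $\ZZ_q$ and $\tfrac{q}{Q}\Phase(\vec{c})$ is well defined. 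Substituting $\Phase(\vec{c}) = \msg\cdot\tfrac{Q}{t} + \Error(\vec{c})$ and using $t \mid Q$, the message part rescales exactly, $\tfrac{q}{Q}\cdot\msg\cdot\tfrac{Q}{t} = \msg\cdot\tfrac{q}{t}$, with no rounding contribution; hence $\vec{c}_{\out} \in \LWE_{\bound, n, q}(\vec{s}, \msg\cdot\tfrac{q}{t})$ with error $\Error(\vec{c}_{\out}) = \tfrac{q}{Q}\Error(\vec{c}) + \epsilon_b - \sum_{i=1}^{n}\epsilon_i\vec{s}[i]$ and $\E(\Error(\vec{c}_{\out})) = 0$ by the zero-mean assumptions (recall $\Error(\vec{c})$ is mean-zero as well).

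Finally I would bound the variance. By independence, $\Var(\Error(\vec{c}_{\out})) = \tfrac{q^2}{Q^2}\Var(\Error(\vec{c})) + \Var(\epsilon_b) + \sum_{i=1}^{n}\Var(\epsilon_i\vec{s}[i])$, and the first term is at most $\tfrac{q^2}{Q^2}\bound_{\vec{c}}$ since $\Var(\Error(\vec{c})) \le \bound_{\vec{c}}$. For the product terms, $\E(\epsilon_i) = 0$ and independence of $\epsilon_i$ from $\vec{s}[i]$ give $\Var(\epsilon_i\vec{s}[i]) = \Var(\epsilon_i)\big(\Var(\vec{s}[i]) + \E(\vec{s}[i])^2\big)$, which vanishes whenever $\vec{s}[i] = 0$, so the sum is supported on at most $\hamming(\vec{s})$ coordinates and is bounded by $\max_i\Var(\epsilon_i)\cdot\hamming(\vec{s})\cdot\big(\norm{\Var(\vec{s})}_{\infty} + \norm{\E(\vec{s})}_{\infty}^2\big)$. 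Plugging in a conservative constant bound of $\tfrac23$ on the variance of each rounding term yields exactly the claimed additive terms $\tfrac23 + \tfrac23\cdot\hamming(\vec{s})\cdot\big(\norm{\Var(\vec{s})}_{\infty} + \norm{\E(\vec{s})}_{\infty}^2\big)$. I do not expect a serious obstacle here: the only steps that require genuine care are the mod-$q$ consistency of the rescaling and the variance of the products $\epsilon_i\vec{s}[i]$ over a random key, with everything resting on the usual independence heuristic for rounding errors.
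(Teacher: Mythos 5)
Your proposal is correct and follows essentially the same route as the paper's proof: decompose each rounded coordinate as the exact rescaling plus a bounded rounding term, push through the phase, observe the message rescales exactly because $t\mid Q$, and then bound the variance via independence, the product-variance identity for a zero-mean factor, and the restriction of the sum to the $\hamming(\vec{s})$ nonzero key coordinates. The only cosmetic difference is that the paper justifies the $\frac{2}{3}$ constant by explicitly comparing the rounding term to a uniform variable over $\{-1,0,1\}$, whereas you simply state it as a conservative bound; your added remark about mod-$q$ consistency is a small extra point of rigor the paper leaves implicit, but it does not change the argument.
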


\textbf{Sample Extraction.} 
Informally, sample extraction allows extracting from an RLWE sample an LWE sample of a single coefficient without increasing the error rate. Here we give a generalized version of the extraction algorithm, which can extract an LWE sample of any coefficient of the RLWE sample.
%Importantly, all extracted LWE samples are samples with respect to the same secret key.

\begin{definition}[Sample Extraction]
Sample extraction consists of algorithms $\KeyExtract$ and $\SampleExtract$, defined as follows.
\begin{itemize}
\item $\KeyExtract(\fr{s})$:
Takes as input a key $\fr{s} \in \R_{\deg, q}$, and outputs $\coefs(\fr{s})$.
%\begin{enumerate}
%\item Output $\coefs(\fr{s})$.
%\end{enumerate}

\item $\SampleExtract(\ct, k)$: 
Takes as input $\ct = [\fr{b}, \fr{a}]^{\top} \in \RLWE_{\bound, \deg, q}(\fr{s}, .)$  and an index $k \in [\deg]$.
The function outputs $\vec{c} \in \LWE_{\bound, \deg, q}(\vec{s}, .)$.
%TODO Commented and described in appendix 
%\begin{enumerate}
%\item Set $b' = \coefs(\fr{b})[k]$.  
%\item Set  $\vec{a} \exec  \coefs(\fr{a}) \in \ZZ_q^{\deg}$.   
%\item For all $i \in [\deg]$ set $\vec{a}'[i] \exec \NegSgn(k - i + 1) \cdot  \vec{a}[(k - i \mod \deg) + 1]$. 
%\item Return $\vec{c} \exec (\vec{a}', b')$.
%\end{enumerate}
\end{itemize} 
\end{definition}
 
\begin{lemma}[Correctness of Sample Extraction]
\label{lemma:correctness-sample-extraction}
Let  $\ct \in  \RLWE_{\bound_{\ct}, \deg, q}(\fr{s}, \msg)$, where $\fr{s}, \msg \in \R_{\deg, q}$. 
Denote $\msg = \sum_{i=1}^{\deg} \msg_i \cdot X^{i - 1}$, where $\msg_{i} \in \ZZ_q$.
If $\vec{s} \exec \KeyExtract(\fr{s})$ and $\vec{c} \exec \SampleExtract(\ct, k)$,
for some $k \in [\deg]$, 
then $\vec{c} \in \LWE_{\bound, \deg, q}(\vec{s}, \msg_{k})$,
where  
$\bound \leq \bound_{\ct}$.
\end{lemma}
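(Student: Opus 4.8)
The plan is to unfold what $\SampleExtract$ does in terms of the negacyclic convolution underlying multiplication in $\R_{\deg, q}$, and then read off the extracted $\LWE$ sample and its error directly from the definitions of $\Phase$ and $\Error$.

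First I would observe that, since $\ct = [\fr{b}, \fr{a}]^{\top} \in \RLWE_{\bound_{\ct}, \deg, q}(\fr{s}, \msg)$, we have $\Phase(\ct) = \fr{b} - \fr{a}\cdot\fr{s} = \msg + \fr{e}$ with $\fr{e} = \Error(\ct)$ satisfying $\norm{\Var(\fr{e})}_{\infty} \leq \bound_{\ct}$ and $\E(\fr{e}) = \vec{0}$. Writing $\fr{a} = \sum_{i=1}^{\deg} a_i X^{i-1}$ and $\fr{s} = \sum_{j=1}^{\deg} s_j X^{j-1}$ and using $X^{\deg} \equiv -1$ in $\R_{\deg, q}$, the $k$-th coefficient of $\fr{a}\cdot\fr{s}$ is a fixed $\ZZ_q$-linear combination of the $s_j$: the terms contributing to $X^{k-1}$ split into a ``non-wrapped'' part $\sum_{i=1}^{k} a_i s_{k+1-i}$ and a ``wrapped'' part $-\sum_{i=k+1}^{\deg} a_i s_{\deg+k+1-i}$. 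Collecting these, it equals $\inner{\vec{a}'}{\coefs(\fr{s})}$ for the vector $\vec{a}' \in \ZZ_q^{\deg}$ obtained from $\coefs(\fr{a})$ by a fixed permutation of the coordinates together with sign flips on the wrapped entries; and $\SampleExtract(\ct, k)$ is precisely the algorithm that outputs $\vec{c} = [b_k, \vec{a}'^{\top}]^{\top}$, where $b_k$ is the $k$-th coefficient of $\fr{b}$, while $\KeyExtract(\fr{s})$ returns $\coefs(\fr{s}) = \vec{s}$.

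Next I would compute $\Phase(\vec{c}) = b_k - \inner{\vec{a}'}{\vec{s}} = b_k - (\fr{a}\cdot\fr{s})_k = \msg_k + e_k$, where $e_k$ denotes the $k$-th coefficient of $\fr{e}$; hence $\vec{c} \in \LWE_{\bound, \deg, q}(\vec{s}, \msg_k)$ with $\Error(\vec{c}) = e_k$. Since the only randomness in $\vec{c}$ comes from $\ct$, and extraction merely permutes and negates fixed coordinates of $\fr{a}$, copies one coordinate of $\fr{b}$, and selects the single coordinate $e_k$ of $\fr{e}$ — no fresh noise is introduced — we get $\E(e_k) = 0$ (as $\E(\fr{e}) = \vec{0}$) and $\Var(e_k) \leq \norm{\Var(\fr{e})}_{\infty} \leq \bound_{\ct}$, so $\bound \leq \bound_{\ct}$ as claimed. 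The only delicate part is the bookkeeping of the index arithmetic and the sign pattern in the negacyclic convolution — pinning down exactly which coordinates of $\coefs(\fr{a})$ are negated — and checking that the appendix's formal definition of $\SampleExtract$ matches this $\vec{a}'$ on the nose; once that is established the error claim is immediate, because the map from $\ct$ to $\vec{c}$ injects no new randomness.
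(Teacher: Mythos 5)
Your proposal is correct and takes essentially the same route as the paper's proof in the appendix: both expand the negacyclic convolution $\fr{a}\cdot\fr{s}$ into non-wrapped and wrapped contributions (the paper packages the sign pattern via $\NegSgn$ and a $\bmod\,\deg$ index), identify the extracted vector $\vec{a}'$ as a fixed sign-permutation of $\coefs(\fr{a})$, and read off $\Phase(\vec{c}) = \msg_k + e_k$ so that the error is a single coordinate of $\fr{e}$ and hence $\bound \leq \bound_{\ct}$. The index bookkeeping you flag as ``the only delicate part'' does indeed match the paper's definition of $\SampleExtract$ (note the paper redefines $\ge$ to mean strict inequality, so $\NegSgn(0) = -1$, which makes the sign of the $j = k+1$ entry come out right).
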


\textbf{Key Switching.} 
 Key-switching is an important technique to build scalable homomorphic encryption schemes.  In short, having a key switching key, the evaluator can map a given LWE sample to an LWE sample of a different key.

\begin{definition}[Key Switching LWE to GLWE]\label{def:packing-lwe-to-rlwe-ordinary}
We define the key generation algorithm $\KeySwitchSetup$ and  LWE to GLWE switching $\KeySwitch$ as follows.

\begin{itemize} 
\item $\KeySwitchSetup(\bound_{\keySwitchKey}, n, n', \deg, q, \vec{s}, \vec{s}', \basis_{\keySwitchKey})$:
Takes as input 
bound $\bound_{\keySwitchKey} \in \NN$, dimensions $n, n' \in \NN$, a degree $\deg \in \NN$ and
a modulus $q \in \NN$,  vectors  $\vec{s} \in \ZZ_q^n$, $\vec{s}' \in \R_{\deg, q}^{n'}$ and a basis $\basis_{\keySwitchKey} \in \NN$.  
The algorithm outputs $\keySwitchKey \in \GLWE_{\bound_{\keySwitchKey}, n', \deg, q}(\vec{s}', .)^{n \times \ell_{\keySwitchKey}}$.
%TODO Commented and described in appendix 
%\begin{enumerate}
%\item Set $\ell_{\keySwitchKey} = \roundUp{\log_{\basis_{\keySwitchKey}}{q}}$. 
%\item For $i \in [n]$,  $j \in [\ell_{\keySwitchKey}]$ set  
%\begin{align*}
%\keySwitchKey[i,j] \exec \GLWE_{\bound_{\keySwitchKey}, n', \deg, q}(\vec{s}', \vec{s}[i] \cdot \basis_{\keySwitchKey}^{j-1}).
%\end{align*}
%
%\item Output $\keySwitchKey \in \GLWE_{\bound_{\keySwitchKey}, n', \deg, q}(\vec{s}', .)^{n \times \ell_{\keySwitchKey}}$. 
%\end{enumerate}

\item $\KeySwitch(\vec{c}, \keySwitchKey)$: 
Takes as input  $\vec{c} = [b, \vec{a}^{\top}]^{\top} \in \LWE_{\bound_{\vec{c}}, n, q}(\vec{s}, .)$
and a key switching key $\keySwitchKey \in \GLWE_{\bound_{\keySwitchKey}, n', \deg, q}(\vec{s}', .)^{n \times \ell_{\keySwitchKey}}$.
The algorithm outputs $c_{\out} \in \GLWE_{\bound, \deg, q}(\vec{s}', .)$.
%TODO Commented and described in appendix 
%\begin{enumerate}
%\item Denote $\ell_{\keySwitchKey} = \roundUp{\log_{\basis_{\keySwitchKey}}{q}}$. 
%\item Compute $\ma{A} \exec \Decomp_{\basis_{\keySwitchKey}, q}(\vec{a}) \in \ZZ_{\basis_{\keySwitchKey}}^{n \times \ell_{\keySwitchKey}}$.    
%\item Compute $c_{\out} \exec [\fr{b}, 0]^{\top} - \sum_{i=1}^n \sum_{j=1}^{\ell_{\keySwitchKey}} , \ma{A}[i,j] \cdot \keySwitchKey[i,j]$.  
%\end{enumerate}
\end{itemize} 
 \end{definition}

\begin{lemma}[Correctness of Key Switching]
\label{lemma:correctness-lwe-to-rlwe-packing}
%Let  $\basis_{\keySwitchKey} \in \NN$, $q \in \NN$, $\vec{s} \in \ZZ_q^{n \times 1}$ and
%  $\vec{c} \in \LWE_{\bound_{\vec{c}}, n, q}(\vec{s}, \msg)$. 
%Furthermore, let $\bound_{\keySwitchKey} < q$ be a bound on the key switching key error. 
If $\keySwitchKey \exec \KeySwitchSetup(\bound_{\keySwitchKey}, n, n', \deg, q, \vec{s}, \vec{s}', \basis_{\keySwitchKey})$  
and $c_{\out} \exec \KeySwitch(\vec{c}, \keySwitchKey)$  
then $c_{\out} \in \GLWE_{\bound, n', \deg, q}(\vec{s}', \msg)$,
where  
\begin{align*}
\bound \leq   \bound_{\vec{c}} + (\frac{1}{3}) \cdot n \cdot \ell_{\keySwitchKey} \cdot \basis_{\keySwitchKey}^2 \cdot  \bound_{\keySwitchKey}
\end{align*} 
%\begin{itemize}
%\item $\bound \leq  \bound_{\vec{c}} + n \cdot \ell_{\keySwitchKey} \cdot \bound_{\keySwitchKey}$ for the precomputed version, 
%
%\item $\bound \leq   \bound_{\vec{c}} + (\frac{7}{12}) \cdot n \cdot \ell_{\keySwitchKey} \cdot \basis_{\keySwitchKey}^2 \cdot  \bound_{\keySwitchKey}$ for the ordinary version given in supplementary material~\ref{sec:alternative-packing}
%\end{itemize}
and
$\ell_{\keySwitchKey} = \roundUp{\log_{\basis_{\keySwitchKey}}{q}}$.
\end{lemma}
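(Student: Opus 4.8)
The plan is to unfold the definition of $\KeySwitch$, track the phase of the output sample through the computation, and then bound the variance of the resulting error using the linear-homomorphism lemma for GLWE samples. Recall from the specification of $\KeySwitchSetup$ that the key-switching key consists, for $i \in [n]$ and $j \in [\ell_{\keySwitchKey}]$, of samples $\keySwitchKey[i,j] \in \GLWE_{\bound_{\keySwitchKey}, n', \deg, q}(\vec{s}', \vec{s}[i] \cdot \basis_{\keySwitchKey}^{j-1})$, i.e.\ encryptions under the target key $\vec{s}'$ of the gadget-scaled coordinates of the source key $\vec{s}$, with $\ell_{\keySwitchKey} = \roundUp{\log_{\basis_{\keySwitchKey}} q}$; and that on input $\vec{c} = [b, \vec{a}^{\top}]^{\top}$ the algorithm outputs
\begin{align*}
c_{\out} = [b, \vec{0}^{\top}]^{\top} + \sum_{i=1}^{n} \sum_{j=1}^{\ell_{\keySwitchKey}} \Decomp_{\basis_{\keySwitchKey}, q}(-\vec{a}[i])_j \cdot \keySwitchKey[i,j],
\end{align*}
where $b$ and each scalar $-\vec{a}[i] \in \ZZ_q$ are embedded as constant polynomials in $\R_{\deg, q}$ before decomposition.

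First I would observe that $c_{\out}$ is a well-formed GLWE sample under $\vec{s}'$: the term $[b, \vec{0}^{\top}]^{\top}$ is a noiseless GLWE encryption of the constant $b$, each digit $\Decomp_{\basis_{\keySwitchKey}, q}(-\vec{a}[i])_j$ is a constant monomial, and by Lemma~\ref{lemma:linear-homo-glwe} constant-monomial scalings and sums of GLWE samples under a fixed key are again GLWE samples under that key. Next, using linearity of the phase map with respect to $\vec{s}'$, the identity $\Phase(\keySwitchKey[i,j]) = \vec{s}[i] \cdot \basis_{\keySwitchKey}^{j-1} + \Error(\keySwitchKey[i,j])$, and the defining property $\sum_{j} \Decomp_{\basis_{\keySwitchKey}, q}(-\vec{a}[i])_j \cdot \basis_{\keySwitchKey}^{j-1} = -\vec{a}[i]$, the noiseless part of $\Phase(c_{\out})$ collapses to $b - \vec{a}^{\top}\vec{s} = \msg + \Error(\vec{c})$. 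Hence $c_{\out}$ encrypts $\msg$ under $\vec{s}'$ with
\begin{align*}
\Error(c_{\out}) = \Error(\vec{c}) + \sum_{i=1}^{n} \sum_{j=1}^{\ell_{\keySwitchKey}} \Decomp_{\basis_{\keySwitchKey}, q}(-\vec{a}[i])_j \cdot \Error(\keySwitchKey[i,j]).
\end{align*}

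Finally, I would bound $\norm{\Var(\Error(c_{\out}))}_{\infty}$. Treating the gadget digits as independent and uniformly distributed over $[0, \basis_{\keySwitchKey}-1]$ and the errors $\Error(\keySwitchKey[i,j])$ as mutually independent and independent of $\Error(\vec{c})$ --- the usual heuristic in this setting --- the variance is additive across the $1 + n\,\ell_{\keySwitchKey}$ summands. The first summand contributes at most $\bound_{\vec{c}}$; for each of the other $n\,\ell_{\keySwitchKey}$ summands the scaling factor is a constant monomial with a uniform coefficient, so the matching case of Lemma~\ref{lemma:linear-homo-glwe} bounds its variance by $\tfrac{1}{3} \cdot \basis_{\keySwitchKey}^2 \cdot \bound_{\keySwitchKey}$. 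Adding these up yields $\bound \leq \bound_{\vec{c}} + \tfrac{1}{3} \cdot n \cdot \ell_{\keySwitchKey} \cdot \basis_{\keySwitchKey}^2 \cdot \bound_{\keySwitchKey}$, as claimed. The main obstacle, as throughout this body of work, is precisely the heuristic independence and uniformity of the decomposition digits: a fully rigorous bound would either be coarser (an $\ell_1$-type estimate, roughly replacing $\tfrac{1}{3}\basis_{\keySwitchKey}^2$ by $\basis_{\keySwitchKey}^2$ and forfeiting the variance cancellation) or would require averaging over the randomness of $\vec{a}$, but the variance bound as stated is the form that the subsequent parameter analysis relies on.
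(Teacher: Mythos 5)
Your argument is correct and mirrors the paper's proof: both unfold the key switch, use the gadget identity to show the output encrypts $\msg$ under $\vec{s}'$, and bound the resulting error variance by applying the uniform-coefficient monomial case of Lemma~\ref{lemma:linear-homo-glwe} to each of the $n\,\ell_{\keySwitchKey}$ decomposition digits and adding $\bound_{\vec{c}}$. The paper reaches the same error expression by first forming intermediate GLWE encryptions of $\vec{a}[i]\cdot\vec{s}[i]$ and then of $\vec{a}^{\top}\vec{s}$ before subtracting, whereas you write the error in one step --- equivalent accounting. One cosmetic discrepancy worth fixing: the paper's $\KeySwitch$ as specified computes $\ma{A}=\Decomp_{\basis_{\keySwitchKey},q}(\vec{a})$ and \emph{subtracts} $\sum_{i,j}\ma{A}[i,j]\cdot\keySwitchKey[i,j]$ from $[b,\vec{0}^{\top}]^{\top}$, rather than decomposing $-\vec{a}$ and adding as you wrote; the two produce the same plaintext and the same variance bound under the uniform-digit heuristic (which you rightly flag), but your write-up should match the algorithm as stated.
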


\textbf{TFHE Blind Rotation and Bootstrapping.} 
 Below we recall the blind rotation introduced by Chillotti et al. \cite{AC:CGGI16}.
 For both blind rotation and bootstrapping we use a decomposition vector
   $\vec{u} \in \ZZ^u$ for which the following holds.
  For all $y \in \S$ with $\S \subseteq \ZZ_{2 \cdot \deg}$ there exists $\vec{x} \in \BB^u$ such that $y = \sum_{i=1}^u \vec{x}[i] \cdot \vec{u}[i] \mod 2 \cdot \deg$. For example for  $\S = \{0, 1\}$ we have $\vec{u} = \{1\}$, and for $\S = {-1, 0, 1}$ we have $\vec{u} = \{-1, 1\}$.

\begin{definition}[TFHE Blind Rotation]
Blind rotation consists of a key generation algorithm $\BRKeyGen$,
and blind rotation algorithm $\BlindRotate$.

\begin{itemize}

\item $\BRKeyGen(n, \vec{s}, \vec{u}, \bound_{\brKey}, \deg, Q, \basis_{\RGSW}, \fr{s})$: 
Takes as input a dimension $n$, a secret key $\vec{s} \in \ZZ_t^{n \times 1}$ and a vector $\vec{u} \in \ZZ^{u}$, a bound $\bound_{\brKey}$, a degree $\deg$ and a modulus $Q$ defining the ring $\R_{\deg, Q}$, 
 a basis $\basis_{\RGSW} \in \NN$, and a secret key $\fr{s} \in \R_{\deg, Q}$.
The algorithm outputs a blind rotation key $\brKey$.
%TODO Commented and put into appendix
%\begin{enumerate}
%\item  Set a matrix $\ma{Y} \in \BB^{n \times u}$ as follows:  
%\begin{enumerate}
%\item For $i \in [n]$ set the $i$-th row of $\ma{Y}$, to satisfy 
%$\vec{s}[i] = \sum_{j=1}^{u} \ma{Y}[i, j] \cdot \vec{u}[j]$, 
%\end{enumerate} 
%\item For $i \in [n]$, $j \in [u]$ set $\brKey[i,j] = \RGSW_{\bound_{\brKey}, \deg, Q, \basis_{\RGSW}}(\fr{s}, \ma{Y}[i,j])$. 
%\item Output  $\brKey$. 
%\end{enumerate}

\item $\BlindRotate(\brKey, \acc, \ct, \vec{u})$: 
Takes as input a blind rotation key $\brKey$ $=$ $\RGSW_{\bound_{\brKey}, \deg, Q, \basis_{\RGSW}}(\fr{s}, .)^{n \times u}$,  
  an accumulator $\acc \in \RLWE_{\bound_{\acc}, \deg, Q}(\fr{s}, .)$, 
 a ciphertext $\ct \in  \LWE_{\bound_{\ct}, n, 2\cdot \deg}(\vec{s}, .)$, where $\vec{s} \in \ZZ_t^{n}$ with $t \leq 2 \cdot \deg$, and a vector $\vec{u} \in \ZZ^{u}$. The algorithm outputs $\acc_{\out}$.
 %TODO Commented and put into appendix
%\begin{enumerate}
%\item $\vec{c} \exec \ModSwitch(\ct, q, 2 \cdot \deg)$.
%\item Let $\vec{c} = [b, \vec{a}^{\top}]^{\top} \in \ZZ_{2 \cdot \deg}^{(n+1) \times 1}$. 
%\item For $i \in [n]$ do: 
%\begin{enumerate}
%\item  For $j \in [u]$ do:
% \begin{align*}
% \acc \exec \CMux(\brKey[i, j], \acc \cdot X^{-\vec{a}[i] \cdot \vec{u}[j]}, \acc)
%\end{align*}   
%\end{enumerate} 
%\item Output $\acc$.  
%\end{enumerate} 
\end{itemize} 
\end{definition}

\begin{lemma}[Correctness of TFHE-Style Blind-Rotation]
\label{lemma:correctness-tfhe-blind-rotate} 
 Let   $\ct$ $\in$ $\LWE_{\bound_{\ct}, n, 2 \cdot \deg}(\vec{s}$, $.)$ be a LWE sample and denote $\ct$ $=$ $[b$, $\vec{a}^{\top}]^{\top} \in \ZZ_{2 \cdot \deg}^{(n+1) \times 1}$.
%Let $\deg$ $\in$ $\NN$ be the degree and $Q$ $\in$ $\NN$ be the modulus defining the ring $\R_{\deg, Q}$.
%Furthermore, let $\fr{s}$ $\in$ $\R_{\deg, Q}$,  $\basis_{\RGSW}$ $\in$ $\NN$,  $\bound_{\brKey}$ $\le$ $Q$  
Let $\acc$ $\in$ $\RLWE_{\bound_{\acc}, \deg, Q}(\fr{s}$, $\msg_{\acc})$. 
If $\brKey \exec \BRKeyGen(n, \vec{s}, \vec{u}, \bound_{\brKey}, \deg, Q, \basis_{\RGSW}, \fr{s})$
and $\acc_{\out}$ $\exec$ $\BlindRotate(\brKey$, $\acc$, $\ct$, $\vec{u})$,
then $\acc_{\out}$ $\in$ $\RLWE_{\bound_{\out}, \deg, Q}(\fr{s}$, $\msg_{\acc}^{\out})$,
where $\msg_{\acc}^{\out}$ $=$ $\msg_{\acc}$ $\cdot$ $X^{b - \vec{a}^{\top} \cdot \vec{s} \mod 2 \cdot \deg}$ $\in$ $\R_{\deg, Q}$ and  
\begin{align*} 
\bound_{\out}  \leq  \bound_{\acc} +  (\frac{2}{3}) \cdot  n \cdot u \cdot \deg \cdot \ell_{\RGSW} \cdot \basis_{\RGSW}^2 \cdot \bound_{\brKey}
\end{align*} 
\end{lemma}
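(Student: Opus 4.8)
The plan is to unfold the $\BlindRotate$ algorithm, prove correctness of the output plaintext by induction over its iterations, and then track the noise growth along the resulting chain of $\CMux$ gates. Recall that $\BlindRotate$ first replaces the accumulator by $\acc \cdot X^{b}$ (a free monomial shift), and then for each secret coordinate $i \in [n]$ uses the decomposition vector $\vec{u}$ to write $\vec{s}[i] = \sum_{j=1}^{u} \vec{x}_i[j] \cdot \vec{u}[j] \bmod 2\deg$ with $\vec{x}_i[j] \in \BB$ (this is exactly the property demanded of $\vec{u}$, and $\BRKeyGen$ receives $\vec{s}$ and $\vec{u}$ jointly, so compatibility is assumed), applying $u$ homomorphic $\CMux$ gates: the $(i,j)$-th gate is controlled by $\brKey[i][j] \in \RGSW_{\bound_{\brKey},\deg,Q,\basis_{\RGSW}}(\fr{s},\vec{x}_i[j])$ and selects between the current accumulator multiplied by $X^{-\vec{a}[i]\cdot\vec{u}[j]}$ and the current accumulator left unchanged.

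First I would establish the plaintext invariant: after processing the pair $(i,j)$ the accumulator encrypts $\msg_{\acc} \cdot X^{e_{i,j}}$ as an $\RLWE$ sample under $\fr{s}$, where $e_{i,j} = b - \sum_{i'<i}\vec{a}[i']\vec{s}[i'] - \vec{a}[i]\sum_{j'\leq j}\vec{x}_i[j']\vec{u}[j'] \bmod 2\deg$. The base case is the initial shift by $X^{b}$. The inductive step combines Lemma~\ref{lemma:linear-homo-glwe} — multiplication by the constant monomial $X^{-\vec{a}[i]\vec{u}[j]}$ (infinity norm $1$) preserves the $\RLWE$ form and multiplies the message by that monomial — with Lemma~\ref{lemma:correctness-of-cmux}, which outputs the shifted branch exactly when $\vec{x}_i[j] = 1$, so that in both cases the exponent increments by $-\vec{a}[i]\vec{x}_i[j]\vec{u}[j]$. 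Finishing the inner loop and invoking the decomposition property collapses $\sum_{j}\vec{x}_i[j]\vec{u}[j]$ to $\vec{s}[i]$ modulo $2\deg$, so after the outer loop $e_{n,u} = b - \vec{a}^{\top}\vec{s} \bmod 2\deg$; since $X^{\deg} = -1$ and hence $X^{2\deg} = 1$ in $\R_{\deg,Q}$, this is precisely $\msg_{\acc} \cdot X^{b - \vec{a}^{\top}\vec{s} \bmod 2\deg}$ as claimed.

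Next I would bound the noise. The initial shift by $X^{b}$ leaves the bound at $\bound_{\acc}$ (again Lemma~\ref{lemma:linear-homo-glwe}, constant monomial of infinity norm $1$, so the multiplier is $1$). Writing $\bound_{i,j}$ for the bound after the $(i,j)$-th $\CMux$, with conventions $\bound_{i,0}=\bound_{i-1,u}$ and $\bound_{1,0}=\bound_{\acc}$: the two data inputs of that gate are the current accumulator and its (noise-free) monomial shift, both carrying bound $\bound_{i,j-1}$, and the control $\brKey[i][j]$ is a fresh $\RGSW$ sample ($\RGSW$ has ring-dimension parameter $n=1$), so Lemma~\ref{lemma:correctness-of-cmux} yields $\bound_{i,j} \leq \tfrac{1}{3}\cdot 2 \cdot \deg \cdot \ell_{\RGSW} \cdot \basis_{\RGSW}^{2} \cdot \bound_{\brKey} + \bound_{i,j-1}$. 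Telescoping this recursion over the $n\cdot u$ gates gives $\bound_{\out} \leq \bound_{\acc} + \tfrac{2}{3}\cdot n \cdot u \cdot \deg \cdot \ell_{\RGSW} \cdot \basis_{\RGSW}^{2} \cdot \bound_{\brKey}$, matching the statement.

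The step I expect to be the main obstacle — and the one worth spelling out — is showing that the noise stays \emph{additive} rather than \emph{multiplicative} over the $n\cdot u$ iterations. This hinges on the exact form of the $\CMux$ bound in Lemma~\ref{lemma:correctness-of-cmux}: the two data ciphertexts enter only through a $\max$ with coefficient $1$, and the control ciphertext is always a fresh key component of bound $\bound_{\brKey}$, so each gate contributes the fixed quantity $\tfrac{2}{3}\deg\ell_{\RGSW}\basis_{\RGSW}^{2}\bound_{\brKey}$ irrespective of the accumulator's accumulated noise, while the intervening monomial shifts add nothing. The remaining work is purely exponent bookkeeping — indexing of $\vec{u}$, the sign in $-\vec{a}[i]\vec{u}[j]$, and keeping the reduction of the exponent modulo $2\deg$ consistent with $X^{\deg}=-1$ — which I would state explicitly and then pass over, since it is mechanical rather than conceptual.
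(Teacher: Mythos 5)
Your proof is the paper's proof in all essentials: an induction along the chain of $\CMux$ gates, using the decomposition property of $\vec{u}$ to collapse $\sum_{j}\vec{x}_i[j]\vec{u}[j]$ to $\vec{s}[i]$, and telescoping the per-gate noise contribution $\tfrac{2}{3}\cdot\deg\cdot\ell_{\RGSW}\cdot\basis_{\RGSW}^{2}\cdot\bound_{\brKey}$ (the $2 = n+1$ coming from the RLWE dimension parameter) over the $n\cdot u$ gates, noting that the monomial shift $X^{-\vec{a}[i]\vec{u}[j]}$ is noise-neutral so the recursion stays additive.

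One discrepancy worth noting: you state that $\BlindRotate$ first shifts $\acc$ by $X^{b}$, but the paper's version of $\BlindRotate$ (Fig.~\ref{figure:blind-rotation}) has no such step --- that multiplication is performed by the caller, inside $\Bootstrap$, by pre-multiplying the rotation polynomial by $X^{b_{\deg}}$ before $\acc$ is formed. Consequently the paper's own proof of this lemma derives $\msg_{\out} = \msg_{\acc}\cdot X^{-\vec{a}^{\top}\vec{s}}$, which does not match the lemma's stated conclusion $\msg_{\acc}\cdot X^{b-\vec{a}^{\top}\vec{s}}$; your insertion of the $X^{b}$ step is what makes the lemma statement come out correctly, but it attributes to the algorithm a step the paper's pseudocode does not actually take. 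This is an internal inconsistency in the paper rather than a flaw in your reasoning, but you should not present the $X^{b}$ shift as something you are ``recalling'' from the definition of $\BlindRotate$, since it is not there.
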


 \begin{definition}[Bootstrapping]\label{def:tfhe-bootstrapping}
The bootstrapping procedure is as follows:
\begin{itemize}
\item $\Bootstrap(\brKey,  \vec{u}, \ct, \preRotPoly, \keySwitchKey)$: 
Takes as input a blind rotation key $\brKey$ $=$  $\RGSW_{\bound_{\brKey}, \deg, Q, \basis_{\brKey}}(\fr{s}$, $.)^{n \times u}$, 
a vector $\vec{u} \in \ZZ^{u}$,
a LWE sample $\ct$ $=$ $\LWE_{\bound_{\ct}, n, q}(\vec{s}$, $.)$ $=$ $[b$, $\vec{a}^{\top}]^{\top} \in \ZZ_q^{(n+1) \times 1}$, 
a polynomial  $\preRotPoly \in \R_{\deg, Q}$, and a LWE to LWE key switching key $\keySwitchKey$.
The algorithm outputs $\ct_{\out}$.
 %TODO Commented and put into appendix
%\begin{enumerate} 
%
%\item Set $b_{\deg} \exec \lfloor \frac{2\cdot\deg \cdot b}{q} \rceil$
%  
%\item Set $\preRotPoly' \exec \preRotPoly \cdot X^{b_{\deg}} \in \R_{\deg, Q}$.
%
%\item Run $\acc_{F} \exec [\preRotPoly', 0]^{\top}$
%
%\item Run $\acc_{\BR, F} \exec \BlindRotate( \brKey, \acc_{F}, \ct, \vec{u})$.   
%
%\item Run $\vec{c}_{Q} \exec \SampleExtract(\acc_{\BR, F}, 1)$.
%   
%\item Run $\vec{c}_{Q, \keySwitchKey} \exec \KeySwitch(\vec{c}_{Q}, \keySwitchKey)$.
% 
%\item Run $\ct_{\out} \exec \ModSwitch(\vec{c}_{Q, \keySwitchKey}, Q, q)$.
%
%\item Return $\ct_{\out}$.  
%\end{enumerate}
\end{itemize}
\end{definition}

  \begin{theorem}[Correctness of TFHE-Style Bootstrapping]
\label{thm:correctness-tfhe-bootstrapping}
%Let  $\vec{u} \in \ZZ^u$ be a vector 
% that for all $y \in \ZZ_q$ there exists $\vec{x} \in \BB^u$ such that $y = \sum_{i=1}^u \vec{x}[i] \cdot \vec{u}[i] \mod q$.
 Let   $\ct \in \LWE_{\bound_{\ct}, n, q}(\vec{s}, \Delta_{q, t} \cdot \msg)$.
%Let $\deg \in \NN$ be the degree and $Q \in \NN$ be the modulus defining the ring $\R_{\deg, Q}$.
%Furthermore, let $\fr{s} \in \R_{\deg, Q}$,  $\basis_{\RGSW} \in \NN$, and  $\bound_{\brKey} \le Q$.
Let $\brKey$ $\exec$ $\BRKeyGen(n$, $\vec{s}$, $\vec{u}$, $\bound_{\brKey}$, $\deg$, $Q$, $\basis_{\RGSW}$, $\fr{s})$,
$\vec{s}_{F} \exec \KeyExtract(\fr{s})$, 
and
$\keySwitchKey$ $\exec$ $\KeySwitchSetup(\bound_{\keySwitchKey}$, $\deg$, $n$, $q$, $\vec{s}_{F}$, $\vec{s}$, $\basis_{\keySwitchKey})$.
%, where  $\basis_{\keySwitchKey} \in \NN$ and $\bound_{\keySwitchKey} \le q$.
Let $\ct_{\deg} \exec \ModSwitch(\ct, q, 2 \cdot \deg) \in \LWE_{\bound_{\deg}, n, 2 \cdot \deg}(\vec{s}, \Delta_{2 \cdot \deg, t} \cdot \msg)$.
Let $\tilde{\msg} \exec \Phase(\ct_{\deg})$, $\msg = \round{\tilde{\msg}}_{t}^{2 \cdot \deg}$ and
 $\preRotPoly \in \R_{\deg, Q}$ be such that if $\tilde{\msg} \in [0,\deg)$, then
 $\coefs(X^{\tilde{\msg}} \cdot \preRotPoly)[1] = \Delta_{Q, t} \cdot F(\msg)$, where $F:\ZZ_t \mapsto \ZZ_t$.
If $\ct_{\out} \exec \Bootstrap(\brKey,  \vec{u}, \ct, \preRotPoly, \keySwitchKey)$
and $\msg \leq \frac{t}{2} - 1$,
then $\ct_{\out} \in \RLWE_{\bound_{\out}, \deg, q}(\vec{s}, \msg_{\out})$,
where $\msg_{\out} = \Delta_{Q, t} \cdot F(\msg)$ and  
\begin{align*}
\bound_{\out} \leq& (\frac{q^2}{Q^2}) \cdot  (\bound_{\BR} + \bound_{KS}) \\
 &+ \frac{2}{3} + \frac{2}{3} \cdot \hamming(\vec{s}_{F}) \cdot  (\norm{\Var(\vec{s}_{F})}_{\infty} + \norm{\E(\vec{s}_{F})}_{\infty}^2), 
\end{align*}
where  $\bound_{\BR} \leq (\frac{2}{3}) \cdot  n \cdot u \cdot \deg \cdot \ell_{\RGSW} \cdot \basis_{\RGSW}^2 \cdot \bound_{\brKey}$
and $\bound_{KS} \leq  (\frac{1}{3}) \cdot \deg \cdot \ell_{\keySwitchKey} \cdot \basis_{\keySwitchKey}^2 \cdot  \bound_{\keySwitchKey}$.
If $Q = q$, then $\bound_{\out} \leq \bound_{\BR} + \bound_{KS}$.
%\begin{align*}
%\bound_{\BR} \leq (\frac{7}{6}) \cdot  n \cdot u \cdot \deg \cdot \ell_{\RGSW} \cdot \basis_{\RGSW}^2 \cdot \bound_{\brKey}, \text{ and} \\
%\bound_{KS} \leq  (\frac{7}{12}) \cdot \deg \cdot \ell_{\keySwitchKey} \cdot \basis_{\keySwitchKey}^2 \cdot  \bound_{\keySwitchKey}.
%\end{align*}
Finally we have,
\begin{align*}
\bound_{\deg} \leq& (\frac{2 \cdot \deg}{q})^2 \cdot \bound_{\ct} \\
 &+ \frac{2}{3}  + \frac{2}{3} + \hamming(\vec{s}) \cdot (\norm{\Var(\vec{s})}_{\infty} + \norm{\E(\vec{s})}_{\infty}^2)
\end{align*} 
if $\deg \neq 2 \cdot q$, and $\bound_{\deg} = \bound_{\ct}$ otherwise, where
$\bound_{\ct} = \max(\bound_{\fresh}, \bound_{\out})$ and $\bound_{\fresh}$ is the bound in the error variance
of a fresh ciphertext.
\end{theorem}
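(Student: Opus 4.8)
\noindent\emph{Proof plan.}\quad
The plan is to unfold $\Bootstrap$ into its constituent subroutines and chain the correctness lemmas already established in this section, tracking both the encrypted message and the error‑variance bound through each step. The pipeline is: (i) modulus switch $\ct \mapsto \ct_{\deg}$ from $\ZZ_q$ to $\ZZ_{2\cdot\deg}$; (ii) initialise a noiseless accumulator $\acc$ that trivially encrypts $\preRotPoly$; (iii) blind‑rotate $\acc$ by $\ct_{\deg}$; (iv) sample‑extract the constant coefficient of the rotated accumulator; (v) key‑switch the resulting LWE sample from the extracted key $\vec{s}_F$ back to $\vec{s}$; and (vi), if $Q \neq q$, modulus switch the result from $\ZZ_Q$ to $\ZZ_q$. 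Each arrow is governed by one of Lemmas~\ref{lemma:correctness-modulus-switching}, \ref{lemma:correctness-tfhe-blind-rotate}, \ref{lemma:correctness-sample-extraction}, \ref{lemma:correctness-lwe-to-rlwe-packing}, so the proof is essentially a bookkeeping argument on top of them.

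For the message, I would first apply Lemma~\ref{lemma:correctness-modulus-switching} with moduli $q$ and $2\cdot\deg$ to conclude $\ct_{\deg} \in \LWE_{\bound_{\deg}, n, 2\cdot\deg}(\vec{s}, \Delta_{2\cdot\deg, t}\cdot\msg)$, i.e. $\tilde{\msg} = \Phase(\ct_{\deg}) = \Delta_{2\cdot\deg, t}\cdot\msg + \Error(\ct_{\deg})$ (when $q = 2\cdot\deg$ this step is the identity). Using the hypothesis $\msg \leq \tfrac{t}{2}-1$ together with the guaranteed error window for $\Error(\ct_{\deg})$, I would argue that $\tilde{\msg}$ lands in the ``first half'' $[0,\deg)$ of $\ZZ_{2\cdot\deg}$; this is the sign‑of‑the‑message observation of Bourse et al.~\cite{C:BMMP18} and is the only place where the hypothesis on $\msg$ is used. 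Given $\tilde{\msg}\in[0,\deg)$, the defining property of $\preRotPoly$ yields $\coefs(X^{\tilde{\msg}}\cdot\preRotPoly)[1] = \Delta_{Q,t}\cdot F(\msg)$. Then Lemma~\ref{lemma:correctness-tfhe-blind-rotate}, applied to the noiseless accumulator, shows $\acc_{\out}$ encrypts $\preRotPoly\cdot X^{b - \vec{a}^{\top}\vec{s}\bmod 2\cdot\deg} = \preRotPoly\cdot X^{\tilde{\msg}}$ under $\fr{s}$; Lemma~\ref{lemma:correctness-sample-extraction} with $k=1$ produces an LWE sample of $\Delta_{Q,t}\cdot F(\msg)$ under $\vec{s}_F = \KeyExtract(\fr{s})$; Lemma~\ref{lemma:correctness-lwe-to-rlwe-packing} re‑keys it to $\vec{s}$ without changing the plaintext; and the final modulus switch, if present, again preserves the rescaled plaintext. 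Hence $\msg_{\out} = \Delta_{Q,t}\cdot F(\msg)$.

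For the error, I would propagate the variance bounds along the same chain. The accumulator is a trivial encryption, so $\bound_{\acc} = 0$ and Lemma~\ref{lemma:correctness-tfhe-blind-rotate} gives $\bound_{\BR} = \tfrac{2}{3}\cdot n\cdot u\cdot\deg\cdot\ell_{\RGSW}\cdot\basis_{\RGSW}^{2}\cdot\bound_{\brKey}$ after step (iii); step (iv) leaves this unchanged; step (v) adds $\bound_{KS} = \tfrac{1}{3}\cdot\deg\cdot\ell_{\keySwitchKey}\cdot\basis_{\keySwitchKey}^{2}\cdot\bound_{\keySwitchKey}$ (the input dimension of the key switch is the extracted dimension $\deg$); and step (vi), when $Q \neq q$, multiplies the accumulated bound by $q^{2}/Q^{2}$ and adds the modulus‑switch rounding term $\tfrac{2}{3} + \tfrac{2}{3}\cdot\hamming(\vec{s})\cdot(\norm{\Var(\vec{s})}_{\infty} + \norm{\E(\vec{s})}_{\infty}^{2})$, giving the stated $\bound_{\out}$; when $Q = q$ step (vi) is absent and $\bound_{\out}\leq\bound_{\BR}+\bound_{KS}$. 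Finally, the claimed bound on $\bound_{\deg}$ is exactly Lemma~\ref{lemma:correctness-modulus-switching} applied to step (i) (the $(2\cdot\deg/q)^{2}$ scaling plus the rounding term keyed to $\vec{s}$, and $\bound_{\deg}=\bound_{\ct}$ when $q=2\cdot\deg$); this closes the recursion in which the input error variance is $\bound_{\ct} = \max(\bound_{\fresh}, \bound_{\out})$.

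The routine part is the telescoping of variance bounds. The delicate point is the message‑correctness step (ii)–(iv): one must show the blind‑rotation exponent $\Phase(\ct_{\deg})$ genuinely lands in $[0,\deg)$, which forces combining $\msg \leq \tfrac{t}{2}-1$ with the error window of the modulus switch (and treating the behaviour near $\msg = 0$ with care). A secondary care point is pinning down the exact order of steps (iv)–(vi) so that the $q^{2}/Q^{2}$ factor multiplies \emph{both} $\bound_{\BR}$ and $\bound_{KS}$, which requires the modulus switch to be applied last, after key switching; the rounding term in $\bound_{\out}$ is then the one associated with $\vec{s}$.
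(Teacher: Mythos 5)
Your proposal follows exactly the same pipeline as the paper's own proof: chain the correctness lemmas for modulus switching, blind rotation on a trivial accumulator, sample extraction, key switching, and a final modulus switch, propagating both the encrypted message and the variance bound through each stage. One point you flag that is worth calling out: since the final $\ModSwitch$ is applied \emph{after} key switching (so to an $\vec{s}$-keyed ciphertext), the rounding term in $\bound_{\out}$ should involve $\hamming(\vec{s})$, $\Var(\vec{s})$, $\E(\vec{s})$ rather than the quantities keyed to $\vec{s}_F$; the theorem statement and the paper's own proof write $\vec{s}_F$ there, which appears to be a typo inherited from the intermediate step, and your version is the internally consistent one.
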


%\subsubsection{TFHE Blind Rotation and Bootstrapping}
%\input{sections/tfhe-blind-rotate}

\section{Our Functional Bootstrapping Technique}\label{sec:functional-bootstrapping}

\begin{figure*}[t]
\center
\fbox{
%\begin{pcvstack}[boxed,center,space=0.5em]
%\begin{pchstack}%[left] 
\procedure[syntaxhighlight=auto ]{$\PubMux({[\vec{c}_i]}_{i=1}^{\ell}, \fr{p}_0, \fr{p}_1)$}{  
    \textbf{Input:~} \text{Takes as input a vector $[\vec{c}_i]_{i=1}^{\ell}$ } \\
   \text{where $\vec{c}_i$ $\in$ $\GLWE_{\bound, n, \deg, q}(\vec{s}$, $\msg \cdot \basis^{i-1})$} \\
\text{where $\msg \in \{0, \Delta_{q, t}\}$ and $\ell = \roundUp{\log_{\basis} q}$, and $\fr{p}_0$, $\fr{p}_1 \in \R_{\deg, q}$.} \\
\text{Note that when $\deg = 1$, i.e., $\vec{c}_i$ are LWE samples,} \\
 \text{we have $\fr{p}_0$, $\fr{p}_1 \in \ZZ_q$.}
 \\[][\hline] 
\pcln \text{Set $\fr{p} = \fr{p}_1 - \fr{p}_0$.}  \\
\pcln \text{Set $\vec{p} = \Decomp_{\basis, q}(\fr{p})$.} \\ 
\pcln \text{Compute 
$\vec{c}_{\out} \exec  
[\Delta_{q, t} \cdot \fr{p}_0,~~\vec{0}]^{\top} + \sum_{i=1}^{\ell} \vec{p}[i] \cdot \vec{c}_i$.} \\
\pcln \text{Return $\vec{c}_{\out}$.}
}  
  
\pchspace[0.2cm]

\procedure[syntaxhighlight=auto ]{$\BuildAcc({[\acc_i]}_{i=1}^{\ell}, \fr{p}_0, \fr{p}_1, \keySwitchKey)$}{  
   \textbf{Input:~} 
  \text{Takes as input a vector $[\acc_i]_{i=1}^{\ell}$} \\
   \text{where $\acc_i \in \LWE_{\bound_{\acc}, n, q}(\vec{s}, \Delta_{q, t} \cdot \msg \cdot \basis^{i-1})$,} \\
\text{where $\msg \in \{0, 1\}$ and $\ell = \roundUp{\log_{\basis} q}$, $\fr{p}_0, \fr{p}_1 \in \R_{\deg, q}$} \\
\text{and a key switching key $\keySwitchKey$.}
 \\[][\hline] 
\pcln  \text{For all $i \in [\ell]$ compute $\acc_{R, i} \exec \KeySwitch(\acc_i, \keySwitchKey)$.} \\
\pcln \text{Compute $\vec{c}_{\out} \exec \PubMux([\acc_{R, i}]_{i=1}^{\ell}, \fr{p}_0, \fr{p}_1)$.} \\
\pcln  \text{Output $\vec{c}_{\out}$.}
}  
%\end{pchstack}
%\end{pcvstack} 
} 
\caption{On the left we show the public Mux algorithm. On the right is the accumulator builder algorithm.}
\label{figure:cmux-accumulator-builder}
\end{figure*}

\begin{figure*}[t]
\center
\fbox{
%\begin{pcvstack}[boxed,center,space=0.5em]
%\begin{pchstack}%[left] 
\procedure[syntaxhighlight=auto ]{$\Bootstrap(\brKey,  \vec{u}, \ct, \preRotPoly_{0}, \preRotPoly_{1},  \basis_{\boot}, \packKey, \keySwitchKey)$}{  
    \textbf{Input:~} 
    \text{A bootstrapping key $\brKey = \RGSW_{\bound_{\brKey}, \deg, Q, \basis_{\brKey}}(\fr{s}, .)^{n \times u}$,} \\
\text{vector $\vec{u} \in \ZZ^{u}$,} \\
 \text{LWE sample $\ct = \LWE_{\bound_{\ct}, n, q}(\vec{s}, .)(\msg) = [b, \vec{a}^{\top}]^{\top} \in \ZZ_q^{(n+1) \times 1}$,} \\ \text{polynomials  $\preRotPoly_{0}, \preRotPoly_{1}, \in \R_{\deg, Q}$,} \\
 \text{decomposition basis $\basis_{\boot}$,} \\
 \text{LWE to RLWE key switching key $\packKey$, } \\
 \text{and a LWE to LWE key switching key $\keySwitchKey$.}
 \\[][\hline] 
\pcln \text{Set $\sgnRotPoly = 1 - \sum_{i=2}^{\deg} X^{i-1} \in \R_{\deg, Q}$.} \\
\pcln \text{Set $\ct_{\deg} \exec \ModSwitch(\ct, q, 2 \cdot \deg) = [b_{\deg}, \vec{a}_{\deg}^{\top}]^{\top}$.} \\
\pcln \text{Compute $\sgnRotPoly_b \exec \sgnRotPoly \cdot X^{b_\deg} \in \R_{\deg, Q}$.} \\
\pcln \text{Let $\ell_{\boot} = \roundUp{\log_{\basis_{\boot}}{Q}}$.} \\
\pcln \text{For $i=1$ to $\ell_{\boot}$ do,} \\ 
%\begin{enumerate} 
\pcln\pcind \text{Set $\acc_i \exec [\basis_{\boot}^{i-1} \cdot \sgnRotPoly_b \cdot (\frac{\Delta_{Q, t}}{2}), 0]^{\top}$.} \\ 
\pcln\pcind \text{Run $\acc_{\BR, i} \exec \BlindRotate(\brKey, \acc_i, \ct_{\deg}, \vec{u})$. } \\
\pcln\pcind \text{Run $\acc_{\vec{c}, i} \exec \SampleExtract(\acc_{\BR, i}, 1) + [\basis_{\boot}^{i-1} \cdot \frac{\Delta_{Q, t}}{2}, \vec{0}]$.}\\
%\end{enumerate}  
\pcln \text{Set $\preRotPoly_{0}' \exec \preRotPoly_{0} \cdot X^{b_{\deg}} \in \R_{\deg, Q}$.} \\
\pcln \text{Set $\preRotPoly_{1}' \exec \preRotPoly_{1} \cdot X^{b_{\deg}} \in \R_{\deg, Q}$.} \\
\pcln \text{Run $\acc_{F} \exec \BuildAcc([\acc_{\vec{c}, i}]_{i=1}^{\ell_{\boot}}, \preRotPoly'_0, \preRotPoly'_1, \packKey)$.} \\
\pcln \text{Run $\acc_{\BR, F} \exec \BlindRotate(\brKey, \acc_{F}, \ct_{\deg}, \vec{u})$.} \\   
\pcln \text{Run $\vec{c}_{Q} \exec \SampleExtract(\acc_{\BR, F}, 1)$.} \\
\pcln \text{Run $\vec{c}_{Q, \keySwitchKey} \exec \KeySwitch(\vec{c}_{Q}, \keySwitchKey)$.} \\
\pcln \text{Return $\ct_{\out} \exec \ModSwitch(\vec{c}_{Q, \keySwitchKey}, Q, q)$.}
}

\pchspace[0.2cm] 

\procedure[syntaxhighlight=auto ]{$\ConstPoly(\BootsFunction, \deg)$}{  
   \textbf{Input:~}  
\text{A lookup table $\BootsFunction:\ZZ_t \rightarrow \ZZ_Q$} \\
 \text{that maps elements in $\ZZ_t$ to elements in $\ZZ_Q$,} \\
 \text{and the degree that defines the ring $\R_{\deg, Q}$.}
 \\[][\hline]  
\pcln \text{For $\msg \in [0,t-1]$ and for $e \in \bigg[-\big\lceil\frac{\deg}{t}\big\rceil, \big\lceil\frac{\deg}{t}\big\rceil\bigg]$ do}\\
\pcln\pcind \text{Set $y \exec \big\lfloor\frac{2 \cdot N}{t}\big\rceil \cdot \msg + e \pmod{2N}$.} \\
\pcln\pcind \text{If $y=0$,} \\
\pcln\pcind\pcind \text{set $\vec{p}_0[1] \exec \BootsFunction[\msg+1]$.} \\
\pcln\pcind \text{If $y \in [1,\deg-1]$,} \\
\pcln\pcind\pcind \text{set $\vec{p}_0[\deg - y + 1] \exec -\BootsFunction[\msg+1] \in \ZZ_Q$.} \\
\pcln\pcind \text{Set $y' \exec y - \deg$.} \\
\pcln\pcind \text{If $y = \deg$,} \\
\pcln\pcind\pcind \text{set $\vec{p}_1[1] \exec -\BootsFunction[\msg+1]$.} \\
\pcln\pcind \text{If $y \in [\deg+1,2\cdot \deg-1]$,} \\
\pcln\pcind\pcind \text{set $\vec{p}_1[\deg - y' + 1] \exec \BootsFunction[\msg+1] \in \ZZ_Q$.} \\
\pcln \text{Set $\preRotPoly_0 = \sum_{i=1}^{\deg} \vec{p}_0[i] \cdot X^{i-1}$.}\\
\pcln \text{Set $\preRotPoly_1 = \sum_{i=1}^{\deg} \vec{p}_1[i] \cdot X^{i-1}$.} \\
\pcln \text{Return $(\preRotPoly_0, \preRotPoly_1)$.}
}  
%\end{pchstack}
%\end{pcvstack} 
} 
\caption{On the left is our full domain bootstrapping algorithm. 
On the right is the algorithm for setting up the rotation polynomials based on a lookup table.}
\label{figure:functional-bootstrapping}
\end{figure*}

In this section, we show our functional bootstrapping algorithm.
First, however, we introduce two sub-procedures that help us construct the bootstrapping procedure. 

\subsection{Public Mux Gate and Building a Homomorphic Accumulator}

We introduce a public version of the CMux gate from \cite{AC:CGGI16} which aims to choose one of two given polynomial plaintexts based on an encrypted bit. 
Furthermore, we show how to use our public Mux gate to build an accumulator for the functional bootstrap. 
Roughly speaking, the accumulator builder gets
an encrypted bit and based on that bit chooses one of two different rotation polynomials.
The difference between the accumulator builder and the public mux is that our accumulator builder needs to switch the encrypted bit from LWE samples to RLWE samples.
The algorithms are depicted in Figure~\ref{figure:cmux-accumulator-builder}. 
Below we state and prove the correctness theorems for both algorithms.

%TODO THe definition Version for the Public Mux Gate
%\begin{definition}[Public Mux Gate]
%The public Mux gate is defined as follows:
%\begin{description}
%\item[$\PubMux({[\vec{c}_i]}_{i=1}^{\ell}, \fr{p}_0, \fr{p}_1):$]
%Takes as input a vector $[\vec{c}_i]_{i=1}^{\ell}$ where $\vec{c}_i$ $\in$ $\GLWE_{\bound, n, \deg, q}(\vec{s}$, $\msg \cdot \basis^{i-1})$
%where $\msg \in \{0, \Delta_{q, t}\}$ and $\ell = \roundUp{\log_{\basis} q}$, and $\fr{p}_0$, $\fr{p}_1 \in \R_{\deg, q}$.
%Note that if $\deg = 1$, i.e., $\vec{c}_i$ are LWE samples, then $\fr{p}_0$, $\fr{p}_1 \in \ZZ_q$.
%\begin{enumerate}
%\item Set $\fr{p} = \fr{p}_1 - \fr{p}_0$
%
%\item Set $\vec{p} = \Decomp_{\basis, q}(\fr{p})$.
%
%\item Compute 
%$\vec{c}_{\out} \exec  
%\begin{bmatrix}
%\Delta_{q, t} \cdot \fr{p}_0  \\
%\vec{0}
%\end{bmatrix} + \sum_{i=1}^{\ell} \vec{p}[i] \cdot \vec{c}_i$.
%
%\item Return $\vec{c}_{\out}$.
%\end{enumerate}
%\end{description}
% 
%\end{definition}

\begin{lemma}[Correctness of Public Mux] 
Let $[\vec{c}_i]_{i=1}^{\ell}$ be a vector where $\vec{c}_i$ $\in$ $\GLWE_{\bound_{\vec{c}}, n, \deg, q}(\vec{s}$, $\msg \cdot \Delta_{q, t} \cdot \basis^{i-1})$
where $n, \deg, q \in \NN$, $\msg \in \{0, 1\}$,  $\ell = \roundUp{\log_{\basis} q}$. 
Let $\fr{p}_0$, $\fr{p}_1 \in \R_{\deg, q}$.
If $\vec{c}_{\out} \exec \PubMux({[\vec{c}_i]}_{i=1}^{\ell}, \fr{p}_0, \fr{p}_1)$, then 
$\vec{c}_{\out} \exec \GLWE_{\bound_{\out}, n, \deg, q}(\vec{s}, \Delta_{q, t} \cdot \fr{p}_{\msg})$,
where $\bound_{\out} \leq (\frac{1}{3}) \cdot \deg \cdot \ell \cdot \basis^2 \cdot \bound_{\vec{c}}$.
If $\fr{p}_0$, $\fr{p}_1$ are monomials then $\deg$ disappears from the above inequality.
\end{lemma}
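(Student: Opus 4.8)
The plan is to unfold the definition of $\PubMux$ line by line and track both the message and the noise bound. First I would observe that, by construction, $\fr{p} = \fr{p}_1 - \fr{p}_0$ and $\vec{p} = \Decomp_{\basis, q}(\fr{p})$, so that $\sum_{i=1}^\ell \vec{p}[i] \cdot \basis^{i-1} = \fr{p}$ in $\R_{\deg, q}$. Then I would compute the message of $\vec{c}_{\out} = [\Delta_{q,t} \cdot \fr{p}_0, \vec 0]^\top + \sum_{i=1}^\ell \vec{p}[i] \cdot \vec{c}_i$ using the linear homomorphism of GLWE samples (Lemma~\ref{lemma:linear-homo-glwe}). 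Since $\vec{c}_i$ encrypts $\msg \cdot \Delta_{q,t} \cdot \basis^{i-1}$, the sum encrypts $\msg \cdot \Delta_{q,t} \cdot \sum_i \vec{p}[i] \basis^{i-1} = \msg \cdot \Delta_{q,t} \cdot (\fr{p}_1 - \fr{p}_0)$, and adding the constant term $\Delta_{q,t} \cdot \fr{p}_0$ gives message $\Delta_{q,t} \cdot (\fr{p}_0 + \msg \cdot (\fr{p}_1 - \fr{p}_0)) = \Delta_{q,t} \cdot \fr{p}_{\msg}$, using $\msg \in \{0,1\}$. This handles the correctness-of-message part.

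For the noise bound, I would again invoke Lemma~\ref{lemma:linear-homo-glwe}: adding the public constant vector does not change the noise, and each term $\vec{p}[i] \cdot \vec{c}_i$ is a multiplication of a GLWE sample by $\vec{p}[i] \in \R_{\deg, \basis}$, i.e., a polynomial with coefficients in $[0, \basis - 1]$. By the fourth bullet of Lemma~\ref{lemma:linear-homo-glwe}, each such product has noise bound $\frac{1}{3} \cdot \deg \cdot \basis^2 \cdot \bound_{\vec{c}}$; wait — more carefully, I should treat the whole sum $\sum_i \vec{p}[i] \cdot \vec{c}_i$ at once, since the $\vec{c}_i$ share structure, but the cleanest route (and the one matching the claimed bound $\frac13 \deg \ell \basis^2 \bound_{\vec c}$) is to note that the coefficients of the $\vec p[i]$'s are uniform in $[0,\basis-1]$ and there are $\ell$ of them, contributing a variance factor of $\frac13 \basis^2$ each, times the $\deg$ coefficient-convolution factor, summed over $\ell$ terms. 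When $\fr{p}_0, \fr{p}_1$ are monomials, $\fr p$ is a difference of two monomials; in that case the relevant bullet is the ``monomial with non-zero coefficient uniform in $[0,\basis-1]$'' one, which drops the $\deg$ factor, giving $\bound_{\out} \le \frac13 \ell \basis^2 \bound_{\vec c}$ — matching the claim that $\deg$ disappears. (Strictly one should check the decomposition of a difference of monomials still has monomial-supported digits, or argue via a small constant; I would flag this as a minor technical point.)

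The main obstacle, I expect, is getting the noise accounting tight and consistent with the stated bound rather than a looser $\ell$-fold triangle-inequality estimate — in particular, justifying that summing $\ell$ independent decomposed-digit products yields exactly the factor $\ell$ (and not, say, $\ell^2$) in front, and that the ``uniform digits'' variance model of Lemma~\ref{lemma:linear-homo-glwe} applies to $\Decomp_{\basis,q}(\fr p)$. This is really a matter of modeling the decomposition digits as independent and uniform and invoking independence of the $\bound_{\vec c}$-bounded noises across the $\ell$ input ciphertexts. Once that is set up, the bound $\bound_{\out} \le \frac13 \deg \ell \basis^2 \bound_{\vec c}$ (resp.\ without $\deg$ in the monomial case) follows by additivity of variances. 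I would close by remarking that the constant offset $[\Delta_{q,t}\cdot\fr p_0, \vec 0]^\top$ is noiseless, so it plays no role in $\bound_{\out}$, and that the output is therefore a genuine $\GLWE_{\bound_{\out}, n, \deg, q}(\vec s, \Delta_{q,t}\cdot \fr p_{\msg})$ sample, as claimed.
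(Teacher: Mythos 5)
Your proposal is correct and follows essentially the same route as the paper's one-line proof: both reduce to Lemma~\ref{lemma:linear-homo-glwe}, first observing that $\sum_i \vec p[i]\,\vec c_i$ encrypts $\msg\,\Delta_{q,t}(\fr p_1-\fr p_0)$ (since the $\vec c_i$ carry the powers-of-$\basis$ scaling that reconstitutes $\fr p$) and that the noiseless offset then yields $\Delta_{q,t}\fr p_{\msg}$, and second summing $\ell$ applications of the ``uniform digits in $[0,\basis-1]$'' bullet to obtain $\tfrac13\deg\ell\basis^2\bound_{\vec c}$. Your flag on the monomial remark is well taken --- a difference of two monomials of different degree has two-coefficient support, so ``$\deg$ disappears'' is really ``$\deg$ is replaced by a small constant'' --- but the paper glosses over this same point, so there is no substantive divergence.
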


\begin{proof}
From the correctness of linear homomorphism of GLWE samples, we have
\begin{align*}
\vec{c}_{\out} &=  
\begin{bmatrix}
\Delta_{q, t} \fr{p}_0 \\
\vec{0}
\end{bmatrix} + \sum_{i=1}^{\ell} \vec{p}[i] \cdot \GLWE_{\bound_{\vec{c}}, n, \deg, q}(\vec{s}, \msg  \Delta_{q, t}  \basis^{i-1}) \\
&= \begin{bmatrix}
\Delta_{q, t}  \fr{p}_0 \\
\vec{0}
\end{bmatrix} + \GLWE_{\bound_{\out}, n, \deg, q}(\vec{s},  \Delta_{q, t} \msg   (\fr{p}_1 - \fr{p}_0)) \\
&= \GLWE_{\bound_{\out}, n, \deg, q}(\vec{s}, \Delta_{q, t} \cdot \fr{p}_{\msg}),
\end{align*}
where $\bound_{\out} \leq (\frac{1}{3}) \cdot \deg \cdot \ell \cdot \basis^2 \cdot \bound_{\vec{c}}$.
If $\fr{p}_0$, $\fr{p}_1$ are monomials then $\deg$ disappears from the above inequality.
\end{proof}

%TODO Definition Version of the algorithm
%\begin{definition}[Accumulator Builders]
%The procedure to generate a new accumulator for blind rotation  is as follows.
%\begin{description}
%
%\item[$\BuildAcc({[\acc_i]}_{i=1}^{\ell}, \fr{p}_0, \fr{p}_1, \keySwitchKey):$]
%Takes as input a vector $[\acc_i]_{i=1}^{\ell}$ where $\acc_i \in \LWE_{\bound_{\acc}, n, q}(\vec{s}, \Delta_{q, t} \cdot \msg \cdot \basis^{i-1})$,
%where $\msg \in \{0, 1\}$ and 
%$\ell = \roundUp{\log_{\basis} q}$, $\fr{p}_0, \fr{p}_1 \in \R_{\deg, q}$
%and a key switching key $\keySwitchKey$.
%\begin{enumerate}
%\item For all $i \in [\ell]$ compute $\acc_{R, i} \exec \KeySwitch(\acc_i, \keySwitchKey)$.
%
%\item Compute $\vec{c}_{\out} \exec \PubMux([\acc_{R, i}]_{i=1}^{\ell}, \fr{p}_0, \fr{p}_1)$.
%
%\item Output $\vec{c}_{\out}$.
%\end{enumerate}
%  
%\end{description} 
%\end{definition}

 \begin{lemma}[Correctness of the Accumulator Builder]\label{lemma:correctness-accumulator-builders}
Let
 $[\acc_i]_{i=1}^{\ell}$ where $\acc_i \in \LWE_{\bound_{\acc}, n, q}(\vec{s}, \Delta_{q, t} \cdot \msg \cdot \basis^{i-1})$,
where $\msg \in \{0, 1\}$, $n, q  \in \NN$, $\bound_{\acc} \le q$, $\vec{s} \in \ZZ_q^{n \times 1}$, and
$\ell = \roundUp{\log_{\basis} q}$. 
Let $\keySwitchKey \exec \KeySwitchSetup(\bound_{\keySwitchKey}, n, 1, \deg, q, \vec{s}, \fr{s}, \basis_{\keySwitchKey})$,
where $\deg \in \NN$, $\fr{s} \in \R_{\deg, q}$ and $\basis_{\keySwitchKey} \le q$.
Finally let  $\fr{p}_0, \fr{p}_1 \in \R_{\deg, q}$.
If $\vec{c}_{\out} \exec \BuildAcc({[\acc_i]}_{i=1}^{\ell}, \fr{p}_0, \fr{p}_1, \keySwitchKey)$,
then $\vec{c}_{\out} \in \RLWE_{\bound_{\out}, \deg, q}(\fr{s}, \Delta_{q, t} \cdot \fr{p}_{\msg})$,
where 
\begin{align*}
\bound_{\out} \leq (\frac{1}{3}) \cdot \deg \cdot \ell \cdot \basis^2 \cdot \big(\bound_{\acc} + (\frac{1}{3}) \cdot n \cdot \ell_{\keySwitchKey} \cdot \basis_{\keySwitchKey}^2 \cdot \bound_{\keySwitchKey}\big).
\end{align*} 
If $\fr{p}_0, \fr{p}_1$ are monomials, then the degree $\deg$ disappears from the above inequality.
 \end{lemma}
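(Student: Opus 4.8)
The plan is to read off the correctness of $\BuildAcc$ as the composition of the two sub-procedures it invokes, reusing the correctness statements already proven for each. First I would handle the loop in line~1, which applies $\KeySwitch$ independently to each of the $\ell$ input samples. Because $\keySwitchKey$ is generated by $\KeySwitchSetup(\bound_{\keySwitchKey}, n, 1, \deg, q, \vec{s}, \fr{s}, \basis_{\keySwitchKey})$, it is a key-switching key from dimension-$n$ LWE under $\vec{s}$ to $\GLWE$ of dimension $n' = 1$ under $\fr{s}$, i.e., to $\RLWE$ over $\R_{\deg, q}$; the side conditions $\bound_{\acc} \le q$, $\basis_{\keySwitchKey} \le q$ and $\vec{s} \in \ZZ_q^{n \times 1}$ are exactly what is needed to invoke Lemma~\ref{lemma:correctness-lwe-to-rlwe-packing}. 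Applying that lemma to each $\acc_i \in \LWE_{\bound_{\acc}, n, q}(\vec{s}, \Delta_{q, t} \cdot \msg \cdot \basis^{i-1})$ shows $\acc_{R, i} \in \RLWE_{\bound_R, \deg, q}(\fr{s}, \Delta_{q, t} \cdot \msg \cdot \basis^{i-1})$, where the message and its plaintext scaling are unchanged and $\bound_R \leq \bound_{\acc} + \frac{1}{3} \cdot n \cdot \ell_{\keySwitchKey} \cdot \basis_{\keySwitchKey}^2 \cdot \bound_{\keySwitchKey}$. Crucially, this bound is the same for every $i$, since all $\acc_i$ share the bound $\bound_{\acc}$ and key switching adds only a fixed quantity.

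Next I would feed $[\acc_{R, i}]_{i=1}^{\ell}$ into the Public Mux correctness lemma, instantiated with dimension $1$ (so $\GLWE$ becomes $\RLWE$), decomposition basis $\basis$, and the common variance bound $\bound_R$: the vector of $\RLWE$ samples of $\msg \cdot \Delta_{q, t} \cdot \basis^{i-1}$ with $\msg \in \{0, 1\}$ is precisely the shape that lemma expects. It follows that $\vec{c}_{\out} \exec \PubMux([\acc_{R, i}]_{i=1}^{\ell}, \fr{p}_0, \fr{p}_1)$ lies in $\RLWE_{\bound_{\out}, \deg, q}(\fr{s}, \Delta_{q, t} \cdot \fr{p}_{\msg})$ with $\bound_{\out} \leq \frac{1}{3} \cdot \deg \cdot \ell \cdot \basis^2 \cdot \bound_R$. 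Substituting the expression for $\bound_R$ yields exactly the bound in the statement, and since the Public Mux lemma already removes the factor $\deg$ when $\fr{p}_0, \fr{p}_1$ are monomials, the same substitution gives the monomial case.

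I do not expect a genuine obstacle: the argument is a mechanical chaining of two black-box results, and the only points that demand attention are (i) verifying that the parameter profile of $\keySwitchKey$ (namely $n' = 1$ over the ring $\R_{\deg, q}$) is the one that lands in $\RLWE$, so the downstream $\PubMux$ call is type-correct, and (ii) making sure all $\ell$ key-switched samples carry the same noise bound, which the Public Mux lemma needs since it quantifies over a single $\bound_{\vec{c}}$. No further bookkeeping is required.
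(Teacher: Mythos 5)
Your proof is correct and matches the paper's own argument exactly: key-switch each $\acc_i$ into an $\RLWE$ sample under $\fr{s}$ via Lemma~\ref{lemma:correctness-lwe-to-rlwe-packing} to obtain the bound $\bound_R$, then invoke the Public Mux correctness lemma on the resulting vector to reach $\bound_{\out} \le \frac{1}{3}\cdot\deg\cdot\ell\cdot\basis^2\cdot\bound_R$ and the monomial special case. The extra remarks about parameter-profile type-checking and uniformity of $\bound_R$ across $i$ are sound observations but do not change the route.
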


 \begin{proof}%[of lemma \ref{lemma:correctness-accumulator-builders}] 
 From correctness of $\KeySwitch$ we have that 
 $\acc_{R, i} \in \RLWE_{\bound_{R}, \deg, q}(\fr{s}, \Delta_{q, t} \cdot \msg \cdot \basis^{i-1})$,
 where $\bound_{R} \leq \bound_{\acc} + \frac{1}{3} \cdot n \cdot \ell_{\keySwitchKey} \cdot \basis_{\keySwitchKey}^2 \cdot \bound_{\keySwitchKey}$.  
 From correctness of $\PubMux$ we have that $\vec{c}_{\out} \in \RLWE_{\bound_{\out}, \deg, q}(\fr{s}, \Delta_{q, t} \cdot \fr{p}_{\msg})$,
 where $\bound_{\out} \leq (\frac{1}{3}) \cdot \deg \cdot \ell \cdot \basis^2 \cdot \bound_{R}$.
 Also from correctness of $\PubMux$ we have that if $\fr{p}_0, \fr{p}_1$ are monomials, then the degree $\deg$ disappears from the above inequalities. 
\end{proof}

\subsection{The Functional Bootstrapping Algorithm} 
 
Our bootstrapping algorithm depicted in Figure~\ref{figure:functional-bootstrapping}.
There are three phases. First, we compute an extended ciphertext whose plaintext indicates whether the given ciphertexts phase is in $[0, \deg)$ or in $[\deg, 2 \cdot \deg)$.
Based on this ciphertext, we build the accumulator.
The final step is to blind rotate the rotation polynomial and switch
the ciphertext from RLWE to LWE. 
Remind that that the vector $\vec{u} \in \ZZ^u$ is such
that for all $y \in \S$ with $\S \subseteq \ZZ_{2 \cdot \deg}$ there exists $\vec{x} \in \BB^u$ such that $y = \sum_{i=1}^u \vec{x}[i] \cdot \vec{u}[i] \mod 2 \cdot \deg$.
Furthermore, we denote as $\basis_{\RGSW}$, $\basis_{\keySwitchKey}$ and $\basis_{\packKey}$
the decomposition basis for $\RGSW$ samples, and two different key switch keys.
  Below we state the correctness theorem.

\begin{theorem}[Correctness of the Bootstrapping]\label{thm:correctness-of-functional-bootstrapping}  
%Let   $\deg \in \NN$ be the degree and $Q \in \NN$ be the modulus defining the ring $\R_{\deg, Q}$.
Let $\vec{s} \in \ZZ_q^{n}$ for some $q, n \in \NN$, $\fr{s} \in \R_{\deg, q}$,   
and $\vec{s}_{F} \exec \KeyExtract(\fr{s})$. 
Furthermore, let us define the following. 
\begin{itemize}
  
\item $\brKey \exec \BRKeyGen(n, \vec{s}, \vec{u}, \bound_{\brKey}, \deg, Q, \basis_{\RGSW}, \fr{s})$.
%, where  $\basis_{\RGSW} \in \NN$.

\item $\packKey \exec \KeySwitchSetup(\bound_{\packKey},\deg,1, \deg, Q, \vec{s}_{F}, \fr{s}, \basis_{\packKey})$.
%, where $\basis_{\packKey} \in \NN$.

\item $\keySwitchKey \exec \KeySwitchSetup(\bound_{\keySwitchKey}, \deg, n,1, Q, \vec{s}_{F}, \vec{s}, \basis_{\keySwitchKey})$.
%, where  $\basis_{\keySwitchKey} \in \NN$.

\item  $\ct \in \LWE_{\bound_{\ct}, n, q}(\vec{s}, \Delta_{q, t} \cdot \msg)$.

\item $\ct_{\deg} \exec \ModSwitch(\ct, q, 2 \cdot \deg)$, and let us denote $\ct_{\deg} \in \LWE_{\bound_{\deg}, n, 2 \cdot \deg}(\vec{s}, \Delta_{2 \cdot \deg, t} \cdot \msg)$.
%TODO I cut thast out:  \in \LWE_{\bound_{\deg}, n, 2 \cdot \deg}(\vec{s}, \Delta_{2 \cdot \deg, t} \cdot \msg)

%\item $\ct_{\deg_{F}} \exec \ModSwitch(\ct, q, 2 \cdot \deg_{F}) \in \LWE_{\bound_{\ct, \deg_{F}}, n, 2 \cdot \deg_{F}}(\vec{s}, \Delta_{2 \cdot \deg_{F}, t} \cdot \msg)$.

\end{itemize}

Finally, let $\tilde{\msg} \exec \Phase(\ct_{\deg})$, $\round{\tilde{\msg}}_{t}^{2 \cdot \deg} = \msg$ and
 $\preRotPoly_0, \preRotPoly_1 \in \R_{\deg, Q}$ be such that 
\begin{itemize}
\item
if $\tilde{\msg} \in [0,\deg)$, then $\coefs(X^{\tilde{\msg}} \cdot \preRotPoly_0)[1] = \Delta_{Q, t'} \cdot f(\msg)$, and

\item  
if $\tilde{\msg} \in [\deg, 2 \cdot \deg)$, then
$\coefs(X^{\tilde{\msg}} \cdot \preRotPoly_1)[1] = \Delta_{Q, t'} \cdot f(\msg)$, where $f:\ZZ_t \mapsto \ZZ_t'$.
\end{itemize}

If $\ct_{\out} \exec \Bootstrap(\brKey, \vec{u}, \ct, \preRotPoly_0, \preRotPoly_1,  \basis_{\boot}, \packKey, \keySwitchKey)$ and given that
$\bound_{\deg} \le \frac{\deg}{t}$, 
then $\ct_{\out} \in \LWE_{\bound_{\out}, n, q}(\vec{s}, \Delta_{q, t'} \cdot f(\msg))$,
where   
 \begin{align*}
 \bound_{\deg} \leq& (\frac{2 \cdot \deg}{q})^2 \cdot \bound_{\ct} \\
  &+ \frac{2}{3}  + \frac{2}{3} + \hamming(\vec{s}) \cdot (\norm{\Var(\vec{s})}_{\infty} + \norm{\E(\vec{s})}_{\infty}^2),
 \end{align*}
for $\deg \neq 2 \cdot q$, and $\bound_{\deg} = \bound_{\ct}$ otherwise, and $\bound_{\ct} \leq \max(\bound_{\fresh}, \bound_{\out})$.
We use the bound $\bound_{\fresh}$ if $\ct$ is a freshly encrypted ciphertext.
Finally the bound $\bound_{\out}$
of the bootstrapped ciphertext $\ct_{\out}$ that is as follows
\begin{align*}
\bound_{\out} \leq& \frac{q^2}{Q^2} \cdot (\bound_{F} + \bound_{\BR} + \bound_{KS}) \\
 &+ \frac{2}{3} + \frac{2}{3} \cdot \hamming(\vec{s}_F) \cdot (\norm{\Var(\vec{s}_F)}_{\infty} + \norm{\E(\vec{s}_F)}_{\infty}^2), 
\end{align*}
for $Q \neq q$, and $\bound_{\out} \leq \bound_{F} + \bound_{\BR} + \bound_{KS}$ for $Q=q$,
where 
\begin{itemize}
\item $\bound_{F} \leq (\frac{1}{3}) \cdot \deg  \cdot \ell_{\boot} \cdot  \basis_{\boot}^2  \cdot \big(\bound_{\BR} +  \bound_{P}\big)$

\item $\bound_{\BR} \leq (\frac{2}{3}) \cdot n \cdot u \cdot  \deg \cdot \ell_{\RGSW} \cdot \basis_{\RGSW}^2 \cdot \bound_{\brKey}$,

\item $\bound_{KS} \leq (\frac{1}{3}) \cdot \deg \cdot \ell_{\keySwitchKey} \cdot \basis_{\keySwitchKey}^2 \cdot \bound_{\keySwitchKey}$, and
 
\item $\bound_{P} \leq (\frac{1}{3}) \cdot \deg \cdot \ell_{\packKey} \cdot \basis_{\packKey}^2 \cdot \bound_{\packKey}$. 
 
\end{itemize}

\end{theorem}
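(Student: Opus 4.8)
The plan is to prove the theorem by composing the correctness statements of the sub-routines invoked in $\Bootstrap$ (Figure~\ref{figure:functional-bootstrapping}), propagating both the plaintext and the variance bound through its three phases. The skeleton is: (i) a modulus switch producing $\ct_\deg$ whose phase $\tilde\msg$ encodes $\msg$ and whose ``half'' of $\ZZ_{2\deg}$ (namely $[0,\deg)$ versus $[\deg,2\deg)$) drives the selection; (ii) $\ell_{\boot}$ ``sign'' blind rotations that output, gadget-decomposed in base $\basis_{\boot}$, an encrypted bit flagging that half; (iii) building the accumulator $\acc_F$ carrying the rotation polynomial designed for that half, blind-rotating it once more, and finalising by sample extraction, key switching, and a second modulus switch.

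First I would apply Lemma~\ref{lemma:correctness-modulus-switching} to get $\ct_\deg\in\LWE_{\bound_\deg,n,2\deg}(\vec{s},\Delta_{2\deg,t}\cdot\msg)$ with the stated $\bound_\deg$ (and $\bound_\deg=\bound_\ct$ when $\deg=2q$), so that $\tilde\msg=\Phase(\ct_\deg)=\Delta_{2\deg,t}\cdot\msg+e_\deg$; the hypothesis $\bound_\deg\le\deg/t$ is exactly what keeps $e_\deg$ small enough to guarantee $\round{\tilde\msg}_t^{2\deg}=\msg$ and, more importantly, that $\tilde\msg$ lies in the half of $\ZZ_{2\deg}$ anticipated when $\preRotPoly_0,\preRotPoly_1$ were constructed. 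Next, for each $i\in[\ell_{\boot}]$ the trivial accumulator $\acc_i=[\basis_{\boot}^{i-1}\cdot\sgnRotPoly_b\cdot\tfrac{\Delta_{Q,t}}{2},0]^\top$ is a noiseless RLWE sample, so Lemma~\ref{lemma:correctness-tfhe-blind-rotate} gives $\acc_{\BR,i}\in\RLWE_{\bound_{\BR},\deg,Q}(\fr{s},\cdot)$ with message $\basis_{\boot}^{i-1}\cdot\tfrac{\Delta_{Q,t}}{2}\cdot\sgnRotPoly\cdot X^{\tilde\msg}$ (the premultiplication by $X^{b_\deg}$ together with the rotation by $X^{-\vec{a}_\deg^\top\vec{s}}$ performed by $\BlindRotate$ reconstituting the full rotation by $X^{\tilde\msg}$), with $\bound_{\BR}$ as stated since the accumulator is noise-free. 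The identity $X^\deg\equiv-1$ shows the constant coefficient of $\sgnRotPoly\cdot X^{y}$ is $+1$ on $[0,\deg)$ and $-1$ on $[\deg,2\deg)$, so by Lemma~\ref{lemma:correctness-sample-extraction} and the explicit additive shift we obtain $\acc_{\vec{c},i}\in\LWE_{\bound_{\BR},\deg,Q}(\vec{s}_F,\Delta_{Q,t}\cdot b\cdot\basis_{\boot}^{i-1})$ for a single bit $b\in\{0,1\}$ flagging the half of $\tilde\msg$, consistently over all $i$.

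Phase II is one invocation of Lemma~\ref{lemma:correctness-accumulator-builders} with modulus $Q$, source dimension $\deg$, source key $\vec{s}_F$, ring key $\fr{s}$, base $\basis_{\boot}$ (hence $\ell=\ell_{\boot}$), the $\acc_{\vec{c},i}$ in the role of the input LWE samples ($\bound_\acc=\bound_{\BR}$) and $\packKey$ in that of the key switching key ($\bound_{\packKey},\basis_{\packKey},\ell_{\packKey}$); since the inputs carry $b$ in base $\basis_{\boot}$, it returns $\acc_F\in\RLWE_{\bound_F,\deg,Q}(\fr{s},\cdot)$ whose message is, up to the fixed scalar $\Delta_{Q,t}$, the polynomial $\preRotPoly'_{b}\in\{\preRotPoly_0\cdot X^{b_\deg},\preRotPoly_1\cdot X^{b_\deg}\}$ prepared for that half, with $\bound_F\le\tfrac13\deg\,\ell_{\boot}\,\basis_{\boot}^2(\bound_{\BR}+\bound_P)$ and $\bound_P$ as displayed (the factor $\deg$ surviving because the $\preRotPoly'_i$ are not monomials). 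In Phase III the second $\BlindRotate$ reuses the same $\ct_\deg$, so it again rotates by $X^{\tilde\msg}$; by the defining property of $\preRotPoly_0,\preRotPoly_1$ together with $\round{\tilde\msg}_t^{2\deg}=\msg$, the constant coefficient of the resulting message is $\Delta_{Q,t'}\cdot f(\msg)$. Chaining Lemma~\ref{lemma:correctness-tfhe-blind-rotate} (adds $\bound_{\BR}$), Lemma~\ref{lemma:correctness-sample-extraction} (no change), Lemma~\ref{lemma:correctness-lwe-to-rlwe-packing} for the $\KeySwitch$ over $\ZZ_Q$ (adds $\bound_{KS}$), and Lemma~\ref{lemma:correctness-modulus-switching} for the $Q\to q$ step (multiplies by $q^2/Q^2$ and adds the two rounding terms) then yields $\ct_{\out}\in\LWE_{\bound_{\out},n,q}(\vec{s},\Delta_{q,t'}\cdot f(\msg))$ with the claimed $\bound_{\out}$; for $Q=q$ the correction factor and additive term vanish. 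The formula for $\bound_\deg$, and $\bound_\ct\le\max(\bound_{\fresh},\bound_{\out})$ (accounting for the case that $\ct$ was itself previously bootstrapped), are read off directly.

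The error bookkeeping above is routine given the preliminary lemmas. I expect the main obstacle to lie in Phase~I and in the interface between the phases: one must argue rigorously that the ``sign'' sub-bootstrapping yields precisely the indicator of the half containing $\tilde\msg$ (not, say, an indicator keyed to $\msg$), tracking the sign conventions of $\sgnRotPoly$, of the constant-coefficient extraction, and of the additive shift so that the bit passed to $\BuildAcc$ makes $\PubMux$ select exactly the rotation polynomial that $\ConstPoly$ prepared for that half; and one must check that the interacting scalings --- the $\tfrac{\Delta_{Q,t}}{2}$ in the trivial accumulators, the $\Delta_{Q,t}$ factor internal to $\PubMux$, the $\Delta_{Q,t'}$ inside the rotation polynomials, and the $q/Q$ of the final modulus switch --- compose to leave exactly $\Delta_{q,t'}\cdot f(\msg)$. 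The quantitative hypothesis $\bound_\deg\le\deg/t$ should be shown to be precisely what prevents $e_\deg$ from pushing $\tilde\msg$ across the $\deg$-boundary relative to the half for which the polynomials were designed, which is the point at which correctness would break for a weaker bound.
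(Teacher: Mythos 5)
Your proposal is correct and follows essentially the same route as the paper's own proof: apply Lemma~\ref{lemma:correctness-modulus-switching} to obtain $\ct_\deg$; show via Lemma~\ref{lemma:correctness-tfhe-blind-rotate} and Lemma~\ref{lemma:correctness-sample-extraction} (with the additive shift by $\basis_{\boot}^{i-1}\Delta_{Q,t}/2$) that the $\ell_{\boot}$ sign rotations produce $\acc_{\vec{c},i}$ carrying the half-indicator bit in base $\basis_{\boot}$; instantiate Lemma~\ref{lemma:correctness-accumulator-builders} with $\bound_\acc=\bound_{\BR}$ and $\packKey$ to get $\acc_F$ with the stated $\bound_F$; and then chain blind rotation, sample extraction, Lemma~\ref{lemma:correctness-lwe-to-rlwe-packing}, and a final modulus switch to obtain $\bound_{\out}$. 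The sign-convention and scaling bookkeeping you defer to the end is indeed the delicate part (the paper's own proof is terse and a bit loose there as well — it drops the $\Delta_{Q,t}$ factor that the $\PubMux$/$\BuildAcc$ lemmas introduce and asserts the $\inter$ index without retracing the sign through $\sgnRotPoly$ and the additive shift), but the decomposition into phases and the noise propagation are identical to the paper's argument.
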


\begin{proof} 
Let us denote  $\Phase(\ModSwitch(\ct, q, 2 \deg)) = \tilde{\msg} = \Delta_{2 \cdot \deg, t} \cdot \msg + e \mod 2 \deg$. 
From correctness of modulus switching we have $\norm{\Var(e)}_{\infty} \leq \bound_{\deg} \leq (\frac{2 \cdot \deg}{q})^2 \cdot \bound_{\ct} + \frac{2}{3}  + \frac{2}{3} + \hamming(\vec{s}) \cdot (\norm{\Var(\vec{s})}_{\infty} + \norm{\E(\vec{s})}_{\infty}^2)$. 
The first $\ell_{\boot}$ blind rotations compute powers of $\basis_{\boot}$ of the sign function.
Specifically from the correctness of blind rotation we have 
$\acc_{\BR, i} \in \RLWE_{\bound_{\BR}, \deg, Q}(\fr{s}, \basis_{\boot}^{i-1} \cdot \frac{\Delta_{Q, t}}{2} \cdot \sgnRotPoly \cdot X^{\tilde{\msg}})$.
Hence, if $\tilde{\msg} \mod 2 \deg \in [0, \deg)$, then the constant coefficient
is equal to $\basis_{\boot}^{i-1} \cdot \frac{\Delta_{Q, t}}{2} \in \ZZ_Q$.
If $\tilde{\msg} \mod 2  \deg \in [\deg, 2 \cdot \deg)$ then
the constant coefficient is equal to $- \basis_{\boot}^{i-1} \cdot \frac{\Delta_{Q, t}}{2} \in \ZZ_Q$.
Equivalently, the constant coefficient is $\sgn(\msg) \cdot \basis_{\boot}^{i-1} \cdot \frac{\Delta_{Q, t}}{2} \in \ZZ_Q$.
Note that $\sgn(\tilde{\msg}) \mod 2\cdot N = \sgn(\msg) \mod t$.
From correctness of sample 
extraction and additive homomorphism of LWE samples we have 
$\acc_{\vec{c}, i} \in \LWE_{\bound_{\BR}, \deg, Q}(\vec{s}_{\sgn}, \msg_{\BR, i})$,
where   
$\bound_{\BR}$ is the error stemming from the blind rotation, and $\msg_{\BR, i} = \sgn(\msg) \cdot \basis_{\boot}^{i-1} \cdot \frac{\Delta_{Q, t}}{2} + \basis_{\boot}^{i-1} \cdot \frac{\Delta_{Q, t}}{2}$.
Note that if $\sgn(\msg) = 1$, then $\msg_{\BR, i} = \basis_{\boot}^{i-1} \cdot \Delta_{Q, t}$,
and if $\sgn(\msg) = -1$, then $\msg_{\BR, i} = 0$.

 The next step is to create a new accumulator holding the polynomial 
 $\preRotPoly_{\inter} \in \R_{\deg, Q}$, where $\inter = 0$ if $\msg + e  \in [0,\deg)$ and
 $\inter=1$ if $\msg + e  \in [\deg,2\deg)$.
 From correctness of the accumulator builder we have the 
 $\acc_{F} \in \RLWE_{\bound_{F}, \deg, Q}(\fr{s}, \preRotPoly_{\inter})$,
  where 
  \begin{align*}
  \bound_{F} \leq (\frac{1}{3}) \cdot \deg \cdot \ell_{\boot} \cdot  \basis_{\boot}^2  \cdot \big(\bound_{\BR} +  \bound_{P}\big)
  \end{align*} 
 with $\bound_{P}$ being the bound on the error induced by the LWE to RLWE key switching algorithm.
%We will specify $\bound_{P}$ later depending on which key switching method is used.

From correctness of blind rotation we have that
$\acc_{\BR, F}$ $\in$ $\RLWE_{\bound_{\BR, F}, \deg, Q}(\fr{s}$,  $\preRotPoly_{\inter} \cdot X^{\tilde{\msg} \mod 2\deg})$.
Hence, from correctness of sample extraction and given that $\Delta_{Q, t'} \cdot f(\msg)$ $=$ $\coefs(\preRotPoly_{\inter} \cdot X^{\tilde{\msg} \mod 2\deg})[1]$,
we have $\vec{c}_{Q}$ $\in$ $\LWE_{\bound_{\vec{c}, Q}, \deg, Q}(\vec{s}_{F}$, $\Delta_{Q, t'} \cdot f(\msg))$,
where $\bound_{\vec{c}, Q} \leq \bound_{F} + \bound_{\BR}$.
% where $\bound_{\BR, F}$ is the bound induced by the blind rotation algorithm.

From correctness of key switching
we have that $\vec{c}_{Q, \keySwitchKey} \in \LWE_{\bound, n, q}(\vec{s}, \Delta_{Q, t'} \cdot f(\msg))$,
where 
 $\bound_{\vec{c}, Q, \keySwitchKey} \leq \bound_{\vec{c}, Q} + \bound_{KS}$, where $\bound_{KS}$ is the bound
 induced by the key switching procedure.
From correctness of modulus switching we have that 
$\ct_{\out} \in \LWE_{\bound_{\out}, n, q}(\vec{s}, \Delta_{q, t'} \cdot f(\msg))$,
where  
%\begin{align*}
%\bound_{\out} \leq (\frac{q^2}{Q^2}) \cdot \bound_{\vec{c}, Q, \keySwitchKey}  + \frac{2}{3} + \frac{2}{3} \cdot \hamming(\vec{s}_{F}) \cdot  \norm{\Var(\vec{s}_{F}) + \E(\vec{s}_{F})}_{\infty}.
%\end{align*} 
%To summarize we have
\begin{align*}
\bound_{\out} \leq& (\frac{q^2}{Q^2}) \cdot (\bound_{F} + \bound_{\BR} + \bound_{KS}) \\
 &+ \frac{2}{3} + \frac{2}{3} \cdot \hamming(\vec{s}_{F}) \cdot (\norm{\Var(\vec{s}_{F})}_{\infty} + \norm{\E(\vec{s}_{F})}_{\infty}^2). 
\end{align*}

%TODO Its kindof weird that we consider only 
Recall that the errors induced
by the blind rotation is $\bound_{\BR} \leq (\frac{2}{3}) \cdot n \cdot u \cdot  \deg \cdot \ell_{\RGSW} \cdot \basis_{\RGSW}^2 \cdot \bound_{\brKey}$.
Similarly, recall that the error induced by the LWE to RLWE algorithm is 
$\bound_{P} \leq (\frac{1}{3}) \cdot \deg  \cdot \ell_{\packKey} \cdot \basis_{\packKey}^2 \cdot \bound_{\packKey}$.
% for standard key switching and 
%$\deg \cdot \ell_{\packKey} \cdot \bound_{\packKey}$ for pre-computed key switching.
For LWE to LWE key switching the error is $\bound_{KS} \leq (\frac{1}{3}) \cdot \deg \cdot \ell_{\keySwitchKey} \cdot \basis_{\keySwitchKey}^2 \cdot \bound_{\keySwitchKey}$.
% for the standard and $\deg \cdot \ell_{\keySwitchKey} \cdot \bound_{\keySwitchKey}$ for the precomputed version.
\end{proof}

\subsection{Construction of the Rotation Polynomial}\label{sec:setting-rotation-poly}
 
In Figure~\ref{figure:functional-bootstrapping} we describe how to build the polynomials for our bootstrapping to compute arbitrary functions, and by Theorem~\ref{thm:rotation-polynomial-correctness}, we show the correctness of our methodology. 
We want to evaluate $\BootsFunction: \ZZ_t \mapsto \ZZ_Q$. 
Informally, we partition the rotation polynomials into chunks where each chunk represents a value in the domain of $\BootsFunction$. 
The chunk size is determined by the error bound of the bootstrapped LWE sample.
 Note that $\BootsFunction$ already assumes that the
domain is scaled to fit the ciphertext modulus. That is, if we want to compute $f:\ZZ_t \mapsto \ZZ_{t'}$ for some $t' \in \NN$,
then we set the lookup table $\BootsFunction[x] = \Delta_{Q, t'} \cdot f(x)$ for all $x \in \ZZ_{t}$.

%TODO The definition version of how to build the rotation polynomaials
%\begin{definition}[Construction of the Rotation Polynomial] 
%The construction of our rotation polynomial is as follows.
%	\begin{description}
%		\item[$\ConstPoly(\BootsFunction, \deg)$] Takes as an input a
%		lookup table $\BootsFunction:\ZZ_t \rightarrow \ZZ_Q$ that maps elements
%		in $\ZZ_t$ to elements in $\ZZ_Q$, and the degree that defines the ring $\R_{\deg, Q}$.
%		\begin{enumerate}
%		\item For $\msg \in [0,t-1]$ and for $e \in [-\roundUp{\frac{\deg}{t}}, \roundUp{\frac{\deg}{t}}]$ do
%		\begin{enumerate}
%		\item Set $y \exec \round{\frac{2 \cdot N}{t}} \cdot \msg + e \pmod{2N}$.
%		\item If $y=0$, set $\vec{p}_0[1] \exec \BootsFunction[\msg+1]$.
%		\item If $y \in [1,\deg-1]$, set $\vec{p}_0[\deg - y + 1] \exec -\BootsFunction[\msg+1] \in \ZZ_Q$.
%		\item Set $y' \exec y - \deg$.
%		\item If $y = \deg$, set $\vec{p}_1[1] \exec -\BootsFunction[\msg+1]$.
%		\item If $y \in [\deg+1,2\cdot \deg-1]$, set $\vec{p}_1[\deg - y' + 1] \exec \BootsFunction[\msg+1] \in \ZZ_Q$.
%		\item Set $\preRotPoly_0 = \sum_{i=1}^{\deg} \vec{p}_0[i] \cdot X^{i-1}$ and $\preRotPoly_1 = \sum_{i=1}^{\deg} \vec{p}_1[i] \cdot X^{i-1}$. 
%		\item Return $(\preRotPoly_0, \preRotPoly_1)$.
%		\end{enumerate}
%\end{enumerate}				
% \end{description}
%\end{definition}

\begin{theorem}\label{thm:rotation-polynomial-correctness}
Let $f:\ZZ_t \mapsto \ZZ_Q$ be a function with a lookup table $\BootsFunction$ such 
that for all $x \in \ZZ_t$, we have that $\BootsFunction[x+1] = f(x)$.
Let $(\preRotPoly_0, \preRotPoly_1) \exec \ConstPoly(\BootsFunction, \deg)$ for $\deg \in \NN$.
Let $y = \round{\frac{2 \cdot \deg}{t}} \cdot \msg + e$, for $\msg \in \ZZ_t$ and $e \in [-\roundUp{\frac{\deg}{t}}, \roundUp{\frac{\deg}{t}}]$.
Then
\begin{itemize}
\item $\coefs(\preRotPoly_0 \cdot X^y)[1] = f(\msg)$ for $y \in [0,\deg-1]$, and
\item $\coefs(\preRotPoly_1 \cdot X^y)[1] = f(\msg)$ for $y \in [\deg,2 \cdot \deg-1]$.
\end{itemize}

\end{theorem}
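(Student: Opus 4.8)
The plan is to reduce everything to one elementary fact about the ring $\R_{\deg, Q} = \ZZ_Q[X]/(X^{\deg}+1)$: since $X^{\deg} \equiv -1$, for any $p = \sum_{i=1}^{\deg} \vec{p}[i]\,X^{i-1} \in \R_{\deg, Q}$ and any exponent $y \in [0, 2\deg-1]$, the constant coefficient $\coefs(p \cdot X^{y})[1]$ reads off exactly one coefficient of $p$, up to a sign: the monomial $X^{i-1+y}$ reduces to a multiple of $X^{0}$ only when $i-1+y \in \{0,\deg,2\deg\}$, and then the induced scalar is $(-1)^{(i-1+y)/\deg}$. Consequently the theorem is proven by a finite case split on where $y$ sits relative to $\deg$, matched branch by branch against $\ConstPoly$.

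First I would split on $y \in [0,\deg-1]$ versus $y \in [\deg, 2\deg-1]$; this is precisely the split deciding whether $\preRotPoly_0$ or $\preRotPoly_1$ is relevant, mirroring the two families of assignments in $\ConstPoly$. In the first case $\preRotPoly_0 \cdot X^{y} = \sum_i \vec{p}_0[i]\,X^{i-1+y}$ with $i-1+y$ ranging over $[0,2\deg-2]$, so the constant term comes only from $i-1+y=0$ (forcing $y=0$, $i=1$, scalar $+1$) or $i-1+y=\deg$ (forcing $i=\deg-y+1$, scalar $-1$). Matching this against $\ConstPoly$: the iteration producing this particular $y$ assigns $\vec{p}_0[1] = \BootsFunction[\msg+1] = f(\msg)$ when $y=0$ and $\vec{p}_0[\deg-y+1] = -\BootsFunction[\msg+1] = -f(\msg)$ when $y \in [1,\deg-1]$; feeding these into the sign rule yields $\coefs(\preRotPoly_0 \cdot X^{y})[1] = f(\msg)$ in both sub-cases. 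The case $y \in [\deg,2\deg-1]$ is symmetric: now $i-1+y \in [\deg, 3\deg-2]$, the constant term comes from $i-1+y=\deg$ ($y=\deg$, $i=1$, scalar $-1$) or $i-1+y=2\deg$ ($i=\deg-y'+1$ with $y'=y-\deg$, scalar $+1$), and $\ConstPoly$ sets $\vec{p}_1[1] = -f(\msg)$ and $\vec{p}_1[\deg-y'+1] = f(\msg)$ respectively, again collapsing to $\coefs(\preRotPoly_1 \cdot X^{y})[1] = f(\msg)$. Keeping the two sign conventions straight ($X^{\deg}=-1$ as opposed to $X^{2\deg}=1$) is the bit to carry out with care, but is otherwise a routine substitution.

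The one place that needs genuine attention is checking that the coefficient extracted above is the value deposited by the iteration handling the pair $(\msg, e)$ itself, i.e. that no subsequent iteration $(\msg', e')$ overwrites it with a conflicting entry. A coefficient of $\vec{p}_b$ at index $i$ is touched only by pairs whose exponent satisfies $i = \deg - y + 1$ (or $i = \deg - y' + 1$), so two pairs clash at a coefficient only when they yield the same $y \pmod{2\deg}$; it therefore suffices to argue that $(\msg, e) \mapsto \round{\frac{2\deg}{t}}\cdot \msg + e \pmod{2\deg}$ never sends two pairs with distinct message parts and admissible errors $e \in [-\roundUp{\frac{\deg}{t}}, \roundUp{\frac{\deg}{t}}]$ to the same value — equivalently, that the windows of radius $\roundUp{\frac{\deg}{t}}$ centered at the points $\round{\frac{2\deg}{t}}\cdot \msg$ stay disjoint (or that any unavoidable coincidence occurs only between pairs sharing the same $f$-value). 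I would discharge this explicitly from the relation between the spacing $\round{\frac{2\deg}{t}}$ and the window width $2\roundUp{\frac{\deg}{t}}+1$, after which the conclusion is immediate: the relevant coefficient of $\preRotPoly_b$ equals $\pm f(\msg)$, and the sign rule established above turns it into $f(\msg)$ exactly. Everything else is direct computation.
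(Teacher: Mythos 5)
Your algebraic core is correct and is essentially the paper's own argument: expand $\preRotPoly_b \cdot X^{y}$, identify the single index $i$ for which $X^{i-1+y}$ reduces to a constant (namely $i-1+y \in \{0,\deg,2\deg\}$), track the sign $(-1)^{(i-1+y)/\deg}$ coming from $X^{\deg}\equiv -1$, and match each sub-case against the corresponding assignment in $\ConstPoly$. The paper performs exactly the same case split ($y=0$, $y\in[1,\deg-1]$, $y=\deg$, $y\in(\deg,2\deg-1]$) by writing the rotated polynomial out coefficient-by-coefficient; you derive the same index and sign more compactly from the ring relation. Either way the two sign flips cancel and the constant coefficient is $f(\msg)$.

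Your third paragraph, however, identifies a real loose end that the paper's proof silently skips, and your plan for closing it does not survive contact with the parameters. The paper reads off $\vec{p}_b[\deg - y + 1]$ as if the iteration of $\ConstPoly$ indexed by the given pair $(\msg, e)$ were the last one to write that cell, but $\ConstPoly$ overwrites cells, and the windows you want to show disjoint are in fact not disjoint: they touch at every boundary. Concretely, when $t \mid \deg$ one has $\round{2\deg/t} = 2\deg/t$ and $\roundUp{\deg/t} = \deg/t$, so the pair $(\msg,\ \roundUp{\deg/t})$ and the pair $(\msg+1,\ -\roundUp{\deg/t})$ produce the same exponent $y = \frac{2\deg}{t}\msg + \frac{\deg}{t}$, and the later iteration (for $\msg+1$) wins; the stored value is $f(\msg+1)$, which need not equal $f(\msg)$. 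More generally $\round{2\deg/t} \le 2\roundUp{\deg/t}$, so the inequality you would need (spacing strictly exceeds the discrete window width $2\roundUp{\deg/t}+1$) never holds. Your escape clause "or that any unavoidable coincidence occurs only between pairs sharing the same $f$-value" is also false for a generic $f$. The statement is therefore only correct if one excludes the extremal error $|e| = \roundUp{\deg/t}$ (or, equivalently, reads $\msg$ as $\round{y\cdot t / (2\deg)}$, which resolves the tie toward the message whose iteration writes last). Your instinct to verify uniqueness was good and goes beyond what the paper does, but the disjointness you set out to "discharge explicitly" fails at exactly the boundary, so as written the plan would stall at that step.
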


\begin{proof}[Correctness of the Polynomial Construction]
Let us first consider the case $y = \round{\frac{2 \cdot \deg}{t}} \cdot \msg + e \in [0,\deg-1]$.
In this case we  have
\begin{align*}
\preRotPoly_0 \cdot X^{y} = &- \sum_{i=1}^{y} \vec{p}_0[\deg - y + i] \cdot X^{i-1} \\
   &+  \sum_{i=y+1}^{\deg} \vec{p}_0[i-y] \cdot X^{i - 1} \in \R_{\deg, Q}
\end{align*}
Hence, for $y=0$, we have $\coefs(\preRotPoly_0 \cdot X^{y})[1] = \vec{p}_0[1] = \BootsFunction[\msg+1] = f(\msg)$,
and $\coefs(\preRotPoly_0 \cdot X^{y})[1] = - \vec{p}_0[\deg - y + 1] = -(-\BootsFunction[\msg+1]) = f(\msg)$ for $y \in [1,\deg-1]$.

The case $y \in [\deg,2 \cdot \deg-1]$ is analogical.
Let us first denote $y' + \deg = y$.
In particular, we have
\begin{align*}
\preRotPoly_1 \cdot X^{y} &= \preRotPoly_1 \cdot X^{y' + \deg} = - \preRotPoly_1 \cdot X^{y'} \\
&=  -\big( - \sum_{i=1}^{y'} \vec{p}_1[\deg - y' + i] \cdot X^{i-1} \\
 &~~~~~~~+ \sum_{i=y'+1}^{\deg} \vec{p}_1[i-y'] \cdot X^{i - 1} \big) \in \R_{\deg, Q}
\end{align*}
Hence, for $y = \deg$, we have $y'=0$ and $\coefs(\preRotPoly_1 \cdot X^{y})[1] = - \vec{p}_1[1] = - (-\BootsFunction[\msg+1]) = f(\msg)$.
For   $y \in (\deg,2 \cdot \deg-1]$ we have $y' \in (0, \deg -1]$
and $\coefs(\preRotPoly_1 \cdot X^{y})[1] =  \vec{p}_1[\deg - y' + 1] = \BootsFunction[\msg+1] = f(\msg)$.
\end{proof}

\subsection{Application of Functional Bootstrapping}\label{sec:applications}

%In this section, we discuss how to evaluate useful multivariate functions with functional bootstrapping, and extend the plaintext space.
%Furthermore, we briefly show how to compute two-variate functions 
%like multiplication and how to extend the arithmetic to arbitrary length integer operations in Subsection~\ref{sec:long-arithmetic}.  

Below we discuss a few simple tricks on how to 
use our bootstrapping algorithm efficiently.

%\subsubsection{Full Domain Functional Bootstrapping.}
%In Section~\ref{sec:functional-bootstrapping} we showed how to compute any univariate function with our full domain functional bootstrapping algorithm.
% Below we discuss a few simple tricks on how to 
%use our bootstrapping algorithm efficiently.
%leverage the bootstrapping to compute some useful two-variable functions and extend the arithmetic.
 
%\begin{description}

\textbf{Computing in $\ZZ_t$.} Note that with the full domain functional bootstrapping algorithm we can correctly compute
affine functions without the need to bootstrap immediately.
Specifically, we use the linear homomorphism of LWE samples (see Lemma~\ref{lemma:linear-homo-glwe})
to compute on the ciphertext $\vec{c}_Q$ at Step 13 of $\Bootstrap$ at Figure~\ref{figure:functional-bootstrapping}.
In Section~\ref{performance} we evaluate correctness for this particular setting as in this case
computing 
the linear homomorphism will not increase the error that steams from the key switching key.

To compute the product $x \cdot y$, we compute
 $\big(\frac{(x+y)}{2}\big)^2 - \big(\frac{(x-y)}{2}\big)^2$. 
Therefore evaluating the product requires only two functional bootstrapping invocations, with compute the square, and the addition/subtraction induces only a small error when done outside the bootstrapping procedure.
Similarly, we can compute $\max(x, y)$ by as $\max(x - y, 0) + y$, which costs only one functional bootstrapping. 

\textbf{Multiple Univariate Functions With the same Input.} For efficiency it is important to note, that we might amortize the cost of computing multiple functions
on a encrypted ciphertext. After the first bootstrapping operation, we may reuse the
$\acc_{\vec{c}, i}$ values to compute the next function on the same input. 
%This observation is particularly important for decomposition of the encrypted integers.

\textbf{Extending the Arithmetic.} Finally, a powerful tool to extend the size of the arithmetic we can leverage the Chinese Reminder theorem and the fact that the ring $\ZZ_t$ for $t = \prod_{i=1}^m t_i$ where the $t_i$ are pairwise co-prime is isomorphic to
 the product ring $\ZZ_{t_1} \times \dots \times \ZZ_{t_m}$. This way we can handle modular arithmetic for a large modulus, while having cheap leveled linear operations, and one modular multiplication per factor. 
Note that we can compute arbitrary polynomials in $\ZZ_t$ in the CRT representation.
To set the rotation polynomials for each component of the ciphertext vector we do as follows.
For each number $x \in \ZZ_t$, we compute its CRT representation $x_1, \dots, x_m$.
Then we compute the polynomial $y = P(x)$ and $y$'s CRT representation $[y_i]_{i=1}^m$. 
We set the  rotation polynomials for the $i$th bootstrapping algorithm to map $x_i$ to $y_i$,
as described in Section~\ref{sec:setting-rotation-poly}. 
Furthermore, additions and scalar multiplications can still be performed without bootstrapping.

\section{Parameters, Performance, and Tests}\label{performance}

\begin{table*}[t] 
	\centering 

	\begin{tabular}{| c | c | c | c | c | c | c | c | c | c |}
		\hline
 Set  & $\FDFB$:80:6 & $\FDFB$:100:6   & $\FDFB$:80:7 & $\FDFB$:100:7 & $\FDFB$:80:8 & $\FDFB$:100:8 & $\TFHE$:100:7 & $\TFHE$:80:2 & $\TFHE$:100:2\\
		\hline  	
 $n$ &  $700$ & $1050$  &  $700$  &  $1100$  &   $700$  &  $1100$ &  $1500$  &  $424$ & $525$\\ 
 $\deg$ & $2^{11}$  & $2^{11}$     & $2^{12}$ & $2^{12}$     & $2^{13}$  & $2^{13}$ & $2^{12}$  & $2^{10}$ & $2^{10}$ \\ 
 $q$ &  $2^{12}$ & $2^{12}$    &  $2^{13}$ &   $2^{13}$ & $2^{14}$ &   $2^{14}$ & $2^{13}$ &  $2^{11}$  & $2^{11}$  \\
 $Q$ & $Q_1$  & $Q_1$    &  $Q_2$   & $Q_2$  & $Q_2$   & $Q_2$  &  $Q_3$ & $Q_4$ & $Q_4$ \\ 
 $\basis_{\boot}$ & $2^{11}$  & $2^{11}$     & $2^{11}$   &  $2^{11}$ & $2^{8}$   &  $2^{8}$ & N/A & N/A &  N/A  \\
 $\basis_{\RGSW}$ &  $2^{11}$ & $2^{11}$    & $2^{9}$  &  $2^{9}$ &  $2^{9}$  &  $2^{9}$ & $2^{16}$  &  $2^{11}$ & $2^{11}$  \\ 
 $\basis_{\keySwitchKey}$ &   $2^{6}$  & $2$  &  $2^{4}$   & $2$ &  $2^{4}$   & $2$ &  $2$   & $2^{4}$ & $2^{4}$  \\
 $\basis_{\packKey}$ &  $2^{13}$  & $2^{13}$  &  $2^{13}$ & $2^{13}$ &   $2^{13}$ & $2^{13}$ &  N/A & N/A &  N/A  \\ 
 $\stddev$ & $2^{38}, 3.2$ & $2^{41}, 3.2$  &  $2^{39}, 3.2$  & $2^{41}, 3.2$ & $2^{39}, 3.2$  & $2^{41}, 3.2$ & $2^{18}, 3.2$ & $2^{16}, 3.2$ & $2^{16}, 3.2$  \\
  $\vec{s}$ & $[0, 1], 64$ & $[0, 1], 64$  &  $[0, 1], 64$   &  $[0, 1], 64$  &  $[0, 1], 64$ &  $[0, 1], 64$  &  $[0, 1], 64$ & $[0, 1], -$ & $[0, 1], -$  \\
 $\fr{s}$ & $\ZZ_Q^{\deg}, -$ & $\ZZ_Q^{\deg}, -$  &  $\ZZ_Q^{\deg}, -$  & $\ZZ_Q^{\deg}, -$ &  $\ZZ_Q^{\deg}, -$ & $\ZZ_Q^{\deg}, -$ &  $\ZZ_Q^{\deg}, -$ &  $\ZZ_Q^{\deg}, -$ & $\ZZ_Q^{\deg}, -$  \\
 \hline
\end{tabular}  
\caption{Parameter Sets. The modulus $Q_1 = 2^{62} - 65535$, $Q_2 = 2^{63} - 278527$, $Q_3 = 2^{48} - 163839$, and  $Q_4 = 2^{32} - 139263$.
 For all parameters sets we set the error variance of fresh (R)LWE samples to $10.24$ which corresponds to standard deviation $3.2$. The key $\vec{s}$ is always binary; thus $u=1$.}\label{tbl:parameter-sets}

\end{table*}

\begin{table*}%[!htb]
 \centering 
 \begin{tabular}{| c | c | c | c | c | c | c | c |}
 \hline
  \diagbox{Set}{$\log_2(t)$}   & 6 & 7 &  8 &  9 & 10 &  11  &  Key \\  
 \hline   
$\FDFB$:80:6 &  $2^{-30}$, $2^{-23}$  &   $2^{-9}$, $0.05$ & $0.12$, $0.57$ & $0.44$, $0.88$ & $0.70$, $0.96$ & $0.84$, $0.99$ & 93\% \\ 
$\FDFB$:100:6 &   $2^{-32}$, $2^{-21}$  &   $2^{-10}$, $0.09$ & $0.11$, $0.63$ &  $0.43$, $0.90$  & $0.69$, $0.97$ & $0.84$, $0.99$ &   97\% \\  
$\FDFB$:80:7 &    $0.0$, $0.0$  &   $2^{-31}$, $2^{-21}$ & $2^{-9}$, $0.07$ &  $0.12$, $0.61$  & $0.44$, $0.89$ &  $0.69$, $0.97$ &  95\% \\ 
$\FDFB$:100:7 &    $0.0$, $0.0$  &  $2^{-31}$, $2^{-18}$ & $2^{-9}$, $0.13$ &  $0.12$, $0.67$  & $0.44$, $0.91$  &  $0.69$, $0.97$ &  95\% \\ 
$\FDFB$:80:8 &    $0.0$, $0.0$  &   $0.0$, $0.0$ & $2^{-26}$, $2^{-17}$ &  $2^{-8}$, $0.12$  & $0.16$, $0.66$ &  $0.48$, $0.90$ &  67\% \\ 
$\FDFB$:100:8 &    $0.0$, $0.0$  &  $0.0$, $0.0$ & $2^{-26}$, $2^{-13}$ &  $2^{-8}$, $0.23$   & $0.16$, $0.75$  &  $0.48$, $0.93$ &  67\% \\ 
$\TFHE$:100:7 &    $0.0$  & $2^{-24}$ & $2^{-7}$ &  $0.18$  & $0.50$ & $0.73$ &   99\% \\ 
$\TFHE$:80:2 &    $0.57$  & $0.77$ & $0.88$ &  $0.94$  & $0.97$ &  $0.98$ &  14\% \\ 
$\TFHE$:100:2 &    $0.58$  &  $0.78$ & $0.89$ &  $0.94$  & $0.97$  & $0.98$ &  17\% \\ 
\hline
\end{tabular} 
\caption{Upper bound on then probability to have an error in the outcome of the bootstrapping algorithms.  For our full domain bootstrapping, we give two values. The first is the upper bound on the error probability for ciphertexts after bootstrapping. The second value is the upper bound for a ciphertext after bootstrapping and computing an affine function of size 784. We emphasize that the probabilities are upper-bounds, and the empirical correctness errors are much lower.}\label{tbl:parameter-sets-correctness}
\end{table*}

\begin{table*}%[!htb]
	\centering 
	\begin{tabular}{|c | c | c | c | c | c | c | c | c | c | c | c |}
		\hline
	 \multirow{2}{*}{Set} & \multirow{2}{*}{Time [ms]} & \multirow{2}{*}{PK [MB]} & \multirow{2}{*}{CT [KB]} & \multicolumn{3}{c|}{RLWE} & \multicolumn{4}{c|}{LWE}  \\
		& & & & uSVP & dec & dual  &  uSVP & dec & dual  &    Hyb. Primal \\ 
		\hline   
$\FDFB$:80:6 & 3655 &  737.13 & 1.40 & 129.5 & 183.2 & 135.7 & 87.8 & 303.8 &  82.7 & 81.6 \\ 
$\FDFB$:100:6 & 4838 &  1816.03 & 2.10  & 129.5 &  183.2 & 135.7 & 138.0 &  668.9 & 127.6 &  100.1  \\  
$\FDFB$:80:7 & 9278 &  2351.95 & 1.40 & 282.4 & 306.1 &  294.7  & 87.8 & 309.5 &  82.7 & 81.6 \\ 
$\FDFB$:100:7 & 13781 &  4624.31  & 2.20 & 282.4 & 306.1 &  294.7 &  138.2 &  644.3 & 129.0 &  100.0  \\ 
$\FDFB$:80:8 & 33117 &  7388 & 1.40 & 548.5 &  596.2 &  570.1  & 87.8 & 309.5 & 82.7 & 81.6 \\ 
$\FDFB$:100:8 & 45266 &  14318  & 3.00 & 548.5 &  596.2 &  570.1  & 138.2 & 644.3 & 129.0 & 100.0  \\ 
$\TFHE$:100:7 & 7757 &  921.91  & 2.43 & 236.9 & 255.5 &  247.4 &  139.6 &  322.7 & 128.9 & 100.8  \\ 
$\TFHE$:80:2 & 306 &  31.72  & 0.72 &  116.9  & 128.3 & 125.3 &  81.3 &  207.3 & 86.2 & 88.9   \\ 
$\TFHE$:100:2 & 321 &  25.62 & 0.58 & 116.9 &  128.3 & 125.3 &  100.0  & 264.3 & 107.1 &   111.0  \\ 
\hline
\end{tabular} 
 \caption{Complexity and security for the parameter sets.  
 We computed the size of the keys and ciphertexts by taking the byte size of every field element and counting the number of field elements.}\label{tbl:parameter-sets-size-and-security}
\end{table*}

\begin{table*}
	\centering
	\begin{tabular}{| c || c | c | c | c | c | c | c | c | c | c | c | c|}
		\hline
		    & \multicolumn{6}{c}{Evaluation Time MNIST-1-$f$} & \multicolumn{6}{c|}{Accuracy  MNIST-1-$f$} \\
	 \diagbox{Param.}{$\log_2(t)$}  & 6 & 7 & 8 & 9 & 10 & 11 & 6 & 7 & 8 & 9 & 10 & 11 \\
	  \hline
$\FDFB$:80:6  &  \multicolumn{6}{c|}{0.027} & 0.80 & 0.95 & 0.95 & 0.95 & 0.80 & 0.60 \\
$\FDFB$:100:6  &  \multicolumn{6}{c|}{0.11} & 0.75 & 0.90 & 0.90 & 0.90 & 0.80 & 0.75  \\
$\FDFB$:80:7 &  \multicolumn{6}{c|}{0.08} & 0.90 & 0.90 & 0.95 & 0.95 & 0.90 & 0.70 \\
$\FDFB$:100:7 &  \multicolumn{6}{c|}{0.35} & 0.90 & 0.90 & 0.95 & 0.90 & 0.90 & 0.80\\
$\FDFB$:80:8 &  \multicolumn{6}{c|}{0.17} & 0.85 & 0.90 & 0.95 & 0.90 & 0.95 & 0.90\\
$\FDFB$:100:8 &  \multicolumn{6}{c|}{0.76} & 0.80 & 0.95 & 0.90 & 0.95 & 0.95 & 0.90 \\
$\TFHE$:80:2 & 99.24 & 135.88 & 177.28 & 225.84 & 289.82 & 392.14  & 0.82 & 0.94 & 0.95 & 0.95 & 0.95 & 0.95 \\
$\TFHE$:100:2 & 122.18 & 165.9 & 215.44 & 274.44 & 351.42 & 460.94 & 0.82 & 0.94 & 0.95 & 0.95 & 0.95 & 0.95 \\
		\hline 
		    & \multicolumn{6}{c}{Evaluation Time  MNIST-2-$f$} & \multicolumn{6}{c|}{Accuracy  MNIST-2-$f$} \\ 
	  \hline
$\FDFB$:80:6  &  \multicolumn{6}{c|}{0.14} & 0.10 & 0.20 & 0.45 & 0.85 & 0.90 & 0.85 \\
$\FDFB$:100:6  &  \multicolumn{6}{c|}{0.23} & 0.10 & 0.20 & 0.70 & 0.95 & 0.95 & 0.90  \\
$\FDFB$:80:7 &  \multicolumn{6}{c|}{0.37} & 0.15 & 0.25 & 0.85 & 0.90 & 0.90 & 0.90 \\
$\FDFB$:100:7 &  \multicolumn{6}{c|}{0.65} & 0.15 & 0.30 & 0.85 & 0.95 & 0.90 & 0.85\\
$\FDFB$:80:8 &  \multicolumn{6}{c|}{1.1} & 0.15 & 0.30 & 0.85 & 0.90 & 0.95 & 0.90\\
$\FDFB$:100:8 &  \multicolumn{6}{c|}{1.8} & 0.15 & 0.30 & 0.90 & 0.90 & 0.95 & 0.95 \\
$\TFHE$:80:2 & 24.56 & 33.64 & 43.88 & 55.9 & 71.74 & 97.08  & 0.14 & 0.22 & 0.57 & 0.93 & 0.93 & 0.93 \\
$\TFHE$:100:2 & 33.64 & 41.06 & 53.3 & 67.94 & 86.98 & 114.1 & 0.14 & 0.22 & 0.57 & 0.93 & 0.93 & 0.93  \\
		\hline
	\end{tabular}
	\caption{Time to evaluate the MNIST-1-$f$ and MNIST-2-$f$ neural networks.  
	 The time to evaluate the neural networks is given in hours. 
	 Due to extremely high execution times, we only present estimates for $\TFHE$, which we base on Table~\ref{tbl:app-add-scal}.
	 For accuracy we show values for $f = \ReLU$, but stress that we can use arbitrary univariate functions without impacting execution
	 time. We discretized the models using a discretization factor $\delta = 6$ for MNIST-1-$f$ and $\delta = 4$ for MNIST-2-$f$.}\label{tbl:mnist-eval}.
\end{table*}

\begin{table}
	\centering
	\begin{tabular}{| c | c | c |}
		\hline
		& MNIST-1-$f$ & MNIST-2-$f$ \\ 
		\hline
		Input & $[x_i]_{i=1}^{784}$ & $[x_i]_{h=1i=1,j=1}^{1,28,28}$ \\
		1 & $\Dense$:784:$f$ & $\ConvTwoD$:4:(6,6):$f$ \\
		2 & $\Dense$:510:$f$ & $\AvgPoolTwoD$:(2,2) \\
		3 & $\Softmax$:10 & $\ConvTwoD$:16:(6,6):$f$ \\
		4 &  & $\AvgPoolTwoD$:(2,2) \\
		5 & & $\AvgPoolTwoD$:16:(3,3):$f$ \\
		6 & & $\Dense$:64:$f$ \\
		7 & &  $\Softmax$:10 \\
		\hline 
		Baseline accuracy for & \multirow{2}{*}{0.96} & \multirow{2}{*}{0.98} \\
		$\ReLU(x) = \max(0, x)$& & \\
		\hline
	\end{tabular}
	\caption{Specification for both models that we train and evaluate and the baseline accuracy of the non-discretised model using floating point numbers. }\label{tbl:mnist-models}.
\end{table}  
%\begin{table}
%\centering
%\begin{tabular}{| c | c | c |}
%\hline
%  & MNIST-1-$f$ & MNIST-2-$f$ \\ 
% \hline
%Input & $[x_i]_{i=1}^{784}$ & $[x_i]_{h=1i=1,j=1}^{1,28,28}$ \\
%1 & $\Dense$:784:$f$ & $\ConvTwoD$:32:(3,3):$f$ \\
%2 & $\Dense$:510:$f$ & $\AvgPoolTwoD$:(2,2) \\
%3 & $\Softmax$:10 & $\ConvTwoD$:64:(3,3):$f$ \\
%4 &  & $\AvgPoolTwoD$:(2,2) \\
%5 & & $\AvgPoolTwoD$:64:(3,3):$f$ \\
%6 & & $\Dense$:64:$f$ \\
%7 & &  $\Softmax$:10 \\
% \hline 
%Baseline Accuracy for $f$ & & \\ 
%\cline{1-1}
%$\ReLU(x) = \max(0, x)$ & 0.965  & \\
%$\Swish = \frac{x}{1 + e^{-x}}$ & 0.966 & \\
%$\Sigmoid(x) = \frac{1}{1 + e^{-x}}$ & 0.94 & \\
%\hline
%\end{tabular}
%\caption{Specification for both models that we train and evaluate. Below the specification we show the accuracy of the trained models for different choices of the activation function $f$.}\label{tbl:mnist-models}.
%\end{table}
\begin{table} 
	\centering
	\begin{tabular}{| c |  c | c |}
		\hline
		Param. & Addition & Scalar Multiplication \\
		\hline
		$\FDFB$:80:6 $[\mu s]$ &    1 &  5\\
		$\FDFB$:100:6 $[\mu s]$ &   1 &  6\\
		$\TFHE$:80:2 $[s]$  &   13.2 & 13.2\\
		$\TFHE$:100:2 $[s]$ &   16.2 & 16.2 \\
		\hline
		$\FDFB$:80:7 $[\mu s]$ &   1 & 4\\
		$\FDFB$:100:7 $[\mu s]$ &  1 & 7 \\
		$\TFHE$:80:2 $[s]$ &  15.6 & 19.2 \\
		$\TFHE$:100:2 $[s]$ &  19.2 & 19.2 \\
		\hline
		$\FDFB$:80:8 $[\mu s]$ &  1 & 3 \\
		$\FDFB$:100:8 $[\mu s]$ &  1 & 4 \\
		$\TFHE$:80:2 $[s]$ &   18 & 18\\
		$\TFHE$:100:2 $[s]$ &  21 & 21 \\
		\hline 
	\end{tabular}
	\caption{Addition and scalar multiplication timings. Timings in \textbf{microseconds} for $\FDFB$ parameter sets. Timings in \textbf{seconds} for $\TFHE$ sets. We consider the timings for operations from $\ZZ_{2^k}\times\ZZ_{2^k}$ to $\ZZ_{2^k}$ for $k \in \{6,7,8\}$. }\label{tbl:app-add-scal}
\end{table}
 
First, we note that if the ratio $Q/q$ is high, then the error after modulus switching from $Q$ to $q$ is dominated by the Hamming weight of the LWE secret key $\vec{s}$. To minimize the error, we choose $q = 2\cdot \deg$ to avoid another modulus switching.
To be able to compute number theoretic transforms we choose $Q$ prime and $Q = 1 \mod 2 \cdot \deg$.
An important observation is that the error does not depend on the RLWE secret key $\fr{s}$.
Hence, we choose $\fr{s}$ from the uniform distribution over $\R_{\deg, Q}$.
For all RLWE samples, we set the standard deviation of the error as $3.2$. 
We choose the LWE secret key $\vec{s}$ from $\{0, 1\}$ such that the Hamming weight of the vector is $64$.
Recall that LWE with binary keys is asymptotically as secure as the standard LWE \cite{TOC:Micciancio18}, and is also used in many
implementations of TFHE.
The Hamming weight is the same as used in the HElib library \cite{C:HalSho14,EC:HalSho15,EC:Albrecht17}. 
To compensate for the loss of security due to a small hamming weight, we choose the standard deviation of LWE samples error much higher
that the in case of RLWE. 
For binary components of $\vec{s}$ we have $\norm{\vec{u}} = 1$ which is as in TFHE \cite{AC:CGGI16,JC:CGGI20}.
We choose the decomposition bases $\basis_{\boot}$, $\basis_{\RGSW}$, and $\basis_{\packKey}$ to minimize the amount of computation.
Due to the higher noise variance of the LWE samples in the key switching key, we choose the decomposition basis  $\basis_{\keySwitchKey}$ 
relatively low, but we note that in terms of efficiency, key switching does not contribute much.

To evaluate security, we use the LWE estimator 
\cite{albrecht2015concrete} commit \texttt{fb7deba} and the Sparse LWE estimator 
 \cite{EPRINT:CheSonYhe21}  commit \texttt{6ab7a6e}  based on the LWE Estimator. 
The LWE Estimator estimates the security level of a given parameter set by calculating the complexity of the meet in the middle
exhaustive search algorithm, Coded-BKW \cite{C:GuoJohSta15}, dual-lattice attack and small/sparse secret variant \cite{EC:Albrecht17}, lattice-reduction + enumeration \cite{RSA:LinPei11}, primal attack via uSVP \cite{ICISC:AlbFitGop13,ACISP:BaiGal14}
and Arora-Ge algorithm \cite{ICALP:AroGe11} using Gr\"obner bases \cite{EPRINT:ACFP14}.
The sparse LWE estimator is build on top of \cite{albrecht2015concrete}, but additionally 
considers a variant \cite{EPRINT:SonChe19} of Howgrave-Graham's hybrid attack \cite{C:HowgraveGraham07}.

We calculate the correctness by taking the standard deviation of the
bootstrapped sample. 
In particular, given that $\bound$ is the bound on the variance of the error we set $\stddev_{\bound} = \sqrt{\bound}$, and compute the probability of a correct bootstrapping as $\erf(\frac{q}{2 \cdot t \cdot \stddev_{\bound} \cdot \sqrt{2}})$.
In Table~\ref{tbl:parameter-sets-correctness} for the sake of readability, we approximate the probability of getting an incorrect output as a power of 2.
That is we find the smallest $i \in \NN$ such that $2^{-i} \leq 1 - \erf(\frac{q}{2 \cdot t \cdot \stddev_{\bound} \cdot \sqrt{2}})$.
For sets 1 to 4 we show correctness calculated from the bound $\bound_{\deg}$ as given in Theorem~\ref{thm:correctness-of-functional-bootstrapping}. Furthermore, we show correctness when computing affine functions after bootstrapping.
 For TFHE we take the error bound
$\bound_{\deg}$ as given by Theorem~\ref{thm:correctness-tfhe-bootstrapping}. 
In both cases, according to our parameter selection, we have that $\bound_{\deg}$ is computed from
$\bound_{\ct}$ since it is higher than $\bound_{\fresh}$. Note that $\bound_{\ct}$ is a ciphertext
output by a previous bootstrapping.

\textit{Parameter Sets.}  
Table~\ref{tbl:parameter-sets} shows our different parameter sets, Table~\ref{tbl:parameter-sets-correctness}
gives correctness for each parameter set and the size of the message space, and
Table~\ref{tbl:parameter-sets-size-and-security} contains an overview of the ciphertext size, estimated security level and time to run single bootstrapping operation.
      
We denote the parameter sets $\FDFB$:$s$:$t$ and $\TFHE$:$s$:$t$ where $s$ is the security level,
and $t$ is the number of bits of the plaintext space for which the probability of having an incorrect outcome is very low.
For example, $\FDFB$:$80$:$6$ offers $80$ bits of security and is targeted at $6$-bit plaintexts.
We stress that the targeted plaintext space does not exclude larger plaintext moduli.
In particular, according to Table~\ref{tbl:parameter-sets-correctness}, all parameter sets can handle larger plaintexts, but the probability that the least significant bit of the input message changes obviously grows with the plaintext space size, shifting the scheme from FHE towards homomorphic encryption for approximate arithmetic.

\textit{Implementation.} 
We ran our experiments on a virtual machine with 60 cores, featuring 240GB of ram and using Intel Xeon Cascade lake processors, and the measurements are averaged over five runs.
We implemented our bootstrapping technique by modifying the latest\footnote{\texttt{commit} 612618aca7ca2c5df5b06a0f6c32f1db680722964} version of the PALISADE 
\cite{PALISADE}
and using the Intel HEXL \cite{HEXL} 
already featured an implementation of \cite{EPRINT:DucMic14b} and \cite{AC:CGGI16}. 
Note that the timings in PALISADE  for TFHE are different than reported in \cite{AC:CGGI16}. 
This is mostly due to an optimization in TFHE that chooses the modulus $Q$ as $Q=2^{32}$ or $Q=2^{64}$ and uses
the natural modulo reduction of unsigned int and long C++ types.
Then instead of computing the number theoretic transform, TFHE computes the fast Fourier transform. 
We note that the same optimization is possible for our algorithms. 

\textit{Evaluation.}  
Table~\ref{tbl:app-add-scal} shows the efficiency of modular addition and scalar multiplication targeting different plaintext spaces, respectively, for our parameter sets. We compare our approach to operations on binary ciphertexts. To this end, we use the vertical look-up tables from \cite{AC:CGGI17}, as they turned out to be faster
than a circuit-based implementation. Note that, while vertical LUTs are significantly faster than the horizontal tables, which were also introduced
in \cite{AC:CGGI17}, they can compute at most one function at the same time. Therefore multiple tables need to be maintained for each output bit, 
requiring additional memory. 
  
To evaluate a neural network on encrypted inputs, we discretized two neural networks that recognize handwritten digits from the MNIST dataset.
Note that the discretization of a floating-point neural network to integers models with modular arithmetic is a delicate operation. Many operations such as normalization ca not be applied in the same manner, and therefore the discretization process often results
in a drop of accuracy. 
The configurations and accuracies are given at Table~\ref{tbl:mnist-models}.
In what follows, we give the notation used to specify the neural networks.
We always denote the weights with $w$'s and biases with $b$'s.
 By $\Dense$:$d$:$f$ we denote a dense layer that on input a vector $[x_i]_{i=1}^{m}$
outputs a vector $\big[f(b_j + \sum_{i=1}^{m} x_i \cdot w_{j, i})\big]_{j=1}^{d}$. 

Convolution layers are denoted as $\ConvTwoD$:$h$:$(d, d)$:$f$, where $h$ is the number of filters, $(d, d)$ specify the shape of the filters and $f$ is an activation function. A $\ConvTwoD$ layer takes as input a tensor $[x_{g,i,j}]_{g=1,i=1,j=1}^{c,m,l}$ and
outputs 
\begin{align*}
\bigg[\Big[f\big(b_{i', j'}^{(k)} + \sum_{g=1,i=1,j=1}^{c, d, d} x_{g, i+i', j+j'} \cdot w^{(k)}_{g, i, j}\big)\Big]_{i'=0,j'=0}^{m-d,l-d}\bigg]_{k=1}^{h}.
\end{align*}
Average pooling $\AvgPoolTwoD$:$(d, d)$ on input a tensor  $[x_{g,i,j}]_{g=1,i=1,j=1}^{c,m,l}$
outputs 
\begin{align*}
\bigg[ \Big[\frac{\sum_{i=1,j=1}^{d, d} x_{g, i+i', j+j'}}{d^2}\Big]_{i'=0,j'=0}^{m-d,l-d} \bigg]_{g=1}^c
\end{align*}
Finally, $\Softmax$:$d$ takes as input a vector $[x_i]_{i=1}^{m}$, computes $\vec{z} = \big[b_j + \sum_{i=1}^{m} x_i \cdot w_{j, i}\big]_{j=1}^{d}$ and outputs $\big[\Softmax(\vec{z}, i) = \frac{e^{\vec{z}[i]}}{\sum_j^d e^{\vec{z}[j]}}\big]_{i=1}$.

 %TODO This last paragraph we can finish once the tables are done
We compare our FHE with our bootstrapping with
a functionally equivalent implementation based on vertical LUTs \cite{AC:CGGI17}.
In particular, for activation functions, we run $k$, $k$-to-1 bit LUTs for $k \in \{6,7,8,9,10,11\}$ and 
to compute addition and scalar multiplication for the affine function, we must use $k$, $2k$-to-1 bit LUTs.
We note that we also tested the method of Carpov et al. \cite{RSA:CarIzaMol19} to compute the activation function on set $\TFHE$:100:7.
However, the timing to evaluate the affine function in the LUT-based method is so overwhelming that we did not notice any significant difference. 
%Frthermore, we note that to use TFHE to compute
%a 12 bit LUT is not feasible as well because the ring dimension is so high that the size of the bootstrapping keys is over 100GB, which we cannot execute on our devices. 
We include the evaluation accuracies for plaintext spaces that exceed the target space. Such settings are extremely useful for computations in which a small amount of error is acceptable, such as activation functions in neural network inference.  
We can see that our approach significantly outperforms LUT-based techniques.  The most important reason for this is that we can compute affine functions within neurons almost instantly due to the linear homomorphism of LWE samples. Binary ciphertext requires the evaluation of expensive LUT at each step. In the case of the LUT-based approach, we note that the computation will be exact. 
Therefore, the accuracy will be identical to the discretized neural network with applied modulo reduction.
Table~\ref{tbl:mnist-eval} shows the time required to evaluate the whole neural network. 
We test our
network on 20 samples for each plaintext space and achieve the best speedups for sets $\FDFB$:80:6 and $\FDFB$:100:6 that outperform the LUT-based technique by a factor of over 3600 and 1000. Even in the worst case with the MNIST-2-$f$ network, for which the affine functions are significantly smaller, we obtain speedups 44.4 and 29.6  for $\FDFB$:80:8 against $\TFHE$:80:2, and respectively $\FDFB$:100:8 against $\TFHE$:100:2 for 8-bit plaintexts.

By computing the activation functions of the network over 60 cores, we manage to lower the evaluation significantly. 
To estimate the speedup one would gain by parallelizing the LUT-based technique, we averaged the speedup gained by the $\FDFB$ and applied it to the LUT-based timings. We did not parallelize our bootstrapping technique itself. However, we note that there is a lot of room to do so.

\section{Conclusion}

We believe that our evaluations showed that it is practically feasible to evaluate neural networks over encrypted data 
using fully homomorphic encryption.
We emphasize that in our method, the choice of the activation functions does not play a role.
We showed that parallelization might play a pivotal role in further reducing the timing for oblivious neural network inference. Therefore, an open question is how much time can be improved when exploiting graphics processing units or larger clusters.
Finally, we believe that our method may
either be a complementary algorithm to CKKS methods or a competitive alternative that gives the evaluation's exact results instead of approximate.

\bibliographystyle{alpha}
\bibliography{../cryptobib/abbrev3,../cryptobib/crypto,additionalBiblio}

\newcommand{\etalchar}[1]{$^{#1}$}
\begin{thebibliography}{CdWM{\etalchar{+}}17}

\bibitem[ABSdV19]{RSA:ABSV19}
Mark Abspoel, Niek~J. Bouman, Berry Schoenmakers, and Niels de~Vreede.
\newblock Fast secure comparison for medium-sized integers and its application
  in binarized neural networks.
\newblock In Mitsuru Matsui, editor, {\em CT-RSA~2019}, volume 11405 of {\em
  {LNCS}}, pages 453--472. Springer, Heidelberg, March 2019.

\bibitem[ACFP14]{EPRINT:ACFP14}
Martin~R. Albrecht, Carlos Cid, Jean-Charles Faug{\`e}re, and Ludovic Perret.
\newblock Algebraic algorithms for {LWE}.
\newblock Cryptology ePrint Archive, Report 2014/1018, 2014.
\newblock \url{https://eprint.iacr.org/2014/1018}.

\bibitem[AFG14]{ICISC:AlbFitGop13}
Martin~R. Albrecht, Robert Fitzpatrick, and Florian G{\"o}pfert.
\newblock On the efficacy of solving {LWE} by reduction to unique-{SVP}.
\newblock In Hyang-Sook Lee and Dong-Guk Han, editors, {\em ICISC 13}, volume
  8565 of {\em {LNCS}}, pages 293--310. Springer, Heidelberg, November 2014.

\bibitem[AG11]{ICALP:AroGe11}
Sanjeev Arora and Rong Ge.
\newblock New algorithms for learning in presence of errors.
\newblock In Luca Aceto, Monika Henzinger, and Jiri Sgall, editors, {\em ICALP
  2011, Part~I}, volume 6755 of {\em {LNCS}}, pages 403--415. Springer,
  Heidelberg, July 2011.

\bibitem[Alb17]{EC:Albrecht17}
Martin~R. Albrecht.
\newblock On dual lattice attacks against small-secret {LWE} and parameter
  choices in {HElib} and {SEAL}.
\newblock In Jean-S{\'{e}}bastien Coron and Jesper~Buus Nielsen, editors, {\em
  EUROCRYPT~2017, Part~II}, volume 10211 of {\em {LNCS}}, pages 103--129.
  Springer, Heidelberg, April~/~May 2017.

\bibitem[AP13]{C:AlpPei13}
Jacob {Alperin-Sheriff} and Chris Peikert.
\newblock Practical bootstrapping in quasilinear time.
\newblock In Ran Canetti and Juan~A. Garay, editors, {\em CRYPTO~2013, Part~I},
  volume 8042 of {\em {LNCS}}, pages 1--20. Springer, Heidelberg, August 2013.

\bibitem[AP14]{C:AlpPei14}
Jacob {Alperin-Sheriff} and Chris Peikert.
\newblock Faster bootstrapping with polynomial error.
\newblock In Juan~A. Garay and Rosario Gennaro, editors, {\em CRYPTO~2014,
  Part~I}, volume 8616 of {\em {LNCS}}, pages 297--314. Springer, Heidelberg,
  August 2014.

\bibitem[APS15]{albrecht2015concrete}
Martin~R Albrecht, Rachel Player, and Sam Scott.
\newblock On the concrete hardness of learning with errors.
\newblock {\em Journal of Mathematical Cryptology}, 9(3):169--203, 2015.

\bibitem[Bar89]{BARRINGTON}
David~A. Barrington.
\newblock Bounded-width polynomial-size branching programs recognize exactly
  those languages in nc1.
\newblock {\em Journal of Computer and System Sciences}, 38(1):150 -- 164,
  1989.

\bibitem[BDF18]{AFRICACRYPT:BonDucFil18}
Guillaume Bonnoron, L{\'e}o Ducas, and Max Fillinger.
\newblock Large {FHE} gates from tensored homomorphic accumulator.
\newblock In Antoine Joux, Abderrahmane Nitaj, and Tajjeeddine Rachidi,
  editors, {\em AFRICACRYPT 18}, volume 10831 of {\em {LNCS}}, pages 217--251.
  Springer, Heidelberg, May 2018.

\bibitem[BG14]{ACISP:BaiGal14}
Shi Bai and Steven~D. Galbraith.
\newblock Lattice decoding attacks on binary {LWE}.
\newblock In Willy Susilo and Yi~Mu, editors, {\em ACISP 14}, volume 8544 of
  {\em {LNCS}}, pages 322--337. Springer, Heidelberg, July 2014.

\bibitem[BGGJ18]{EPRINT:BGGJ18}
Christina Boura, Nicolas Gama, Mariya Georgieva, and Dimitar Jetchev.
\newblock {CHIMERA}: Combining ring-{LWE}-based fully homomorphic encryption
  schemes.
\newblock Cryptology ePrint Archive, Report 2018/758, 2018.
\newblock \url{https://eprint.iacr.org/2018/758}.

\bibitem[BGPG20]{PNAS:BlaGusPol20}
Marcelo Blatt, Alexander Gusev, Yuriy Polyakov, and Shafi Goldwasser.
\newblock Secure large-scale genome-wide association studies using homomorphic
  encryption.
\newblock {\em Proceedings of the National Academy of Sciences},
  117(21):11608--11613, 2020.

\bibitem[BGV12]{ITCS:BraGenVai12}
Zvika Brakerski, Craig Gentry, and Vinod Vaikuntanathan.
\newblock ({L}eveled) fully homomorphic encryption without bootstrapping.
\newblock In Shafi Goldwasser, editor, {\em ITCS 2012}, pages 309--325. {ACM},
  January 2012.

\bibitem[BMMP18]{C:BMMP18}
Florian Bourse, Michele Minelli, Matthias Minihold, and Pascal Paillier.
\newblock Fast homomorphic evaluation of deep discretized neural networks.
\newblock In Hovav Shacham and Alexandra Boldyreva, editors, {\em CRYPTO~2018,
  Part~III}, volume 10993 of {\em {LNCS}}, pages 483--512. Springer,
  Heidelberg, August 2018.

\bibitem[Bra12]{C:Brakerski12}
Zvika Brakerski.
\newblock Fully homomorphic encryption without modulus switching from classical
  {GapSVP}.
\newblock In Reihaneh Safavi-Naini and Ran Canetti, editors, {\em CRYPTO~2012},
  volume 7417 of {\em {LNCS}}, pages 868--886. Springer, Heidelberg, August
  2012.

\bibitem[BV11]{FOCS:BraVai11}
Zvika Brakerski and Vinod Vaikuntanathan.
\newblock Efficient fully homomorphic encryption from (standard) {LWE}.
\newblock In Rafail Ostrovsky, editor, {\em 52nd FOCS}, pages 97--106. {IEEE}
  Computer Society Press, October 2011.

\bibitem[BV14]{ITCS:BraVai14}
Zvika Brakerski and Vinod Vaikuntanathan.
\newblock Lattice-based {FHE} as secure as {PKE}.
\newblock In Moni Naor, editor, {\em ITCS 2014}, pages 1--12. {ACM}, January
  2014.

\bibitem[CDKS19]{CCS:CDKS19}
Hao Chen, Wei Dai, Miran Kim, and Yongsoo Song.
\newblock Efficient multi-key homomorphic encryption with packed ciphertexts
  with application to oblivious neural network inference.
\newblock In Lorenzo Cavallaro, Johannes Kinder, XiaoFeng Wang, and Jonathan
  Katz, editors, {\em ACM CCS 2019}, pages 395--412. {ACM} Press, November
  2019.

\bibitem[CdWM{\etalchar{+}}17]{EPRINT:CWMMP17}
Herv{\'e} Chabanne, Amaury de~Wargny, Jonathan Milgram, Constance Morel, and
  Emmanuel Prouff.
\newblock Privacy-preserving classification on deep neural network.
\newblock Cryptology ePrint Archive, Report 2017/035, 2017.
\newblock \url{https://eprint.iacr.org/2017/035}.

\bibitem[CGGI16]{AC:CGGI16}
Ilaria Chillotti, Nicolas Gama, Mariya Georgieva, and Malika Izabach{\`e}ne.
\newblock Faster fully homomorphic encryption: Bootstrapping in less than 0.1
  seconds.
\newblock In Jung~Hee Cheon and Tsuyoshi Takagi, editors, {\em ASIACRYPT~2016,
  Part~I}, volume 10031 of {\em {LNCS}}, pages 3--33. Springer, Heidelberg,
  December 2016.

\bibitem[CGGI17]{AC:CGGI17}
Ilaria Chillotti, Nicolas Gama, Mariya Georgieva, and Malika Izabach{\`e}ne.
\newblock Faster packed homomorphic operations and efficient circuit
  bootstrapping for {TFHE}.
\newblock In Tsuyoshi Takagi and Thomas Peyrin, editors, {\em ASIACRYPT~2017,
  Part~I}, volume 10624 of {\em {LNCS}}, pages 377--408. Springer, Heidelberg,
  December 2017.

\bibitem[CGGI20]{JC:CGGI20}
Ilaria Chillotti, Nicolas Gama, Mariya Georgieva, and Malika Izabach{\`e}ne.
\newblock {TFHE}: Fast fully homomorphic encryption over the torus.
\newblock {\em Journal of Cryptology}, 33(1):34--91, January 2020.

\bibitem[CH18]{EC:CheHan18}
Hao Chen and Kyoohyung Han.
\newblock Homomorphic lower digits removal and improved {FHE} bootstrapping.
\newblock In Jesper~Buus Nielsen and Vincent Rijmen, editors, {\em
  EUROCRYPT~2018, Part~I}, volume 10820 of {\em {LNCS}}, pages 315--337.
  Springer, Heidelberg, April~/~May 2018.

\bibitem[CIM19]{RSA:CarIzaMol19}
Sergiu Carpov, Malika Izabach{\`e}ne, and Victor Mollimard.
\newblock New techniques for multi-value input homomorphic evaluation and
  applications.
\newblock In Mitsuru Matsui, editor, {\em CT-RSA~2019}, volume 11405 of {\em
  {LNCS}}, pages 106--126. Springer, Heidelberg, March 2019.

\bibitem[CKKS17]{AC:CKKS17}
Jung~Hee Cheon, Andrey Kim, Miran Kim, and Yong~Soo Song.
\newblock Homomorphic encryption for arithmetic of approximate numbers.
\newblock In Tsuyoshi Takagi and Thomas Peyrin, editors, {\em ASIACRYPT~2017,
  Part~I}, volume 10624 of {\em {LNCS}}, pages 409--437. Springer, Heidelberg,
  December 2017.

\bibitem[CLOT21]{EPRINT:CLOT21}
Ilaria Chillotti, Damien Ligier, Jean-Baptiste Orfila, and Samuel Tap.
\newblock Improved programmable bootstrapping with larger precision and
  efficient arithmetic circuits for tfhe.
\newblock Cryptology ePrint Archive, Report 2021/729, 2021.
\newblock \url{https://ia.cr/2021/729}.

\bibitem[CSY21]{EPRINT:CheSonYhe21}
Jung~Hee Cheon, Yongha Son, and Donggeon Yhee.
\newblock Practical fhe parameters against lattice attacks.
\newblock Cryptology ePrint Archive, Report 2021/039, 2021.
\newblock \url{https://eprint.iacr.org/2021/039}.

\bibitem[DGBL{\etalchar{+}}16]{cryptonets}
Nathan Dowlin, Ran Gilad-Bachrach, Kim Laine, Kristin Lauter, Michael Naehrig,
  and John Wernsing.
\newblock Cryptonets: Applying neural networks to encrypted data with high
  throughput and accuracy.
\newblock ICML'16, page 201–210. JMLR.org, 2016.

\bibitem[DM14]{EPRINT:DucMic14b}
L{\'e}o Ducas and Daniele Micciancio.
\newblock {FHEW}: Bootstrapping homomorphic encryption in less than a second.
\newblock Cryptology ePrint Archive, Report 2014/816, 2014.
\newblock \url{https://eprint.iacr.org/2014/816}.

\bibitem[DM15]{EC:DucMic15}
L{\'e}o Ducas and Daniele Micciancio.
\newblock {FHEW}: Bootstrapping homomorphic encryption in less than a second.
\newblock In Elisabeth Oswald and Marc Fischlin, editors, {\em EUROCRYPT~2015,
  Part~I}, volume 9056 of {\em {LNCS}}, pages 617--640. Springer, Heidelberg,
  April 2015.

\bibitem[FV12]{EPRINT:FanVer12}
Junfeng Fan and Frederik Vercauteren.
\newblock Somewhat practical fully homomorphic encryption.
\newblock Cryptology ePrint Archive, Report 2012/144, 2012.
\newblock \url{https://eprint.iacr.org/2012/144}.

\bibitem[GBA21]{CHES:GuiBorAra21}
Antonio Guimarães, Edson Borin, and Diego~F. Aranha.
\newblock Revisiting the functional bootstrap in tfhe.
\newblock {\em IACR Transactions on Cryptographic Hardware and Embedded
  Systems}, 2021(2):229--253, Feb. 2021.

\bibitem[Gen09]{STOC:Gentry09}
Craig Gentry.
\newblock Fully homomorphic encryption using ideal lattices.
\newblock In Michael Mitzenmacher, editor, {\em 41st ACM STOC}, pages 169--178.
  {ACM} Press, May~/~June 2009.

\bibitem[GGH{\etalchar{+}}19]{AC:GGHLM19}
Nicholas Genise, Craig Gentry, Shai Halevi, Baiyu Li, and Daniele Micciancio.
\newblock Homomorphic encryption for finite automata.
\newblock In Steven~D. Galbraith and Shiho Moriai, editors, {\em
  ASIACRYPT~2019, Part~II}, volume 11922 of {\em {LNCS}}, pages 473--502.
  Springer, Heidelberg, December 2019.

\bibitem[GJS15]{C:GuoJohSta15}
Qian Guo, Thomas Johansson, and Paul Stankovski.
\newblock Coded-{BKW}: Solving {LWE} using lattice codes.
\newblock In Rosario Gennaro and Matthew J.~B. Robshaw, editors, {\em
  CRYPTO~2015, Part~I}, volume 9215 of {\em {LNCS}}, pages 23--42. Springer,
  Heidelberg, August 2015.

\bibitem[GSW13]{C:GenSahWat13}
Craig Gentry, Amit Sahai, and Brent Waters.
\newblock Homomorphic encryption from learning with errors:
  Conceptually-simpler, asymptotically-faster, attribute-based.
\newblock In Ran Canetti and Juan~A. Garay, editors, {\em CRYPTO~2013, Part~I},
  volume 8042 of {\em {LNCS}}, pages 75--92. Springer, Heidelberg, August 2013.

\bibitem[HAO15]{PKC:HirAbeOka15}
Ryo Hiromasa, Masayuki Abe, and Tatsuaki Okamoto.
\newblock Packing messages and optimizing bootstrapping in {GSW}-{FHE}.
\newblock In Jonathan Katz, editor, {\em PKC~2015}, volume 9020 of {\em
  {LNCS}}, pages 699--715. Springer, Heidelberg, March~/~April 2015.

\bibitem[HEl]{HElib}
{HElib}.
\newblock \url{https://github.com/homenc/HElib}.

\bibitem[HEX]{HEXL}
{Intel HEXL}.
\newblock \url{https://github.com/intel/hexl}.

\bibitem[{How}07]{C:HowgraveGraham07}
Nick {Howgrave-Graham}.
\newblock A hybrid lattice-reduction and meet-in-the-middle attack against
  {NTRU}.
\newblock In Alfred Menezes, editor, {\em CRYPTO~2007}, volume 4622 of {\em
  {LNCS}}, pages 150--169. Springer, Heidelberg, August 2007.

\bibitem[HS14]{C:HalSho14}
Shai Halevi and Victor Shoup.
\newblock Algorithms in {HElib}.
\newblock In Juan~A. Garay and Rosario Gennaro, editors, {\em CRYPTO~2014,
  Part~I}, volume 8616 of {\em {LNCS}}, pages 554--571. Springer, Heidelberg,
  August 2014.

\bibitem[HS15]{EC:HalSho15}
Shai Halevi and Victor Shoup.
\newblock Bootstrapping for {HElib}.
\newblock In Elisabeth Oswald and Marc Fischlin, editors, {\em EUROCRYPT~2015,
  Part~I}, volume 9056 of {\em {LNCS}}, pages 641--670. Springer, Heidelberg,
  April 2015.

\bibitem[HS21]{JC:HalSho21}
Shai Halevi and Victor Shoup.
\newblock Bootstrapping for {HElib}.
\newblock {\em Journal of Cryptology}, 34(1):7, January 2021.

\bibitem[ISZ19]{CANS:IzaSirZub19}
Malika Izabach{\`e}ne, Renaud Sirdey, and Martin Zuber.
\newblock Practical fully homomorphic encryption for fully masked neural
  networks.
\newblock In Yi~Mu, Robert~H. Deng, and Xinyi Huang, editors, {\em CANS 19},
  volume 11829 of {\em {LNCS}}, pages 24--36. Springer, Heidelberg, October
  2019.

\bibitem[JKLS18]{CCS:JKLS18}
Xiaoqian Jiang, Miran Kim, Kristin~E. Lauter, and Yongsoo Song.
\newblock Secure outsourced matrix computation and application to neural
  networks.
\newblock In David Lie, Mohammad Mannan, Michael Backes, and XiaoFeng Wang,
  editors, {\em ACM CCS 2018}, pages 1209--1222. {ACM} Press, October 2018.

\bibitem[jLHH{\etalchar{+}}21]{SP:LHHMQ21}
Wen jie Lu, Zhicong Huang, Cheng Hong, Yiping Ma, and Hunter Qu.
\newblock Pegasus: Bridging polynomial and non-polynomial evaluations in
  homomorphic encryption.
\newblock To appear in IEEE S\&P 2021, 2021.
\newblock \url{https://eprint.iacr.org/2020/1606}.

\bibitem[JVC18]{USENIX:JuvVaiCha18}
Chiraag Juvekar, Vinod Vaikuntanathan, and Anantha Chandrakasan.
\newblock {GAZELLE}: {A} low latency framework for secure neural network
  inference.
\newblock In William Enck and Adrienne~Porter Felt, editors, {\em USENIX
  Security 2018}, pages 1651--1669. {USENIX} Association, August 2018.

\bibitem[KSK{\etalchar{+}}18]{BMC:KSKLC18}
Andrey Kim, Yongsoo Song, Miran Kim, Keewoo Lee, and Jung~Hee Cheon.
\newblock Logistic regression model training based on the approximate
  homomorphic encryption.
\newblock {\em BMC medical genomics}, 11(4):23--31, 2018.

\bibitem[KSK{\etalchar{+}}20]{BMC:KSKK20}
Duhyeong Kim, Yongha Son, Dongwoo Kim, Andrey Kim, Seungwan Hong, and Jung~Hee
  Cheon.
\newblock Privacy-preserving approximate gwas computation based on homomorphic
  encryption.
\newblock {\em BMC Medical Genomics}, 13(7):1--12, 2020.

\bibitem[LJLA17]{CCS:LJLA17}
Jian Liu, Mika Juuti, Yao Lu, and N.~Asokan.
\newblock Oblivious neural network predictions via {MiniONN} transformations.
\newblock In Bhavani~M. Thuraisingham, David Evans, Tal Malkin, and Dongyan Xu,
  editors, {\em ACM CCS 2017}, pages 619--631. {ACM} Press, October~/~November
  2017.

\bibitem[LM20]{EPRINT:LiMic20}
Baiyu Li and Daniele Micciancio.
\newblock On the security of homomorphic encryption on approximate numbers.
\newblock Cryptology ePrint Archive, Report 2020/1533, 2020.
\newblock \url{https://eprint.iacr.org/2020/1533}.

\bibitem[LP11]{RSA:LinPei11}
Richard Lindner and Chris Peikert.
\newblock Better key sizes (and attacks) for {LWE}-based encryption.
\newblock In Aggelos Kiayias, editor, {\em CT-RSA~2011}, volume 6558 of {\em
  {LNCS}}, pages 319--339. Springer, Heidelberg, February 2011.

\bibitem[LW20]{EPRINT:LeeWal20}
Changmin Lee and Alexandre Wallet.
\newblock Lattice analysis on {MiNTRU} problem.
\newblock Cryptology ePrint Archive, Report 2020/230, 2020.
\newblock \url{https://eprint.iacr.org/2020/230}.

\bibitem[Mic18]{TOC:Micciancio18}
Daniele Micciancio.
\newblock On the hardness of learning with errors with binary secrets.
\newblock {\em Theory of Computing}, 14(13):1--17, 2018.

\bibitem[MS18]{ICALP:MicSor18}
Daniele Micciancio and Jessica Sorrell.
\newblock Ring packing and amortized {FHEW} bootstrapping.
\newblock In Ioannis Chatzigiannakis, Christos Kaklamanis, D{\'a}niel Marx, and
  Donald Sannella, editors, {\em ICALP 2018}, volume 107 of {\em {LIPIcs}},
  pages 100:1--100:14. Schloss Dagstuhl, July 2018.

\bibitem[PAL]{PALISADE}
{PALISADE}.
\newblock \url{https://palisade-crypto.org/}.

\bibitem[Reg05]{STOC:Regev05}
Oded Regev.
\newblock On lattices, learning with errors, random linear codes, and
  cryptography.
\newblock In Harold~N. Gabow and Ronald Fagin, editors, {\em 37th ACM STOC},
  pages 84--93. {ACM} Press, May 2005.

\bibitem[Reg09]{JAMC:Regev09}
Oded Regev.
\newblock On lattices, learning with errors, random linear codes, and
  cryptography.
\newblock {\em J. ACM}, 56(6), September 2009.

\bibitem[RSC{\etalchar{+}}19]{USENIX:RSCLLK19}
M.~Sadegh Riazi, Mohammad Samragh, Hao Chen, Kim Laine, Kristin~E. Lauter, and
  Farinaz Koushanfar.
\newblock {XONN}: {XNOR}-based oblivious deep neural network inference.
\newblock In Nadia Heninger and Patrick Traynor, editors, {\em USENIX Security
  2019}, pages 1501--1518. {USENIX} Association, August 2019.

\bibitem[SC19]{EPRINT:SonChe19}
Yongha Son and Jung~Hee Cheon.
\newblock Revisiting the hybrid attack on sparse and ternary secret {LWE}.
\newblock Cryptology ePrint Archive, Report 2019/1019, 2019.
\newblock \url{https://eprint.iacr.org/2019/1019}.

\bibitem[SEA]{SEAL}
{SEAL}.
\newblock
  \url{https://www.microsoft.com/en-us/research/project/microsoft-seal/}.

\end{thebibliography}

\appendix

\section{Specifications of Algorithms from the Preliminaries}\label{sec:algs-from-preliminaries}

\begin{definition}[External Product]
The external product $\extProd$ given as input $\ma{C} \in \GGSW_{\bound_{\ma{C}}, n, \deg, q, \basis}(\vec{s}, \msg_{\vec{C}})$ and $\vec{d} \in \GLWE_{\bound_{\vec{d}}, n, \deg, q}(\vec{s}, \msg_{\vec{d}})$, 
outputs the following 
\begin{align*}
\extProd(\ma{C}, \vec{d}) = \Decomp_{\basis, q}(\vec{d}^{\top}) \cdot \ma{C}.
\end{align*} 
\end{definition}

\begin{figure}
\center
\fbox{
%\begin{pcvstack}[boxed,center,space=0.5em]
%\begin{pchstack}%[left] 
\procedure[syntaxhighlight=auto ]{$\CMux(\ma{C}, \vec{g}, \vec{h})$}{  
    \textbf{Input:~}  
    \text{Takes as input
	$\ma{C} \in \GGSW_{\bound_{\ma{C}}, n, \deg, q, \basis}(\vec{s}, \msg_{\ma{C}})$,} \\
	\text{where $\msg_{\ma{C}} \in \BB$,  $\vec{g} \in \GLWE_{\bound_{\vec{g}}, n, \deg, q}(\vec{s}, .)$} \\
	\text{and $\vec{h} \in \GLWE_{\bound_{\vec{h}}, n, \deg, q}(\vec{s}, .)$.}
 \\[][\hline]  
\pcln \text{Compute $\vec{d} \exec \vec{g} - \vec{h}$.} \\
\pcln \text{Compute $\vec{c}_{\out} \exec \extProd(\ma{C}, \vec{d})^{\top} + \vec{h}$.} \\
\pcln \text{Output $\vec{c}_{\out}$.}
}   
%\end{pchstack}
%\end{pcvstack} 
} 
\caption{CMux gate.}
\label{figure:cmux}
\end{figure}

%TODO This is definition style
% \begin{definition}[CMux Gate]\label{def:cmux}
%The $\CMux$ function is defined as follows.
%\begin{description}
%\item[$\CMux(\ma{C}, \vec{g}, \vec{h}):$] Takes as input
%	$\ma{C} \in \GGSW_{\bound_{\ma{C}}, n, \deg, q, \basis}(\vec{s}, \msg_{\ma{C}})$, 
%where $\msg_{\ma{C}} \in \BB$,  $\vec{g} \in \GLWE_{\bound_{\vec{g}}, n, \deg, q}(\vec{s}, .)$ and 
%$\vec{h} \in \GLWE_{\bound_{\vec{h}}, n, \deg, q}(\vec{s}, .)$.
%The function proceeds as follows.
%\begin{enumerate}
%\item Compute $\vec{d} \exec \vec{g} - \vec{h}$.
%\item Compute $\vec{c}_{\out} \exec \extProd(\ma{C}, \vec{d})^{\top} + \vec{h}$.
%\item Output $\vec{c}_{\out}$.
%\end{enumerate}
%\end{description}
%\end{definition}

\begin{figure}
\center
\fbox{
%\begin{pcvstack}[boxed,center,space=0.5em]
%\begin{pchstack}%[left] 
\procedure[syntaxhighlight=auto ]{$\ModSwitch(\vec{c}, Q, q)$}{  
    \textbf{Input:~}  
    \text{Takes as input $\vec{c} = [b, \vec{a}^{\top}]^{\top} \in \LWE_{\bound, n, Q}(\vec{s}, .)$,}\\
    \text{$Q \in \NN$ and $q \in \NN$.}
 \\[][\hline]  
\pcln  \text{Set $b_q \exec \round{\frac{q \cdot b}{Q}}$.} \\ 
\pcln  \text{Set $\vec{a}_q = \round{\frac{q \cdot \vec{a}}{Q}}$.} \\  
\pcln \text{Output $\ct_{\out} \exec [b_q, \vec{a}_q^{\top}]^{\top} \in \ZZ_q^{(n+1) \times 1}$.}
}   
%\end{pchstack}
%\end{pcvstack} 
} 
\caption{Modulus switching algorithm.}
\label{figure:mod-switch}
\end{figure}

%TODO This is definition style
%\begin{definition}[Modulus Switching for LWE Samples] 
%We define modulus switching  from $\ZZ_Q$ to $\ZZ_q$ by the following algorithm.  
%\begin{description}
%\item[$\ModSwitch(\vec{c}, Q, q):$]  
%Takes as input a LWE sample $\vec{c} = [b, \vec{a}^{\top}]^{\top} \in \LWE_{\bound, n, Q}(\vec{s}, .)$, $Q \in \NN$ and $q \in \NN$.  
%\begin{enumerate}
%\item  Set $b_q \exec \round{\frac{q \cdot b}{Q}}$. 
%\item  Set $\vec{a}_q = \round{\frac{q \cdot \vec{a}}{Q}}$.  
%\item Output $\ct_{\out} \exec [b_q, \vec{a}_q^{\top}]^{\top} \in \ZZ_q^{(n+1) \times 1}$.
%\end{enumerate} 
%\end{description}
%\end{definition}

To conveniently describe our $\SampleExtract$, we first need to introduce a step function which we call $\NegSgn:\RR \mapsto \{-1, 1\}$ and which is defined as
\begin{align*}
\NegSgn(x) = 
\begin{cases}
1 & \text{if}~ 	x \ge  0 \\
-1 & \text{otherwise}
\end{cases}
\end{align*}
Note that the function is equal in all points to the sign function except for $x=0$, which its image is $-1$.

\begin{figure}
\center
\fbox{ 
\procedure[syntaxhighlight=auto ]{$\SampleExtract(\ct, k):$}{  
    \textbf{Input:~}  
    \text{Takes as input $\ct = [\fr{b}, \fr{a}]^{\top} \in \RLWE_{\bound, \deg, q}(\fr{s}, .)$}\\
    \text{  and an index $k \in [\deg]$.}
 \\[][\hline]  
\pcln \text{Set $b' = \coefs(\fr{b})[k]$.}  \\
\pcln \text{Set  $\vec{a} \exec  \coefs(\fr{a}) \in \ZZ_q^{\deg}$. }\\
\pcln \text{For all $i \in [\deg]$}\\
\pcln\pcind \text{Set $\vec{a}'[i] \exec \NegSgn(k - i + 1) \cdot  \vec{a}[(k - i \text{~mod~} \deg) + 1]$.} \\
\pcln \text{Return $\vec{c} \exec (\vec{a}', b')$.}
}    
} 
\caption{Sample and key extraction. The $\KeyExtract(\fr{s})$ function on input key $\fr{s} \in \R_{\deg, q}$ outputs its coefficient vector. That is it outputs $\coefs(\fr{s})$.}
\label{figure:mod-switch}
\end{figure}

%TODO Definition version
%\begin{definition}[Sample Extraction]
%Sample extraction consists of algorithms $\KeyExtract$ and $\SampleExtract$, defined as follows.
%\begin{description}
%\item[$\KeyExtract(\fr{s}):$]
%Takes as input a key $\fr{s} \in \R_{\deg, q}$.
%\begin{enumerate}
%\item Output $\coefs(\fr{s})$.
%\end{enumerate}
%
%\item[$\SampleExtract(\ct, k):$]
%Takes as input $\ct = [\fr{b}, \fr{a}]^{\top} \in \RLWE_{\bound, \deg, q}(\fr{s}, .)$  and an index $k \in [\deg]$. 
%\begin{enumerate}
%\item Set $b' = \coefs(\fr{b})[k]$.  
%\item Set  $\vec{a} \exec  \coefs(\fr{a}) \in \ZZ_q^{\deg}$.   
%\item For all $i \in [\deg]$ set $\vec{a}'[i] \exec \NegSgn(k - i + 1) \cdot  \vec{a}[(k - i \mod \deg) + 1]$. 
%\item Return $\vec{c} \exec (\vec{a}', b')$.
%\end{enumerate}
%\end{description} 
%\end{definition}

\begin{figure*}
\center
\fbox{ 
\procedure[syntaxhighlight=auto ]{$\KeySwitchSetup(\bound_{\keySwitchKey}, n, n', \deg, q, \vec{s}, \vec{s}', \basis_{\keySwitchKey})$}{  
    \textbf{Input:~}  
    \text{A bound $\bound_{\keySwitchKey} \in \NN$, dimensions $n, n' \in \NN$,} \\
    \text{ a degree $\deg \in \NN$ and a modulus $q \in \NN$ that define $\R_{\deg, q}$,} \\
      \text{vectors  $\vec{s} \in \ZZ_q^n$, $\vec{s}' \in \R_{\deg, q}^{n'}$ and a basis $\basis_{\keySwitchKey} \in \NN$.}
 \\[][\hline]  
\pcln \text{Set $\ell_{\keySwitchKey} = \roundUp{\log_{\basis_{\keySwitchKey}}{q}}$.} \\
\pcln \text{For $i \in [n]$,  $j \in [\ell_{\keySwitchKey}]$} \\  
\pcln\pcind \text{$\keySwitchKey[i,j] \exec \GLWE_{\bound_{\keySwitchKey}, n', \deg, q}(\vec{s}', \vec{s}[i] \cdot \basis_{\keySwitchKey}^{j-1})$.} \\
\pcln \text{Output $\keySwitchKey \in \GLWE_{\bound_{\keySwitchKey}, n', \deg, q}(\vec{s}', .)^{n \times \ell_{\keySwitchKey}}$.} 
}
  
\pchspace[0.2cm]

\procedure[syntaxhighlight=auto ]{$\KeySwitch(\vec{c}, \keySwitchKey)$}{  
    \textbf{Input:~}  
   \text{Takes as input  $\vec{c} = [b, \vec{a}^{\top}]^{\top} \in \LWE_{\bound_{\vec{c}}, n, q}(\vec{s}, .)$} \\
   \text{and a key switching key $\keySwitchKey \in \GLWE_{\bound_{\keySwitchKey}, n', \deg, q}(\vec{s}', .)^{n \times \ell_{\keySwitchKey}}$.}
 \\[][\hline]  
\pcln \text{Denote $\ell_{\keySwitchKey} = \roundUp{\log_{\basis_{\keySwitchKey}}{q}}$. }\\
\pcln \text{Compute $\ma{A} \exec \Decomp_{\basis_{\keySwitchKey}, q}(\vec{a}) \in \ZZ_{\basis_{\keySwitchKey}}^{n \times \ell_{\keySwitchKey}}$.    }\\
\pcln \text{Output $c_{\out} \exec [\fr{b}, 0]^{\top} - \sum_{i=1}^n \sum_{j=1}^{\ell_{\keySwitchKey}} , \ma{A}[i,j] \cdot \keySwitchKey[i,j]$. } 
}    
} 
\caption{Key switching algorithm and its setup.}
\label{figure:keyswitch}
\end{figure*}

\begin{figure*}
\center
\fbox{ 
\procedure[syntaxhighlight=auto ]{$\BRKeyGen(n, \vec{s}, \vec{u}, \bound_{\brKey}, \deg, Q, \basis_{\RGSW}, \fr{s})$}{  
    \textbf{Input:~}  
    \text{A dimension $n \in \NN$, a secret key $\vec{s} \in \ZZ_t^{n \times 1}$,} \\
    \text{a vector $\vec{u} \in \ZZ^{u}$, a bound $\bound_{\brKey}$, a degree $\deg \in \NN$, }\\
 \text{a modulus $Q$ defining the ring $\R_{\deg, Q}$, a basis $\basis_{\RGSW} \in \NN$,} \\
 \text{and a secret key $\fr{s} \in \R_{\deg, Q}$.}
 \\[][\hline]  
\pcln  \text{Set a matrix $\ma{Y} \in \BB^{n \times u}$ as follows:} \\
\pcln \text{For $i \in [n]$ do} \\
\pcln\pcind \text{Set the $i$-th row of $\ma{Y}$, to satisfy $\vec{s}[i] = \sum_{j=1}^{u} \ma{Y}[i, j] \cdot \vec{u}[j]$,} \\
\pcln \text{For $i \in [n]$, $j \in [u]$ } \\
\pcln\pcind \text{Set $\brKey[i,j] = \RGSW_{\bound_{\brKey}, \deg, Q, \basis_{\RGSW}}(\fr{s}, \ma{Y}[i,j])$.} \\
\pcln \text{Output  $\brKey$.}
}
  
\pchspace[0.1cm]

\procedure[syntaxhighlight=auto ]{$\BlindRotate(\brKey, \acc, \ct, \vec{u})$}{  
    \textbf{Input:~}  
   \text{Blind rotation key $\brKey$ $=$  $\RGSW_{\bound_{\brKey}, \deg, Q, \basis_{\RGSW}}(\fr{s}, .)^{n \times u}$,} \\
   \text{an accumulator $\acc \in \RLWE_{\bound_{\acc}, \deg, Q}(\fr{s}, .)$,} \\
   \text{a ciphertext $\ct \in  \LWE_{\bound_{\ct}, n, 2 \cdot \deg}(\vec{s}, .)$,} \\
   \text{where $\vec{s} \in \ZZ_t^{n}$ with $t \leq q$, and a vector $\vec{u} \in \ZZ^{u}$.}
 \\[][\hline]  
\pcln \text{Let $\vec{c} = [b, \vec{a}^{\top}]^{\top} \in \ZZ_{2 \cdot \deg}^{(n+1) \times 1}$.}\\
\pcln \text{For $i \in [n]$ do} \\ 
\pcln\pcind  \text{For $j \in [u]$ do}\\ 
\pcln\pcind\pcind \text{$\acc \exec \CMux(\brKey[i, j], \acc \cdot X^{-\vec{a}[i] \cdot \vec{u}[j]}, \acc)$.} \\
\pcln \text{Output $\acc$.} 
}    
} 
\caption{Blind Rotation and its setup.}
\label{figure:blind-rotation}
\end{figure*}

\begin{figure}
\center
\fbox{ 
\procedure[syntaxhighlight=auto ]{$\Bootstrap(\brKey,  \vec{u}, \ct, \preRotPoly, \keySwitchKey)$}{  
    \textbf{Input:~}  
    \text{Blind rotation key $\brKey$ $=$  $\RGSW_{\bound_{\brKey}, \deg, Q, \basis_{\brKey}}(\fr{s}$, $.)^{n \times u}$,} \\
    \text{a vector $\vec{u} \in \ZZ^{u}$,} \\
\text{a LWE sample $\ct$ $=$ $\LWE_{\bound_{\ct}, n, q}(\vec{s}$, $.)$ $=$ $[b$, $\vec{a}^{\top}]^{\top} \in \ZZ_q^{(n+1) \times 1}$, } \\
\text{a polynomial  $\preRotPoly \in \R_{\deg, Q}$,} \\
\text{and a LWE to LWE keyswitch key $\keySwitchKey$.}
 \\[][\hline]  
\pcln \text{Set $\ct_{\deg} \exec \ModSwitch(\ct, q, 2 \cdot \deg) = [b_{\deg}, \vec{a}_{\deg}^{\top}]^{\top}$.} \\
\pcln \text{Set $\preRotPoly' \exec \preRotPoly \cdot X^{b_{\deg}} \in \R_{\deg, Q}$.} \\
\pcln \text{Run $\acc_{F} \exec [\preRotPoly', 0]^{\top}$} \\
\pcln \text{Run $\acc_{\BR, F} \exec \BlindRotate(\brKey, \acc_{F}, \ct_{\deg}, \vec{u})$.   } \\
\pcln \text{Run $\vec{c}_{Q} \exec \SampleExtract(\acc_{\BR, F}, 1)$.} \\ 
\pcln \text{Run $\vec{c}_{Q, \keySwitchKey} \exec \KeySwitch(\vec{c}_{Q}, \keySwitchKey)$.} \\
\pcln \text{Run $\ct_{\out} \exec \ModSwitch(\vec{c}_{Q, \keySwitchKey}, Q, q)$.} \\
\pcln \text{Return $\ct_{\out}$. } 
} 
} 
\caption{TFHE Bootstrapping.}
\label{figure:tfhe-bootstrapping}
\end{figure}

%TODO Definition version
% \begin{definition}[Bootstrapping]\label{def:tfhe-bootstrapping}
%The bootstrapping procedure is as follows:
%\begin{description}
%\item[$\Bootstrap(\brKey,  \vec{u}, \ct, \preRotPoly, \keySwitchKey):$]
%Takes as input a blind rotation key $\brKey$ $=$ \linebreak $\RGSW_{\bound_{\brKey}, \deg, Q, \basis_{\brKey}}(\fr{s}$, $.)^{n \times u}$, 
%a vector $\vec{u} \in \ZZ^{u}$,
%a LWE sample $\ct$ $=$ $\LWE_{\bound_{\ct}, n, q}(\vec{s}$, $.)$ $=$ $[b$, $\vec{a}^{\top}]^{\top} \in \ZZ_q^{(n+1) \times 1}$, 
%a polynomial  $\preRotPoly \in \R_{\deg, Q}$, and a LWE to LWE keyswitch key $\keySwitchKey$.
%\begin{enumerate} 
%
%\item Set $\ct_{\deg} \exec \ModSwitch(\ct, q, 2 \cdot \deg) = [b_{\deg}, \vec{a}_{\deg}^{\top}]^{\top}$.
%  
%\item Set $\preRotPoly' \exec \preRotPoly \cdot X^{b_{\deg}} \in \R_{\deg, Q}$.
%
%\item Run $\acc_{F} \exec [\preRotPoly', 0]^{\top}$
%
%\item Run $\acc_{\BR, F} \exec \BlindRotate(\brKey, \acc_{F}, \ct_{\deg}, \vec{u})$.   
%
%\item Run $\vec{c}_{Q} \exec \SampleExtract(\acc_{\BR, F}, 1)$.
%   
%\item Run $\vec{c}_{Q, \keySwitchKey} \exec \KeySwitch(\vec{c}_{Q}, \keySwitchKey)$.
% 
%\item Run $\ct_{\out} \exec \ModSwitch(\vec{c}_{Q, \keySwitchKey}, Q, q)$.
%
%\item Return $\ct_{\out}$.  
%\end{enumerate}
%\end{description}
%\end{definition}

\section{Correctness Analysis for Preliminaries}\label{sec:correctness-for-preliminaries}

\begin{proof}[of lemma \ref{lemma:linear-homo-glwe} - Linear Homomorphism of GLWE samples] 
Denote $\vec{c} = [\fr{b}_{\vec{c}}, \vec{a}_{\vec{c}}^{\top}]^{\top}$, where $\fr{b}_{\vec{c}} = \vec{a}_{\vec{c}}^{\top} \cdot \vec{s} + \msg_{\vec{c}} + \fr{e}_{\vec{c}}$ and $\vec{d} = [\fr{b}_{\vec{d}}, \vec{a}_{\vec{d}}^{\top}]^{\top}$, where $\fr{b}_{\vec{d}} = \vec{a}_{\vec{d}}^{\top} \cdot \vec{s} + \msg_{\vec{d}} + \fr{e}_{\vec{d}}$.
Then we have $\vec{c}_{\out} = [\fr{b}, \vec{a}^{\top}]^{\top}  = \vec{c} + \vec{d} = [\fr{b}_{\vec{c}} + \fr{b}_{\vec{d}}, (\vec{a}_{\vec{c}}  + \vec{a}_{\vec{d}})^{\top}]^{\top}$, and
$\fr{b} = \vec{a}^{\top} \cdot \vec{s} + \msg + \fr{e}$, where
$\msg = \msg_{\vec{c}} + \msg_{\vec{d}}$ and $\fr{e} = \fr{e}_{\vec{c}} + \fr{e}_{\vec{d}}$.
Thus, we have that $\Error(\vec{c}_{\out}) = \fr{e}_{\vec{c}} + \fr{e}_{\vec{d}}$,
and $\norm{\Var(\Error(\vec{c}_{\out}))}_{\infty} \leq \bound_{\vec{c}} + \bound_{\vec{d}}$. 

Let $\fr{d} \in \R_{\deg, \basis}$ 
and $\vec{c}_{\out} = \vec{c} \cdot \fr{d} = [\fr{b}_{\fr{d}}, \vec{a}_{\fr{d}}^{\top}]^{\top}$, where
$\vec{a}_{\fr{d}} = \vec{a} \cdot \fr{d}$ and $\fr{b}_{\fr{d}} = \vec{a}_{\fr{d}}^{\top} \cdot \vec{s} + \msg_{\vec{c}} \cdot \fr{d} + \fr{e}_{\vec{c}} \cdot \fr{d}$.

If $\fr{d}$ is a constant monomial with $\norm{\fr{d}}_{\infty} \leq \bound_{\fr{d}}$, then we have
$\norm{\Var(\Error(\vec{c}_{\out}))}_{\infty} \leq \norm{\bound_{\fr{d}}^2 \cdot\Var(\fr{e}_{\vec{c}})}_{\infty} \leq \bound_{\fr{d}}^2 \cdot \bound_{\vec{c}}$.
If $\fr{d}$ is a monomial with its non-zero coefficient uniformly distributed over $[0, \basis-1]$, then
we have $\norm{\Var(\Error(\vec{c}_{\out}))}_{\infty} =  \norm{\frac{1}{3} \cdot \basis^2 \cdot \Var(\fr{e}_{\vec{c}})}_{\infty}$.
This follows from the fact that $\Var(\fr{d}) = \frac{((\basis - 1) - 0 + 1)^2 - 1}{12} = \frac{\basis^2 - 1}{12} \leq \frac{\basis^2}{12}$,
and $\E(\fr{d}) = \frac{\basis - 1 + 0}{2} \leq \frac{\basis}{2}$.
Then for some random variable $X$ with $\E(X) = 0$ 
and assuming $\fr{d}$ and $X$ are uncorrelated we have 
$\Var(\fr{d} \cdot X) = \Var(\fr{d}) \cdot \E(X)^2 + \Var(X) \cdot \E(\fr{d})^2 + \Var(\fr{d}) \cdot \Var(X) \leq \Var(X) \cdot (\frac{\basis^2}{4} + \frac{\basis^2}{12}) \leq  \Var(x) \cdot (\frac{3 \cdot \basis^2}{12} + \frac{\basis^2}{12}) \leq  \Var(x) \cdot \frac{\basis^2}{3}$.

%Let $\fr{d}$ be a polynomial in $\R_{\deg, \basis}$.
If $\fr{d} \in \R_{\deg, \basis}$ is a constant polynomial with $\norm{\fr{d}}_{\infty} \leq \bound_{\fr{d}}$ then
$\norm{\Var(\Error(\fr{d} \cdot \vec{c}))}_{\infty} \leq \deg \cdot \bound_{\fr{d}}^2  \cdot \bound_{\vec{c}}$.
If $\fr{d} \in \R_{\deg, \basis}$ has coefficients 
distributed uniformly in $[0, \basis-1]$, then
we have $\norm{\Var(\Error(\fr{d} \cdot \vec{c}))}_{\infty} \leq \frac{1}{3} \cdot \deg \cdot \basis^2  \cdot \bound_{\vec{c}}$.
%The above follows from the formula for the $j$th coefficient
%of a product of two polynomials $\coefs(\fr{d} \cdot \fr{e})[j] = \sum_{i=1} \coefs(\fr{e})[i] \cdot \coefs(\fr{d})[\deg - j]$
%
%For $\fr{d} \in \R_{\basis_{\fr{d}}}$ and $\vec{c}_{\out} = \vec{c} \cdot \fr{d} = [\fr{b}_{\fr{d}}, \vec{a}_{\fr{d}}^{\top}]^{\top}$, where
%$\vec{a}_{\fr{d}} = \vec{a} \cdot \fr{d}$ and $\fr{b}_{\fr{d}} = \vec{a}_{\fr{d}}^{\top} \cdot \vec{s} + \msg_{\vec{c}} \cdot \fr{d} + \fr{e}_{\vec{c}} \cdot \fr{d}$.
%Thus $\norm{\Error(\vec{c}_{\out})}_2 \leq \bound  \leq  \norm{\fr{e}_{\vec{c}} \cdot \fr{d}}_2 = \expFact_{\R} \cdot \norm{\fr{e}}_2 \cdot \norm{\fr{d}}_2 \leq
% \sqrt{\deg} \cdot \bound_{\vec{c}} \cdot (\basis_{\fr{d}}-1) \cdot \sqrt{\deg} \leq \deg \cdot \bound_{\vec{c}} \cdot (\basis_{\fr{d}} - 1)$. 
\end{proof}

\begin{proof}[of lemma \ref{lemma:linear-homo-ggsw} - Linear Homomorphism of GGSW samples]
The proof follows from the fact that rows of GGSW samples are GLWE samples.
%Let $\ell = \roundUp{\log_{\basis}{q}}$.
%Denote $\ma{C} = \ma{A}_{\ma{C}} + \msg_{\ma{C}} \cdot \ma{G}_{\ell, \basis, (n+1)} \in \R_{\deg, q}^{(n+1)\ell \times (n+1)}$ 
%and $\ma{D} = \ma{A}_{\ma{D}} + \msg_{\ma{D}} \cdot \ma{G}_{\ell, \basis, (n+1)} \in \R_{\deg, q}^{(n+1)\ell \times (n+1)}$. Then
%$\ma{C}_{\out} = \ma{C} + \ma{D} = \ma{A}_{C} + \ma{A}_{D} + (\msg_{\ma{C}} + \msg_{\ma{D}}) \cdot \ma{G}_{\ell, \basis, (n+1)} \in \R_{\deg, q}^{(n+1)\ell \times (n+1)}$.
%Denote $\msg = \msg_{\ma{C}} + \msg_{\ma{D}}$ and $\ma{A} = \ma{A}_{C} + \ma{A}_{D}$,
%so that $\ma{C}_{\out} = \ma{A} + \msg \cdot \ma{G}_{\ell, \basis, (n+1)}$.
%Finally since, each row in $\ma{A}_{\ma{C}}$ and $\ma{A}_{\ma{D}}$ is GLWE with error distribution respectively $\bound_{\ma{C}}$ and
%$\bound_{\ma{D}}$-Bounded, we have that
%all rows of $\ma{A}$ are GLWE samples of zero with
% error distribution $\bound$-Bounded where $\bound \leq \bound_{\ma{C}} + \bound_{\ma{D}}$.
% 
%%TODO This shouldn't be basis, because it collides with the actual basis 
%Furthermore let $\fr{d} \in \R_{\basis_{\fr{d}}}$ and $\ma{C}_{\out} = \ma{C} \cdot \fr{d} = \ma{A}_{\ma{C}} \cdot \fr{d} + \msg_{\ma{C}} \cdot \fr{d} \cdot \ma{G}_{\ell, \basis, (n+1)} \in \R_{\deg, q}^{(n+1)\ell \times (n+1)}$.
%Then by linear homomorphism of GLWE we have that each row of $\ma{C}_{\out} = \ma{A}_{\ma{C}} \cdot \fr{d}$ 
%is a GLWE sample of zero with  error distribution $\bound$-Bounded where $\bound \geq \deg \cdot \bound_{\ma{C}} \cdot (\basis_{\fr{d}} - 1)$.
\end{proof}

\begin{proof}[of lemma \ref{lemma:correctness-external-product} - Correctness of the External Product]
Denote $\ma{C} = \ma{A} + \msg_{\ma{C}} \cdot \ma{G}_{\ell, \basis, (n+1)} \in \R_{\deg, q}^{(n+1)\ell \times (n+1)}$
and $\vec{d} = [\fr{b}, \vec{a}^{\top}]^{\top} \in \R_{\deg, q}^{(n + 1) \times 1}$.
To ease the exposition, denote $\vec{r} = \Decomp_{\basis, q}(\vec{d}^{t}) \in \R_{\basis}^{1 \times (n+1) \ell}$
and 
$\ma{A} = [\vec{c}_1, \dots, \vec{c}_{(n+1)\ell}]^{\top}$, where for $i \in [(n+1)\ell]$ the vector $\vec{c}_i$
is the transposed GLWE sample of zero making up the row of the matrix $\ma{A}$.
Then from the definition of external product we have $
\vec{r} \cdot \ma{C} = \vec{r} \cdot \ma{A} + \msg_{\ma{C}} \cdot \vec{r} \cdot \ma{G}_{\ell, \basis, (n+1)}$.
From correctness of the gadget decomposition, we have that
$\vec{r} \cdot  \ma{G}_{\ell, \basis, (n+1)} = \vec{d} \in  \R_{\deg, q}^{(n + 1) \times 1}$.
Then, 
%TODO What is this \ma{c}' ???? Don't remember
$\ma{c}' = \vec{r} \cdot \ma{A} = \sum_{i=1}^{(n+1) \ell} \vec{r}[i] \cdot \vec{c}_i  \in  \R_{\deg, q}^{(n + 1) \times 1}$
is a transposed GLWE sample of zero.
Thus we can write $\vec{c} = \sum_{i=1}^{(n+1) \ell} \vec{r}[i] \cdot \vec{c}_i + \msg_{\ma{C}} \cdot \vec{d}$
and we have $\vec{c}^{\top}$ is a valid GLWE sample of $\msg_{\ma{C}} \cdot \msg_{\vec{d}}$.
Finally $\bound = \norm{\Var(\Error(c))}_{\infty}$ follows from the analysis
of linear combinations of GLWE samples.
In particular, we have
\begin{itemize}
\item $\bound \leq \frac{1}{3}  \cdot \deg  \cdot (n + 1) \cdot \ell \cdot \basis^2 \cdot \bound_{\ma{C}} + \deg \cdot \bound_{\msg_{\ma{C}}}^2 \cdot \bound_{\vec{d}}$ in general, and

\item $\bound \leq \frac{1}{3}   \cdot \deg \cdot (n + 1) \cdot \ell  \cdot \basis^2 \cdot \bound_{\ma{C}} + \bound_{\msg_{\ma{C}}}^2 \cdot \bound_{\vec{d}}$ when $\msg_{\ma{C}}$ is a monomial.
\end{itemize}
%
%
%Finally, if we denote $\Error(\vec{c}_i) = \fr{e}_i$
%for $i \in [(n+1)\ell]$ and $\Error(\vec{d}) = \fr{e}_{\vec{d}}$,
%then we have $\Error(\vec{c}^{\top}) = \sum_{i=1}^{(n+1) \ell} \vec{r}[i] \cdot \fr{e}_i + \msg_{\ma{C}} \cdot \fr{e}_{\vec{d}}$
%and   
%\begin{align*}
%\norm{\Error(\vec{c}^{\top})}_2 &\leq \expFact_1 \cdot \sum_{i=1}^{(n+1) \ell} \norm{\vec{r}[i]}_2 \cdot \norm{\fr{e}_i}_2 + \expFact_2 \norm{\msg_{\vec{d}}}_2 \cdot \norm{\fr{e}_{\vec{d}}}_2 \\
% &\leq   \expFact_1 \cdot \sqrt{\deg}  \cdot  (n+1) \cdot \ell \cdot (\basis - 1) \cdot \bound_{\ma{C}} + \expFact_2 \cdot \bound_{\msg_{\ma{C}}} \cdot \bound_{\vec{d}} 
%\end{align*}
%where
%$\norm{\msg_{\ma{C}}}_2 \leq \bound_{\msg_{\vec{C}}}$,
%and   $\expFact_1$ is the expansion factor for  $\vec{r}[i] \cdot \fr{e}_i$,
%and $\expFact_2$ is the expansion factor for  $\msg_{\ma{C}} \cdot \fr{e}_{\vec{d}}$.
%Now, note that $\expFact_1 \leq \sqrt{\deg}$,
%and
%in general we have $\expFact_2 \leq \sqrt{\deg}$.
%However, for the case where $\msg_{\ma{C}}$ consists of only a single coefficient we have
%$\expFact_2 = 1$, and for this case we obtain the bound
%$\bound \leq  \deg   \cdot  (n+1) \cdot \ell \cdot (\basis - 1) \cdot \bound_{\ma{C}} +  \bound_{\msg_{\vec{C}}} \cdot \bound_{\vec{d}}$.
\end{proof}

\begin{proof}[of lemma \ref{lemma:correctness-of-cmux} - Correctness of the $\CMux$ Gate]
Let us first denote
$\vec{d} = \vec{g} - \vec{h} = \GLWE_{\bound_{\vec{g}}}(\vec{s}, \msg_{\vec{g}}) - \GLWE_{\bound_{\vec{h}}}(\vec{s}, \msg_{\vec{h}}) = \GLWE_{\bound_{\vec{d}}}(\vec{s}, \msg_{\vec{d}})$,
where $\msg_{\vec{d}} = \msg_{\vec{g}} - \msg_{\vec{h}}$.
Then from correctness of the external product and linear homomorphism of GLWE samples
we have
\begin{align*}
\vec{c}_{\out} &= \extProd(\ma{C}, \vec{d}) + \vec{h} \\
 &= \extProd(\GGSW_{\bound_{\ma{C}}, n, \deg, q, \basis}(\vec{s}, \msg_{\ma{C}}), \\
  &~~+ \GLWE_{\bound_{\vec{d}}}(\vec{s}, \msg_{\vec{d}})) + \GLWE_{\bound_{\vec{h}}}(\vec{s}, \msg_{\vec{h}}) \\
  &= \GLWE_{\bound}(\vec{s}, \msg_{\out}),
\end{align*}
where $\msg_{\out} = \msg_{\ma{C}} \cdot (\msg_{\vec{g}} - \msg_{\vec{h}}) + \msg_{\vec{h}}$.
Hence if $\msg_{\out} = 0$, we have $\msg_{\out} = \msg_{\vec{h}}$,
and if $\msg_{\out} = 1$, we have $\msg_{\out} = \msg_{\vec{g}}$.

For the error analysis let us denote explicitly 
$\Error(\vec{g}) = \fr{e}_{\vec{g}}$ and $\Error(\vec{h}) = \fr{e}_{\vec{h}}$.
Thus we also have $\Error(\vec{d}) = \fr{e}_{\vec{d}} = \fr{e}_{\vec{g}} = \fr{e}_{\vec{h}}$.
We also denote $\Error(\ma{C}[i]^{\top}) = \fr{e}_i$
  for $i \in [(n+1) \cdot \ell]$ and $\ell = \roundUp{\log_{\basis}{q}}$,
and $\Decomp_{\basis, q}(\vec{d}) = \vec{r}$.
Then, from the correctness analysis
of the external product
we have that  
\begin{align*}
\Error(\vec{c}_{\out}) &= \sum_{i=1}^{(n+1)\cdot \ell} \vec{r}[i] \cdot \fr{e}_i + \msg_{\ma{C}} \cdot \fr{e}_{\vec{d}} + \fr{e}_{\vec{h}} \\
&= \sum_{i=1}^{(n+1)\cdot \ell} \vec{r}[i] \cdot \fr{e}_i + \msg_{\ma{C}} \cdot \fr{e}_{\vec{g}} - \msg_{\ma{C}} \cdot \fr{e}_{\vec{h}} + \fr{e}_{\vec{h}}.
\end{align*}
Thus if we have $\msg_{\ma{C}} = 0$, then
$\Error(\vec{c}_{\out}) = \sum_{i=1}^{(n+1)\cdot \ell} \vec{r}[i] \cdot \fr{e}_i +  \fr{e}_{\vec{h}}$,
and if $\msg_{\ma{C}} = 1$, we obtain
$\Error(\vec{c}_{\out})= \sum_{i=1}^{(n+1)\cdot \ell} \vec{r}[i] \cdot \fr{e}_i + \fr{e}_{\vec{g}}$.

Finally, we can place an upper on the variance
$\bound \leq \frac{1}{3} \cdot (n+1) \cdot \deg \cdot \ell \cdot \basis^2 \cdot \bound_{\ma{C}} + \max(\bound_{\vec{d}}, \bound_{\vec{h}})$.
\end{proof}

\begin{proof}[of lemma \ref{lemma:correctness-modulus-switching} - Correctness of Modulus Switching]
Denote $\vec{c} = (b, \vec{a})$, where 
$b = \vec{a}^{\top} \cdot \vec{s} + \Delta_{Q, t}  \cdot \msg + e \in \ZZ_Q$
and $\norm{\Var(e)}_{\infty} \leq \bound_{\vec{c}}$.
From the assumption that $Q = 0 \mod t$ we have $\Delta_{Q, t} = \round{\frac{Q}{t}} = \frac{Q}{t}$.
Then we have that the following:
\begin{align*}
\Phase(\round{\frac{q}{Q} \cdot \vec{c}}) &= \round{\frac{q}{Q} \cdot b} - \round{\frac{q}{Q} \cdot \vec{a}^{\top}} \cdot \vec{s} \\
&= \frac{q}{Q} \cdot b + r - \frac{q}{Q} \cdot \vec{a}^{\top} \cdot \vec{s} + \vec{r}^{\top} \cdot \vec{s} \\
&=  \frac{q}{Q} \cdot \Delta_{Q, t}  \cdot \msg +  \frac{q}{Q} \cdot e + r +  \vec{r}^{\top} \cdot \vec{s} \\
&= \frac{q}{t} \cdot \msg +  \frac{q}{Q} \cdot e + r +  \vec{r}^{\top} \cdot \vec{s}  
\end{align*}
where $r \in \RR$ and $\vec{r} \in \RR^{n}$ are in $[-\frac{1}{2}, \frac{1}{2}]$.
Furthermore, $r$ and $\vec{r}$ are close to uniform distribution over their support.
Denote as $t \in \ZZ$  a uniformly random variable over $\{-1, 0, 1\}$. 
Clearly we have that $\Var(r) \leq \Var(t)$ and similarly for all entries of $\vec{r}$.
Recall that $\norm{\Var(t)}_{\infty} = \frac{2}{3}$ and $\norm{\E(t)}_{\infty} = 0$.
Let us denote $\hamming(\vec{s}) = h$.
Therefore, we have
\begin{align*}
\norm{&\Var(\Error(\round{\frac{q}{Q} \cdot \vec{c}})))}_{\infty} \\
 &= \norm{\Var(\frac{q}{Q} \cdot e + r +  \vec{r}^{\top} \cdot \vec{s})}_{\infty} \\
&= \norm{\Var(\frac{q}{Q} \cdot e)}_{\infty} + \norm{\Var(r)}_{\infty} +  \norm{\Var(\vec{r}^{\top} \cdot \vec{s}))}_{\infty} \\
&\leq \frac{q^2}{Q^2} \cdot \bound_{\vec{c}} + \frac{2}{3} + \sum_{i=1}^{\hamming(\vec{s})} \norm{\Var(\vec{r}[i] \cdot \vec{s}[i])}_{\infty} \\
&\leq \frac{q^2}{Q^2} \cdot \bound_{\vec{c}} + \frac{2}{3} + \frac{2}{3} \cdot \hamming(\vec{s}) \cdot \big(\norm{\Var(\vec{s})}_{\infty} + \norm{\E(\vec{s})}_{\infty}^2\big) \\
\end{align*}
The last inequality follows from the
fact that for all $i \in [\hamming(\vec{s})]$ we have
\begin{align*}
\norm{&\Var(\vec{r}[i] \cdot \vec{s}[i])}_{\infty} \\
 &\leq \norm{\big(\E(\vec{s}[i])^2 \cdot \Var(t) + \E(t)^2 \cdot \Var(\vec{s}[i]) \\
 &~~+ \Var(\vec{s}[i]) \cdot \Var(t) \big)}_{\infty} \\
&= \norm{\big(\E(\vec{s}[i])^2 \cdot \frac{2}{3} + \Var(\vec{s}[i]) \cdot \frac{2}{3} \big)}_{\infty} \\
&= \frac{2}{3} \cdot \big(\norm{\E(\vec{s})}_{\infty}^2  + \norm{\Var(\vec{s})}_{\infty}\big).
\end{align*}
In the above we assume that $\norm{\Var(\vec{s}[i])}_{\infty} = \norm{\Var(\vec{s})}_{\infty}$ for all $i \in [n]$.

%
%Then $\vec{c}_{\out} = (\frac{q}{Q}) \cdot \vec{c} = [b_q, \vec{a}_q^{\top}]^{\top}$.
% 
%Note that $b_q = \round{\frac{q}{Q} \cdot  \vec{a}^{\top} \cdot \vec{s} +  (\frac{q}{t}) \cdot \msg + (\frac{q}{Q}) \cdot e} = \vec{a}_q^{\top} \cdot \vec{s} + (\frac{q}{t}) \cdot \msg + e_q$.
%Now the error $e_q$ is given by 
%$e_q \leq (\frac{q}{Q}) \cdot e + e'$ where 
%\begin{align*}
%e' &= \norm{\Phase((\frac{q}{Q}) \cdot \vec{c}) - \Phase((\frac{q}{Q}) \cdot \vec{c}_{\out})}_2 \\
%   &=  \norm{ \sum_{i=1}^n  (\frac{q}{Q}) \cdot \vec{c}[i] \cdot \vec{s}[i] - \sum_{i=1}^n \round{(\frac{q}{Q}) \cdot \vec{c}[i]} \cdot \vec{s}[i]}_2 \\
%   &\leq \norm{ \sum_{i=1}^n  (\frac{q}{Q}) \cdot \vec{c}[i] \cdot \vec{s}[i] - \sum_{i=1}^n ((\frac{q}{Q}) \cdot \vec{c}[i] + 1) \cdot \vec{s}[i]}_2 \\
%   &\leq \norm{- \sum_{i=1}^{n} \vec{s}[i]}_2 \\
%\end{align*}
%So, if $\norm{\vec{s}}_2 \leq \bound_{\skDist}$ then
%we have that $\Error(\vec{c}_{\out}) \leq  (\frac{q}{Q}) \cdot \bound_{\vec{c}} + \bound_{\skDist}$.
%
%Finally, if $\bound_{\vec{c}} \leq \frac{Q}{2 \cdot t} - (\frac{Q}{q}) \cdot \bound_{\skDist}$,
%then $\Error(\vec{c}_{\out}) \leq (\frac{q}{Q}) \cdot \bound_{\vec{c}} + \bound_{\skDist} \leq (\frac{q}{2 \cdot t}) -  \bound_{\skDist} + \bound_{\skDist} \leq (\frac{q}{2 \cdot t})$.
%What follows is that $\round{\Phase(\vec{c}_{\out})}_{t}^q = \msg$.
\end{proof}

\begin{proof}[of lemma \ref{lemma:correctness-sample-extraction} - Correctness of Sample Extraction]
%TODO Revisit this and think whether there is any sense, to denote by a_i's etc. instead to work directly over vectors.
Denote $\vec{s} = \coefs(\fr{s}) \in \ZZ_q^{\deg}$ and $b' = \coefs(\fr{b})[k] \in \ZZ_q$
and $\vec{a} = \coefs(\fr{a}) \in \ZZ_q^{\deg}$.
Denote $\fr{b} = \fr{a} \cdot \fr{s} + \msg + \fr{e} \in \R_{\deg, q}$,  $\msg = \sum_{i=1}^{\deg} \msg_i \cdot X^{i-1}$
and $\fr{e} = \sum_{i=1}^{\deg} e_i \cdot X^{i-1}$
then it is easy to see, that $b' = \coefs(\fr{a} \cdot \fr{s})[k] + \coefs(\msg)[k] + \coefs(\fr{e}) = \coefs(\fr{a} \cdot \fr{s})[k] + \msg_k + e_k$.
Furthermore, denote $\fr{a} = \sum_{i=1}^{\deg} a_i \cdot X^{i-1}$ and  $\fr{s} = \sum_{i=1}^{\deg} s_i \cdot X^{i-1}$.
Denote $\fr{s} \cdot \fr{a} = (\sum_{i=1}^{\deg} a_i \cdot X^{i-1}) \cdot (\sum_{i=1}^{\deg} s_i \cdot X^{i-1})$.
%TODO Rethink this part. I'm not sure whether this is correctly and understandably written.
By expanding the product we have that the $k$-th coefficient of $\fr{s} \cdot \fr{a}$ is given by 
$\sum_{i=1}^{\deg} s_i \cdot \NegSgn(k - i + 1) \cdot  a_{(k - i \mod \deg) + 1}$. Note that
when working over the ring $\R_{\deg, q}$ defined by the cyclotomic polynomial $\cyclo_{\deg} = X^{\deg} + 1$,
we have that for $(k - i \mod \deg) + 1 \leq 0$, the sign of $a_{(k - i \mod \deg) + 1}$ is reversed,
thus we multiply the coefficient with $\NegSgn(k - i + 1)$.
Now if we set $\vec{a}'[i] = \NegSgn(k - i + 1) \cdot  a_{(k - i \mod \deg) + 1}$,
then the extracted sample is a valid LWE sample of $\msg_k$ with respect to
key $\vec{s}$.
Finally, since %$\bound_{\ct}$ is $\bound_{\ct}$-bounded, and 
the error
in $e_k$ is a single coefficient of $e$, we have that $\bound_{\ct} = \bound$. 
\end{proof}

%TODO This is without precomputation
\begin{proof}[of lemma \ref{lemma:correctness-lwe-to-rlwe-packing} - Correctness of Key Switching]
Let us first note that for all $i \in [n]$ we have
\begin{align*}
[b_i, \vec{a}_i^{\top}]^{\top} &= \sum_{j=1}^{\ell_{\keySwitchKey}} \ma{A}[i,j] \cdot \keySwitchKey[i,j] \\
&= \sum_{j=1}^{\ell_{\keySwitchKey}} \ma{A}[i,j] \cdot \GLWE_{\bound_{\keySwitchKey}, n', \deg, q}(\vec{s}', \vec{s}[i] \cdot \basis_{\keySwitchKey}^{j}) \\
&=  \GLWE_{\bound_{1}, n', \deg, q}(\vec{s}', \vec{a}[i] \cdot \vec{s}[i]),
\end{align*}
where  
$\bound_1 \leq (\frac{1}{3}) \cdot \ell_{\keySwitchKey} \cdot \basis_{\keySwitchKey}^2 \cdot \bound_{\keySwitchKey}$.
Then  note that
\begin{align*}
[b', \vec{a}'^{\top}]^{\top} &= \sum_{i=1}^{n} \GLWE_{\bound_{1}, n', \deg, q}(\vec{s}',  \vec{a}[i] \cdot \vec{s}[i]) \\ 
 &=  \GLWE_{\bound_{2}, n', \deg, q}(\vec{s}', \sum_{i=1}^{n} \vec{a}[i] \cdot \vec{s}[i])  \\
 &= \GLWE_{\bound_{2}, n', \deg, q}(\vec{s}', \vec{a}^{\top} \cdot \vec{s}),
\end{align*}
where  
$\bound_{2} \leq n \cdot \bound_1 \leq  (\frac{1}{3}) \cdot n \cdot \ell_{\keySwitchKey} \cdot \basis_{\keySwitchKey}^2 \cdot \bound_{\keySwitchKey}$.
Let us denote $b = \vec{a}^{\top} \cdot \vec{s} + \msg + e$ and
$b' = \vec{a}'^{\top} \cdot \vec{s} + \vec{a}^{\top} \cdot \vec{s} + e'$,
then
\begin{align*}
\vec{c}_{\out} &= [b, 0]^{\top} - [b', \vec{a}'^{\top}]^{\top} \\
 &= [b - b', -\vec{a}'^{\top}]^{\top} \\
 &= [-\vec{a}'^{\top} \cdot \vec{s}' + \msg + e - e'].
\end{align*}
Hence, $\vec{c}_{\out}$ is a valid GLWE sample of $\msg$ with respect to key $\vec{s}'$ 
and 
\begin{align*}
\norm{\Var(\Error(\vec{c}))}_{\infty} &\leq \bound \\
 &\leq \norm{e - e'}_2 \leq \bound_{\vec{c}} + \bound_{2} \\
  &\leq \bound_{\vec{c}} + (\frac{1}{3}) \cdot n \cdot \ell_{\keySwitchKey} \cdot \basis_{\keySwitchKey}^2 \cdot \bound_{\keySwitchKey}.
\end{align*}
 
\end{proof}

\begin{proof}[of lemma \ref{lemma:correctness-tfhe-blind-rotate} - Correctness of TFHE-Style Blind-Rotation]
Note that the blind rotation key $\brKey$ consists of samples of bits:
$\RGSW_{\bound_{\BR}, \deg, Q, \basis_{\RGSW}}(\fr{s}, \ma{Y}[i,j])$, where $\ma{Y}[i,j] \in \BB$ for $i \in [n]$ 
and $j \in [u]$.
Denote the current accumulator as $\acc_{curr} \in \RLWE_{\bound_{\curr}, \deg, Q}(\fr{s}, \msg_{\curr})$ for some $\msg_{\curr} \in \R_{\deg, Q}$.
If we increment 
$\acc_{\nextSym} \exec \CMux(\brKey[i, j], \acc_{\curr} \cdot X^{-\vec{a}[i] \cdot \vec{u}[j]}, \acc_{\curr})$
for some $i \in [n]$ and $j \in [u]$, then from correctness of the CMux gate 
we have that $\acc_{\nextSym} \in \RLWE_{\bound_{\nextSym}, \deg, Q}(\fr{s}, \msg_{\nextSym})$,
where $\msg_{\nextSym} = \msg_{\curr} \cdot X^{-\vec{a}[i] \cdot \vec{u}[j] \cdot K[i, j] \mod 2 \cdot \deg} \in \R_{\deg, Q}$.
Overall, we can represent the last message $\msg_{\out}$ as
\begin{align*}
\msg_{\out} &= \msg_{\acc} \cdot \prod_{i=1}^n \prod_{j=1}^{u}  X^{-\vec{a}[i] \cdot \vec{u}[j] \cdot K[i, j]} \\
 &=  \msg_{\acc} \cdot \prod_{i=1}^n X^{-\vec{a}[i] \cdot \vec{s}[i]} \\
 &= \msg_{\acc} \cdot X^{- \vec{a}^{\top} \cdot \vec{s} \mod 2 \cdot \deg}.
\end{align*}

From the analysis of the $\CMux$ gate we have that 
$\bound_{\nextSym} \leq \bound_{\curr} +  (\frac{2}{3}) \cdot \deg \cdot \ell_{\RGSW} \cdot \basis_{\RGSW}^2 \cdot \bound_{\brKey}$.
  
Finally since we perform $n \cdot u$ iterations,
we have that
\begin{align*} 
\bound_{\out} \leq  \bound_{\acc} +  (\frac{2}{3}) \cdot  n \cdot u \cdot \deg \cdot \ell_{\RGSW} \cdot \basis_{\RGSW}^2 \cdot \bound_{\brKey}.
\end{align*} 
\end{proof}

\begin{proof}[of theorem \ref{thm:correctness-tfhe-bootstrapping} - Correctness of TFHE Bootstrapping] 
Let us denote  $\Phase(\ModSwitch(\ct, q, 2 \deg)) = \msg' = \Delta_{2 \cdot \deg, t} \cdot \msg + e \mod 2 \deg$. 
From correctness of modulus switching we have $\norm{\Var(e)}_{\infty} \leq \bound_{\deg} \leq (\frac{2 \cdot \deg}{q})^2 \cdot \bound_{\ct} + \frac{2}{3}  + \frac{2}{3} + \hamming(\vec{s}) \cdot (\norm{\Var(\vec{s})}_{\infty} + \norm{\E(\vec{s})}_{\infty}^2)$.
From the correctness of blind rotation we have 
$\acc_{\BR, F} \in \RLWE_{\bound_{\BR}, \deg, Q}(\fr{s}, \preRotPoly \cdot X^{\msg' \mod 2 \cdot \deg})$, where $\bound_{\BR}$ is
the bound induced by the blind rotation algorithm.
From the assumption on $\preRotPoly$, given that $\msg' \leq \deg$ and the correctness of
sample extraction we have
$\vec{c}_{Q} \in \LWE_{\bound_{\vec{c}, Q}, \deg, Q}(\vec{s}_{F}, \Delta_{Q, t} \cdot F(\msg))$, where
$\bound_{\vec{c}, Q} \leq \bound_{\BR}$.

From correctness of key switching
we have that $\vec{c}_{Q, \keySwitchKey} \in \LWE_{\bound, n, q}(\vec{s}, \Delta_{Q, t} \cdot F(\msg))$,
where 
 $\bound_{\vec{c}, Q, \keySwitchKey} \leq \bound_{\vec{c}, Q} + \bound_{KS}$, where $\bound_{KS}$ is the bound
 induced by the key switching procedure.
From correctness of modulus switching we have that 
$\ct_{\out} \in \LWE_{\bound_{\out}, \deg, q}(\vec{s}, \Delta_{q, t} \cdot F(\msg))$,
where  
\begin{align*}
\bound_{\out} \leq& (\frac{q^2}{Q^2}) \cdot \bound_{\vec{c}, Q, \keySwitchKey} \\
 &+ \frac{2}{3} + \frac{2}{3} \cdot \hamming(\vec{s}_{F}) \cdot  (\norm{\Var(\vec{s}_{F})}_{\infty} + \norm{\E(\vec{s}_{F})}_{\infty}^2).
\end{align*}

To summarize we have:
\begin{align*}
\bound_{\out} \leq& (\frac{q^2}{Q^2}) \cdot  (\bound_{\BR} + \bound_{KS}) \\
 &+ \frac{2}{3} + \frac{2}{3} \cdot \hamming(\vec{s}_{F}) \cdot  (\norm{\Var(\vec{s}_{F})}_{\infty} + \norm{\E(\vec{s}_{F})}_{\infty}^2), 
\end{align*}
where
\begin{align*}
\bound_{\BR} \leq (\frac{2}{3}) \cdot  n \cdot u \cdot \deg \cdot \ell_{\RGSW} \cdot \basis_{\RGSW}^2 \cdot \bound_{\brKey}, \text{ and} \\
\bound_{KS} \leq  (\frac{1}{3}) \cdot  \deg \cdot \ell_{\keySwitchKey} \cdot \basis_{\keySwitchKey}^2 \cdot  \bound_{\keySwitchKey}.
\end{align*}
\end{proof}

%TODO Nedd to cut for space
%\section{Parameters and Correctness from Previous Work}\label{sec:parameters-correctness-previous-work}
%\input{sections/parameters-correctness-previous-work}
% 
%\section{LWE Estimators Scripts}
%\input{sections/lwe-estimator-scripts}

\end{document}